\documentclass[10pt]{article}

\usepackage{amsmath,amssymb,amsthm,color,epsfig,mathrsfs,amsfonts,bbm,dsfont}
\usepackage{amsbsy,empheq}
\usepackage{xfrac}
\usepackage{comment}
\usepackage{placeins}
\usepackage{cite}

\renewcommand{\theequation}{\arabic{section}{.}\arabic{equation}}

\newtheorem{theorem}{Theorem}
\newtheorem*{theorem*}{Theorem}

\newtheorem{corollary}[theorem]{Corollary}

\newtheorem{lemma}[theorem]{Lemma}

\newtheorem{proposition}[theorem]{Proposition}

\newcounter{spslist}
\newenvironment{spslist}{
  \begin{list}
  {\begin{picture}(1,1)
     \setlength{\unitlength}{0.5cm}
     \put(0,0.22){\circle*{0.2}}
    \end{picture}}
  {\usecounter{spslist}
  \setlength{\leftmargin}{1em}
  \setlength{\labelsep}{0.6em}
  \setlength{\labelwidth}{1em}
  \setlength{\topsep}{1ex}
  \setlength{\rightmargin}{0em}
  \setlength{\itemsep}{0.5ex}
  \setlength{\parsep}{0em}
  \setlength{\itemindent}{0em} }}
  {\end{list}}

\newlength\enumsep
\newcounter{enumlist}

\newcommand{\col}[3]{ \renewcommand{\arraystretch}{#1}
                \left[\!\! \begin{array}{c} #2 \\ #3 \end{array} \!\!\right] }

\newcommand{\mat}[5]{ \renewcommand{\arraystretch}{#1}
                    \left[\! \begin{array}{cc}
                            #2 & #3 \\
                            #4 & #5 \end{array} \!\right] }

\newcounter{geqncount}
    {\refstepcounter{equation}%
     \setcounter{geqncount}{\value{equation}}%
     \setcounter{equation}{0}%
  \renewcommand{\theequation}{\arabic{section}.\arabic{geqncount}.\alph{equation}}}%
    {\setcounter{equation}{\value{geqncount}}}

\newcommand{\eps}{\varepsilon}

\newcommand{\bedge}{\mathfrak{e}}
\newcommand{\dedge}{\mathfrak{d}}

\newcommand{\ZZ}{\mathbb{Z}}
\newcommand{\RR}{\mathbb{R}}

\newcommand{\cR}{\mathcal{R}}

\newcommand{\cU}{\mathcal{U}}
\newcommand{\cF}{\mathcal{F}}

\newcommand{\GGeps}{\mathbb{G}_{\eps}}

\newcommand{\Ge}{\Gamma_{\!\eps}}
\newcommand{\Gebi}{\dot{\Gamma}_{\!\eps}}
\newcommand{\Dom}{{\mathcal D}}
\newcommand{\ooo}{\mathrm{o}}
\newcommand{\eee}{\mathrm{e}}
\newcommand{\free}{\mathrm{f}}
\newcommand{\He}{{H_\eps}}
\newcommand{\Hebi}{{\dot{H}_\eps}}
\newcommand{\Hemin}{{H_\eps^\mathrm{min}}}
\newcommand{\Heper}{{H_\eps^\mathrm{per}}}
\newcommand{\Hepero}{{H_\eps^\mathrm{per,D}}}
\newcommand{\Hepere}{{H_\eps^\mathrm{per,N}}}
\newcommand{\Heperbi}{{\dot{H}_\eps^\mathrm{per}}}
\newcommand{\Heperbio}{{\dot{H}_\eps^\mathrm{per,o}}}
\newcommand{\Heperbie}{{\dot{H}_\eps^\mathrm{per,e}}}
\newcommand{\Hedef}{{H_\eps^\mathrm{def}}}
\newcommand{\Hedefo}{{H_\eps^\mathrm{def,D}}}
\newcommand{\Hedefe}{{H_\eps^\mathrm{def,N}}}
\newcommand{\Hedefbi}{{\dot{H}_\eps^\mathrm{def}}}
\newcommand{\Hedefbio}{{\dot{H}_\eps^\mathrm{def,o}}}
\newcommand{\Hedefbie}{{\dot{H}_\eps^\mathrm{def,e}}}

\DeclareMathOperator*{\dom}{\mathrm{dom}}
\DeclareMathOperator*{\supp}{\mathrm{supp}}

\DeclareMathOperator*{\ran}{\mathrm{ran}}
\DeclareMathOperator*{\sinc}{\mathrm{sinc}}

\allowdisplaybreaks

\begin{document}

\begin{center}
{\bf \Large Homogenized graphs\\\vspace{5pt}with spectrally embedded defect states}
\end{center}

\vspace{0.2ex}

\begin{center}
{\scshape \large Kirill D\hspace{-1pt}. Cherednichenko\, and\,  Stephen P\hspace{-2.5pt}. Shipman} \\
\vspace{1ex}
{\itshape University of Bath \,and\, Louisiana State University}
\end{center}

\vspace{3ex}
\centerline{\parbox{0.9\textwidth}{
{\bf Abstract.}\
By homogenizing high-contrast periodic metric graphs, we create continuous media that admit bound states produced by local defects at energies embedded in the continuous spectrum.
This is achieved by fitting two copies of a lattice graph simultaneously into space and coupling them by connecting edges.
The homogenization of an ODE on a graph to a PDE in space is achieved by representing the associated operators in a common functional space through interpolation by Fourier analysis.
The two coupled graphs result in an effective medium with internal degrees of freedom that allow a decomposition into two orthogonal spaces of hybrid states.
Critical high contrast retains microscopic resonances in the effective medium and allows a spectral band of one hybrid state space to overlap a spectral gap of the other. 
This serves as the mechanism for creating bound states at frequencies embedded in the continuous spectrum of the effective medium.
These bound states in the continuum persist for $\eps$-scale metric graphs due to $O(\eps)$ operator-norm estimates for reduced resolvents on the defect region.
}}

\vspace{3ex}
\noindent
\begin{mbox}
{\bf Keywords:} embedded eigenvalues, metamaterials, high-contrast homogenisation, periodic metric graphs,
operator-norm convergence
\end{mbox}
\vspace{3ex}

\hrule
\vspace{1.1ex}

\section{Introduction}

A bound state in the continuum (BIC) is an $L^2$-eigenfunction whose eigenvalue lies inside the continuum of propagating states, so that the surrounding medium supports propagating waves at the same energy or frequency. In this paper, we construct such states at local defects in a spatially homogeneous dispersive continuum and connect them quantitatively to embedded states of a microscopic resonant medium. The construction uses high-contrast homogenisation of two coupled periodic metric graphs. Their one-dimensional constituents form networks extending in three independent spatial directions, and their distinct layer amplitudes survive in the effective medium as internal degrees of freedom, which allow the effective system to split into two uncoupled wave channels: one supports a defect state localised in all three spatial directions, while the other supports propagating waves at the same frequency. The effective medium both elucidates this mechanism and allows us to determine the defect strength required to produce a localised state at a prescribed frequency; the resulting construction can then be transferred to sufficiently fine microscopic graphs. The principal conclusion is summarised as follows.

\begin{theorem*}[Informal statement of Theorem\,\ref{thm:bilayer} in \S\ref{sec:whatweprove}]
  A linear second-order high-contrast periodic differential operator on the metric graph in Fig.\,\ref{fig:SingleDouble3D} embedded in $\RR^d$ has a formal continuum limit, as the period tends to zero, given by a coupled dispersive system of linear partial differential operators in $\RR^d$ (see Eq.\,(\ref{limitsystem}) and Sec.\,\ref{sec:fulllimitsystem}).  When the PDE system is homogeneous except for a local perturbation, it may admit a defect state at a frequency embedded in the continuous spectrum, and when this occurs, the metric graph operator also admits a spectrally embedded defect state at the same frequency when the period is small enough.
\end{theorem*}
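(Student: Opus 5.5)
The plan is to deduce the graph statement from the effective PDE system by combining three ingredients. These are an exact reduction of the high-contrast graph operator to a system posed in a common function space on $\RR^d$, an $O(\eps)$ norm-resolvent estimate between that system and the homogenized limit, and a symmetry splitting that keeps the embedded eigenvalue decoupled from the continuum.

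\textbf{Step 1: transfer to a common space.} I first represent the periodic graph operator $\Heper$ and its defect version $\Hedef$ on $L^2$ of the metric graph through interpolation by Fourier analysis (Floquet--Bloch in the quasimomentum $\theta$). On each $\eps$-cell the stiff edges carry functions that are nearly affine, so their vertex values define a piecewise-smooth interpolant on $\RR^d$. The soft edges are solved exactly, which produces a frequency-dependent (Dirichlet-to-Neumann) coupling. Using the bilayer structure, I decompose the state space into the even and odd hybrid combinations of the two lattice copies, $\Hedefe$ and $\Hedefo$. The symmetry of the coupling edges should make $\Hedef$ reduce, at least on the symmetric part of the defect, into these two orthogonal channels.

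\textbf{Step 2: operator-norm homogenization.} For each channel I prove a fibrewise estimate of the form
\begin{equation*}
\bigl\| (H_\eps(\theta)-z)^{-1} - \Phi_\eps^*(H_0(\theta)-z)^{-1}\Phi_\eps \bigr\| \le C\eps ,
\end{equation*}
uniformly in $\theta$ and in $z$ on compacts away from the limit spectrum. Here $\Phi_\eps$ is the interpolation map and $H_0$ is the limit system~(\ref{limitsystem}). I then integrate over $\theta$ to get the periodic statement. For the defect, I write the defect as a finite-rank or compactly supported perturbation and use a Krein/Schur-complement formula. This reduces everything to the \emph{reduced resolvent} of the odd channel, compressed to the defect region, and its $O(\eps)$ convergence follows from the periodic estimate.

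\textbf{Step 3: persistence of the embedded eigenvalue.} Let $\lambda_0$ be the eigenvalue of the limit defect system in the odd channel. It lies in a gap of the odd channel but inside a band of the even channel. Because the channels decouple exactly, it suffices to find an eigenvalue of $\Hedefo$ near $\lambda_0$. Since $\lambda_0$ sits in a gap of the periodic odd operator, the $O(\eps)$ estimate gives an isolated eigenvalue $\lambda_\eps$ of $\Hedefo$ with $|\lambda_\eps-\lambda_0|=O(\eps)$, by a Riesz-projection argument. Its eigenfunction is an $L^2$ eigenfunction of $\Hedef$, and $\lambda_\eps$ lies in the continuous spectrum of $\Hedefe$ by band convergence. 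If exactly $\lambda_0$ is required, I would tune the defect strength by a one-parameter continuity/intermediate value argument, using monotonicity of the eigenvalue in the defect strength.

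\textbf{Main obstacle.} The hard step is Step 2 under critical high contrast. There the limit operator has frequency-dependent (resonant) coefficients, so the resolvent comparison must be uniform in the spectral parameter near $\lambda_0$, away from the soft-edge Dirichlet eigenvalues. I would handle this by working with the exactly solved soft-edge DtN maps, treating $z$ as a parameter and using their analyticity. I also need to verify that the symmetry splitting survives the defect exactly rather than only to $O(\eps)$. If it holds only approximately, the embedded eigenvalue could dissolve into a resonance, so I would restrict to defects respecting the interchange symmetry of the two layers.
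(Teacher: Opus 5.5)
Your proposal has a genuine gap. The core problems are in the norm-resolvent comparison of Step~2 and the Riesz-projection argument built on it in Step~3.

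\textbf{The limit you compare against has no resolvent.} The system (\ref{limitsystem}) is not an operator with a resolvent in $z$. It is the family $\dot H(k)$ obtained by eliminating the soft-edge fields at one fixed frequency, a Schur complement whose coefficients depend on $k$. Under critical contrast the soft edges carry an $O(1)$ share of the $L^2$ mass and produce resonances at $O(1)$ frequencies. So what the graph resolvent converges to on macroscopic fields is a generalized resolvent $(-\Delta-V_\free(k,0))^{-1}$ with $z=k^2$, not $(H_0-z)^{-1}$ for some operator $H_0$ on $L^2(\RR^d)$. A self-adjoint limit to which Riesz projections could be compared must retain the soft variables, as the augmented operator of Appendix~\ref{sec:explicit-soft-limit} does. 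An $O(\eps)$ estimate towards it, uniform on a contour and with an identification that includes those variables, is a much harder theorem than your sketch supplies.

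\textbf{The defect part does not follow from the periodic estimate.} You assert that the $O(\eps)$ convergence of the defect-compressed reduced resolvent ``follows from the periodic estimate''. In the Fourier identification, lattice multiplication by $\Psi(\eps\cdot)$ is convolution with the periodization $\widetilde{\cU}_\eps\Psi$. Its aliased copies are of size $O(1)$ near the boundary of $\eps^{-1}Q'$, and they become negligible only after composition with $R_\eps$, which is $O(\eps^2)$ there. Proving this is the content of Theorem~\ref{convergence_theorem}; it is not a corollary of Lemma~\ref{Reps_R0_diff}.

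\textbf{Exact frequency matching is not established.} The statement requires a defect state at the \emph{same} frequency, and your plan treats this as an afterthought. Riesz projections only give an eigenvalue within $O(\eps)$ of $k^2$ at fixed $\mu$. Your intermediate-value tuning in $\mu$ then needs three things you do not establish:
\begin{itemize}
\item a sign condition on $\Psi$, since the Hellmann--Feynman derivative $\sum_n\eps\Psi(\eps n)|u(\eps n)|^2$ can vanish for sign-changing $\Psi$;
\item estimates uniform in $\mu$;
\item a transversal crossing of the limiting eigenvalue branch.
\end{itemize}

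\textbf{The missing idea is to swap the roles of frequency and defect strength.} Fix $k$ first. At that single frequency the Dirichlet-to-Neumann maps eliminate every edge field exactly, resonant soft edges included. In the gapped channel (say the odd one, $V_{\mathrm{f,D}}(k,0)<0$) the problem becomes $(-\Delta_\eps-V_\free(k,\eps))\bar u=\lambda^{-1}\Psi(\eps\cdot)\bar u$. This is the coupling-constant eigenproblem $\Psi(\eps\cdot)\cR(k,\eps)\bar v=\lambda\bar v$ for a compact operator, with $\lambda$ tied to $\mu$ by (\ref{lambda}). Then:
\begin{itemize}
\item only the explicit symbols $R_\eps$ and $R_0$ at this one $k$ need comparing;
\item $O(\eps)$ norm convergence of the compact operators carries any nonzero eigenvalue $\lambda$ of $\Psi\cR(k,0)$ to eigenvalues $\lambda_\eps\to\lambda$;
\item $\mu(\eps)$ is read off from $\lambda_\eps$, so $k^2$ is an eigenvalue exactly.
\end{itemize}
This needs no self-adjoint limit, no uniformity in the spectral parameter and no monotonicity.

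\textbf{Embeddedness.} This follows from two facts. First, the parity splitting is exact because the defect is imposed identically on both layers. Second, $V_{\mathrm{f,N}}(k,0)>0$ places $V_{\mathrm{f,N}}(k,\eps)$ inside the range $[0,4d\eps^{-2}]$ of the symbol of $-\Delta_\eps$ for small $\eps$. Hence $k^2$ lies in the spectrum of $\Hepere$, which the defect, having finite deficiency indices, leaves unchanged. An abstract appeal to ``band convergence'' would only give closeness of spectra, not that this particular point belongs to the spectrum.
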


The passage from the effective construction to microscopic embedded states is justified by operator-norm estimates for the reduced defect problem. The defect strength on the graph is chosen as $\mu(\eps)$, converging to the effective value as the period $\eps$ tends to zero, so that the prescribed frequency is retained exactly. The estimates control this passage between scales, while the decomposition preserved by the defect ensures that the localised state remains decoupled from the propagating component. Both features are essential to this construction: even norm-resolvent convergence of self-adjoint operators does not, in general, ensure persistence of an embedded eigenvalue, since small perturbations coupling the localised state to propagating waves can destroy it \cite{AgmonHerbstSkibsted1989,LeeZworski2016}.

BICs have a long history, beginning with the von Neumann--Wigner construction \cite{vonNeumannWigner1929} for the Schr\"odinger equation and continuing through the scattering-theoretic analysis of Fonda \cite{Fonda1963}, the configuration-interaction and resonance picture of Fano \cite{Fano1961}, and later interference constructions of Stillinger--Herrick
\cite{StillingerHerrick1975} and Friedrich--Wintgen \cite{FriedrichWintgen1985}. In wave physics the same phenomenon appears under the names trapped mode, embedded guided mode, or non-radiating state, in acoustics, water waves, elasticity and electromagnetism; see, for example, \cite{EvansLintonUrsell1993, EvansLevitinVassiliev1994,HsuEtAl2016} and the references therein.

The connection with scattering is fundamental. Heuristically, a bound state embedded in a propagating continuum is necessarily decoupled from all open radiation channels. When the mechanism responsible for this decoupling is perturbed, the embedded eigenvalue may disappear into the continuum and give rise to a nearby resonance pole, producing a sharp Fano-type scattering anomaly \cite{Fano1961,Fonda1963,ReedSimonIV,FriedrichWintgen1985,AslanyanParnovskiVassiliev2000, ShipmanVenakides2005,Shipman2010,ShipmanWelters2013,ChesnelNazarov2018a,ChesnelNazarov2020a}. This instability--protection dichotomy is one reason BICs have become important in photonics and metamaterials: under a small symmetry-breaking perturbation, the real embedded eigenvalue is typically replaced by a nearby resonance pole of small imaginary part \cite{HsuEtAl2016,KangEtAl2023}.

There are constructive precedents for defect BICs in continuum wave models. Bulgakov and Sadreev \cite{BulgakovSadreev2008} developed a procedure for finding defect-localised BICs in photonic-crystal waveguides by tuning the defect parameters, while Vaidya et al.\ \cite{VaidyaEtAl2021} constructed point-defect BICs in two-dimensional photonic crystals through frequency matching and symmetry mismatch. These constructions are supported by numerical calculations.
An explicit derivation in three dimensions is given by Silveirinha \cite{Silveirinha2014}: a lossless core--shell inclusion with vanishing shell permittivity at a selected frequency supports a field that is identically zero outside the inclusion. Related three-dimensional embedded photonic states are studied in \cite{MonticoneAlu2014}. BICs in double-network metamaterials have also been investigated using effective-medium descriptions \cite{WangEtAl2023}. Our objective is to construct a fully localised defect state in a homogeneous effective bulk medium and to establish quantitative control connecting that state to exact embedded states of an increasingly fine periodic graph.

For a homogeneous scalar elliptic medium in the whole space, with sufficiently regular, nondegenerate coefficients equal to constant ambient values outside a bounded defect, such localisation in a positive-energy propagation regime is strongly obstructed by exterior Rellich-type uniqueness arguments and unique continuation. More generally, for periodic Schr\"odinger operators perturbed by a sufficiently localised impurity, embedded eigenvalues are absent under suitable irreducibility and decay assumptions \cite{KuchmentVainberg2000}. Periodic graph and quantum-graph operators are different: local perturbations can create embedded eigenvalues, and when the Fermi surface is irreducible the corresponding eigenfunctions are forced to be compactly supported under the hypotheses of \cite{KuchmentVainberg2006}. This contrast makes periodic graph structures a particularly transparent setting in which to engineer and analyse mechanisms that suppress radiation channels.

The mechanism used here is based on a symmetry-induced decomposition into reducing subspaces. In \cite{Shipman2014}, coupled periodic graph systems were constructed with a symmetry that decomposes the operator into orthogonal invariant components associated with hybrid states. The corresponding Floquet fibres decompose accordingly, and hence the Fermi surface is reducible. At an energy for which one component supports propagating waves while the other supports only evanescent waves, a local perturbation preserving this decomposition can produce an exponentially decaying eigenfunction in the evanescent component, without coupling it to the propagating waves in the other component. The present work starts from this mechanism but places it inside a singular high-contrast homogenisation problem. The objective is not only to produce an embedded eigenstate on a periodic graph, but to construct a family of increasingly fine metric graphs whose embedded defect eigenstates are related, with quantitative operator control, to a bound state of a homogeneous effective dispersive medium.

More precisely, we use a periodic metric graph consisting of two geometrically separated but identical lattice layers connected by soft edges. The graph has an order-two involution that interchanges the layers and commutes with lattice translations. The defect perturbation is imposed identically on the Robin vertex parameters in both layers and therefore preserves this involution. Consequently, the periodic and defective operators split orthogonally into even and odd parity sectors. Cutting each connecting edge at its midpoint identifies these sectors with two monolayer problems carrying, respectively, Neumann and Dirichlet terminal conditions on a soft decoration. Their dispersion laws are therefore different. We choose a frequency $k$ for which $k^2$ lies in a spectral gap of one parity sector and in the interior of a spectral band of the other. A localised state created in the gapped sector is then an eigenstate of the full bilayer operator whose eigenvalue is embedded in the continuous spectrum contributed by the other sector---this is the symmetry protection in the construction. As the period tends to zero, the graph system gives rise to a homogeneous dispersive continuum medium. The resulting effective equations retain internal resonant degrees of freedom inherited from the metric-graph microstructure.

We confine attention to the layer-symmetric configuration in order to expose the essential steps of the analytical method with minimal algebraic complexity. The method also applies to non-symmetric configurations for which the reduced equations admit a frequency-dependent channel decomposition preserved by the defect. The distinction between geometric symmetry and reducibility
is illustrated by the non-symmetric bilayer quantum graphs studied in \cite{Shipman2020NonSymmetric}. For example, one may retain identical edge structures in the two layers but assign different constant Robin parameters to their vertices, while imposing the same defect perturbation on both layers. After elimination of the edge fields, the resulting discrete
system consists of a common scalar lattice operator and a spatially constant real symmetric $2\times2$ matrix acting on the layer amplitudes, together with a defect proportional to the identity. At each fixed frequency away from the edge resonances, an orthogonal change of layer variables diagonalises this matrix; the defect remains diagonal. The resulting channels replace the even and odd sectors of the symmetric configuration. Provided that the target frequency remains in a gap of one channel and in a band of the other, with the uniform
bounds needed for the resolvent comparison, the same coupling-constant reformulation and Fourier interpolation give the corresponding homogenisation and embedded-state
construction. Thus layer-exchange symmetry provides a particularly transparent realisation of the mechanism but is not essential for the analysis.

The second ingredient is critical high contrast. The edge coefficients are scaled critically so that the resonant effect of the soft components survives in the limit $\varepsilon\to0$, producing frequency-dependent effective coefficients and spectral gaps rather than a classical static homogenized coefficient. Specifically, the edge lengths are of order $\varepsilon$, while the coefficients are of order one on the stiff edges and of order $\varepsilon^2$ on the soft edges. The soft-edge resonances therefore remain at frequencies
of order one as the period tends to zero. High-contrast homogenisation of this kind goes back to Zhikov's double-porosity and spectral-gap theory \cite{Zhikov2000,Zhikov2005} and has subsequently been developed in both two-scale and operator-theoretic forms; see, among others, \cite{Cherdantsev2009,KamotskiSmyshlyaev2018, CherednichenkoCooper2016,CherednichenkoKiselev2017, CherednichenkoErshovaKiselevNaboko2019, CherednichenkoErshovaKiselev2019}.
A characteristic feature of the critical regime is a nonlinear frequency dependence of the macroscopic coefficients (often encoded by a Zhikov-type function), equivalently a time-dispersive effective law. In our graph model this frequency dependence can be calculated explicitly from Dirichlet-to-Neumann maps of the soft edges.

There is an extensive literature on defect modes generated in spectral gaps and on their homogenisation. Cherdantsev established spectral convergence for localised defect modes in high-contrast periodic elliptic media \cite{Cherdantsev2009}; Kamotski and Smyshlyaev derived localised modes by high-contrast two-scale homogenisation and obtained quantitative eigenvalue estimates through asymptotic expansions together with a detailed boundary-layer analysis near the defect \cite{KamotskiSmyshlyaev2018}; and Hoefer and Weinstein analysed defect modes bifurcating from band edges through an effective homogenised Schr\"odinger problem \cite{HoeferWeinstein2011}. The approach taken here is different in that, after reduction by Dirichlet-to-Neumann maps to a discrete problem, the defect operators are compared directly in operator norm on a common Hilbert space. This avoids the construction of defect boundary layers and allows the relevant isolated eigenvalues and spectral subspaces to be transferred by spectral perturbation theory. The cited works concern localisation in gaps of the ambient periodic spectrum. In the present construction, by contrast, a gap is used only in one invariant component, while the complementary component provides the propagating states at the same spectral value. The result is therefore a defect eigenstate whose eigenvalue is embedded in the spectrum of the full system, rather than a conventional defect mode associated with a gap eigenvalue.

A third ingredient is a coupling-constant, or Birman--Schwinger-type, reformulation of the defect problem. After eliminating the fields in the edge interiors by means of Dirichlet-to-Neumann maps, the metric-graph problem at a fixed real frequency $k$, away from the exceptional frequencies of the reduction, reduces to a problem for the vertex values on the lattice $\varepsilon\mathbb Z^d$. The resulting discrete Schr\"odinger operator is frequency-dependent, with the dependence on $k$ inherited from the Dirichlet-to-Neumann maps. If $\cR(k,\varepsilon)$ denotes the resolvent of the undefected discrete operator in the parity sector that is gapped at $k^2$, and $\Psi$ is the prescribed smooth, compactly supported
defect profile, the bound-state equation can be rewritten, on the defect sites, as a finite-dimensional eigenvalue problem of the form $\Psi(\varepsilon\,\cdot)\cR(k,\varepsilon)v=\lambda v$, with $\lambda$ algebraically related to the defect strength $\mu$ by \eqref{lambda}. Thus the frequency is prescribed first, and the nonzero eigenvalues of a compact operator supported on the defect region select the values of the physical coupling parameter for which $k^2$ becomes an eigenvalue. This is the familiar Birman--Schwinger philosophy viewed as a coupling-constant spectrum \cite{ReedSimonIV,KlausSimon1980}; in the present setting it becomes a practical design step linking a prescribed frequency $k$, equivalently the target spectral value $k^2$, to a defect amplitude.

The main analytical difficulty is then not merely to homogenise the ambient operator, but to compare these coupling-constant eigenproblems as $\varepsilon\to0$. Their natural Hilbert spaces change with $\varepsilon$: the discrete problem acts in $\ell^2(\varepsilon\mathbb Z^d)$, whereas the limiting problem acts in $L^2(\mathbb R^d)$. We resolve this by an explicit Bloch--Gelfand identification. The unitary discrete Bloch--Gelfand transform $U_\varepsilon$ sends $\ell^2(\varepsilon\mathbb Z^d)$ to $L^2(\varepsilon^{-1}[-\pi,\pi]^d)$; we extend the transformed function by zero to all of $\mathbb R^d$ and then apply the inverse Fourier transform. In other words, $J_\varepsilon=\mathcal F^{-1}\iota U_\varepsilon$ embeds the lattice space isometrically as a band-limited subspace of $L^2(\mathbb R^d)$. Discrete-to-continuum resolvent estimates using identification operators have been developed in \cite{NakamuraTadano2021,CorneanGardeJensen2021,SCHMIDT2025129247}. Our approach makes the comparison directly in Fourier variables: the isometric identification places the discrete and continuum defect operators in a common Hilbert space, the lattice Laplacian is represented by its explicit Fourier multiplier, and the expanding Brillouin zones are accommodated by extension by zero. These features allow us to compare the defect operators directly in operator norm.

This common-space representation gives a direct comparison of symbols at each fixed admissible frequency. For frequencies $k$ lying in a spectral gap of the component in which the defect state is constructed, the extended discrete ambient resolvent converges to the continuum resolvent in operator norm with order $O(\varepsilon)$. More importantly for the defect construction, using the Poisson summation formula and representing multiplication by $\Psi$ as convolution in Fourier variables, we prove that the compact operators
governing the coupling constants also converge in $L^2(\mathbb R^d)$-operator norm at rate $O(\varepsilon)$. Norm convergence is the decisive level of control here: it transports isolated nonzero eigenvalues and their Riesz spectral subspaces, and hence the defect strengths and reduced states selected by the effective problem. The corresponding metric-graph states are recovered from the vertex values by the edge reconstruction --- this is stronger than a formal continuum limit and is tailored to the determination of defect strengths that produce localised states at a prescribed frequency.The broader operator-norm homogenisation programme for periodic and high-contrast media is represented by \cite{BirmanSuslina2004,CherednichenkoCooper2016, CherednichenkoKiselev2017, CherednichenkoErshovaKiselevNaboko2019}; the discrete-to-continuum part is related to \cite{NakamuraTadano2021,CorneanGardeJensen2021}.

The effective equations are frequency-dispersive because the resonant soft-edge variables have been eliminated. Their nonlinear dependence on the spectral parameter is compatible with an underlying linear self-adjoint spectral problem. Retaining the soft degrees of freedom produces an enlarged Hilbert space and a frequency-independent self-adjoint
effective operator; the dispersive macroscopic equation is its Schur complement. This operator-theoretic interpretation is consistent with the general link between critical-contrast homogenisation and time-dispersive media \cite{CherednichenkoErshovaKiselevNaboko2019, CherednichenkoErshovaKiselev2019}. We include the augmented formulation to make explicit the internal degrees of freedom underlying the frequency-dependent effective equations and to provide a self-adjoint spectral problem that remains well defined at the poles of the scalar Dirichlet-to-Neumann functions.

The paper therefore combines three constructions that are usually treated separately. First, symmetry produces orthogonal radiation and localisation channels and hence protects an embedded state. Second, high-contrast homogenisation turns the graph family into a homogeneous but dispersive continuum system while preserving the relevant band--gap separation. Third, a coupling-constant eigenproblem, compared across scales by an explicit Fourier--Bloch interpolation, selects the defect strengths and yields quantitative convergence. The outcome is a constructive connection between microscopic and macroscopic BICs: an effective continuum eigenproblem can be solved at a prescribed frequency, and the resulting
defect parameter determines, through a convergent scale-dependent adjustment, embedded eigenstates of sufficiently fine microscopic graphs. To our knowledge, this is the first construction of fully localised defect BICs in a spatially homogeneous dispersive continuum obtained by high-contrast homogenisation of periodic metric graphs, with quantitative operator-norm control of the reduced defect eigenproblem.

The structure of the paper is as follows. 
Sections~\ref{sec:single}--\ref{sec:bilayer} define the monolayer and bilayer metric graphs and derive the parity decomposition. Section~\ref{sec:reduction} eliminates the edge fields through Dirichlet-to-Neumann maps and reduces the fixed-frequency problem to a discrete Schr\"odinger equation and then to the coupling-constant eigenproblem. Sections~\ref{sec:results}--\ref{sec:convergence} identify
the continuum limit and prove the common-space operator-norm convergence, including the $O(\varepsilon)$ estimate for the defect operator. Section~\ref{sec:BIC} chooses frequencies for which a band of one parity overlaps a gap of the other and constructs embedded defect states, with one- and two-dimensional numerical illustrations. Appendix~\ref{sec:calculations} contains the Dirichlet-to-Neumann calculations, and Appendix~\ref{sec:explicit-soft-limit} gives the augmented self-adjoint macroscopic formulation whose Schur complement is the dispersive effective equation.

The following is an outline of our strategy.
\begin{spslist}

\item
On a graph of period size $\eps$ embedded in $\RR^d$, we define an operator $\Heperbi$ representing the ambient bilayer medium and a perturbed operator $\Hedefbi$ representing the defective medium. The defect is defined by a function $\Psi(x)$ of compact support and a scalar strength $\mu$, applied identically to the Robin parameters of corresponding vertices in the two layers. These vertex conditions are interpreted within the framework of self-adjoint extensions \cite{AlbeverioKurasov1999a}. We are interested in square-integrable solutions of $(\Hedefbi-k^2)u=0$ with $k^2$ embedded in the continuous spectrum.

\item
The operators $\Heperbi$ and $\Hedefbi$ are orthogonally decomposed into their even and odd components. Each component is unitarily equivalent to a monolayer graph operator, with a Neumann or Dirichlet terminal condition on the soft decoration.

\item
At a fixed frequency away from the exceptional values of the edge reduction, the monolayer problems are reduced to equivalent discrete problems on $\eps\ZZ^d$ of the form 
$(\Delta_\eps+V_\free(k,\eps)+\lambda^{-1}\Psi(\eps\,\cdot))\bar u=0.$ The parameter $\lambda$ is related to the physical defect strength $\mu$ by \eqref{lambda}.

\item
In the component that is gapped at $k^2$, we convert the discrete equation into an eigenproblem supported on the defect sites: $\Psi(\eps\,\cdot)(-\Delta_\eps-V_\free(k,\eps))^{-1}\bar v=\lambda\bar v,$ The vertex amplitudes and then the edge fields are recovered from $\bar v$.

\item
The corresponding continuum problem is $\Psi(-\Delta-V_\free(k,0))^{-1}v=\lambda v,$ Its nonzero eigenvalues select the effective defect strengths for which the prescribed frequency supports a localised state.

\item
Using Fourier interpolation, we place the discrete and continuum problems in a common Hilbert space and prove an $O(\eps)$ operator-norm estimate for the compact defect operators. This yields convergence of their isolated nonzero eigenvalues and associated spectral subspaces, and
hence microscopic defect strengths $\mu(\eps)$
converging to the effective value.

\item
We choose the target frequency so that the complementary component supports propagation. The reconstructed localised state is then a BIC of the full bilayer operator. The exact decomposition prevents coupling to the propagating component, while the operator estimates justify the transfer of the effective design to sufficiently small periods.

\end{spslist}

\begin{figure}[ht] 
\centerline{
\raisebox{20pt}{\scalebox{0.42}{\includegraphics{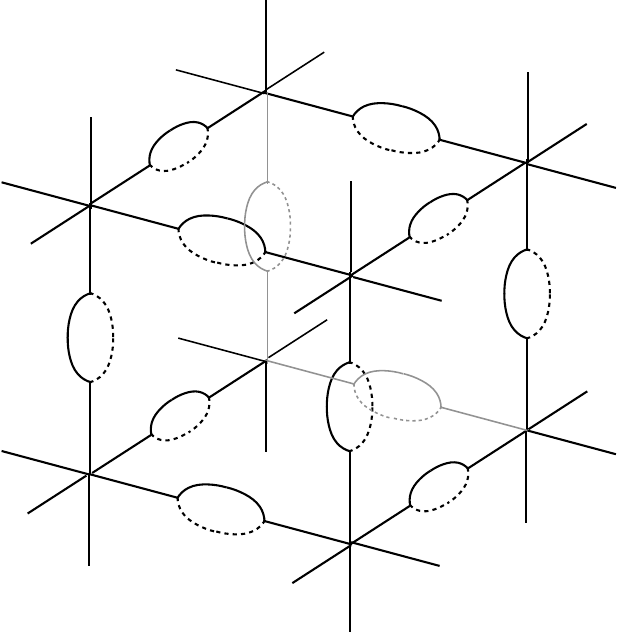}}}
\hspace{0.4em}
\scalebox{0.42}{\includegraphics{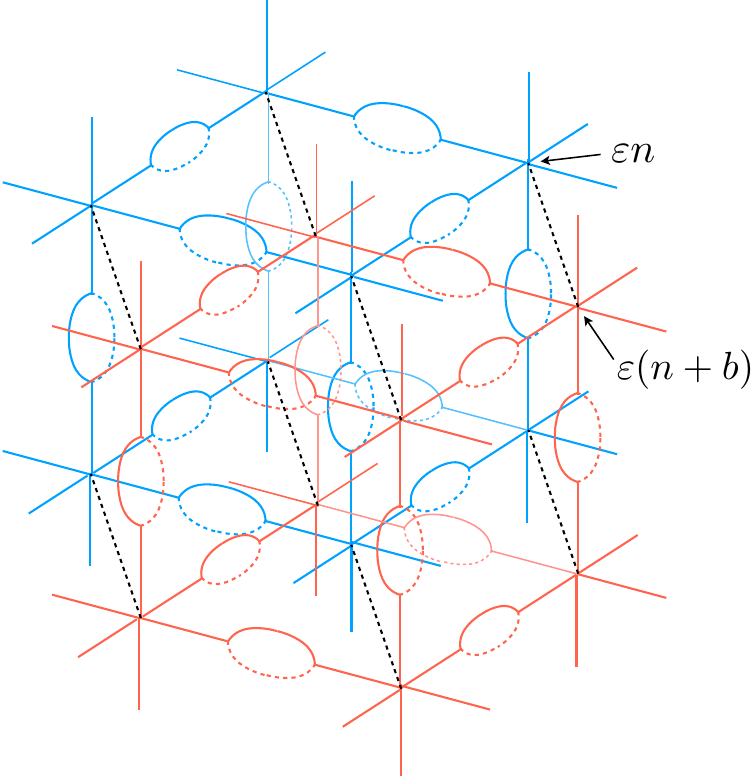}}
\hspace{-0.2em}
\raisebox{22pt}{\scalebox{0.42}{\includegraphics{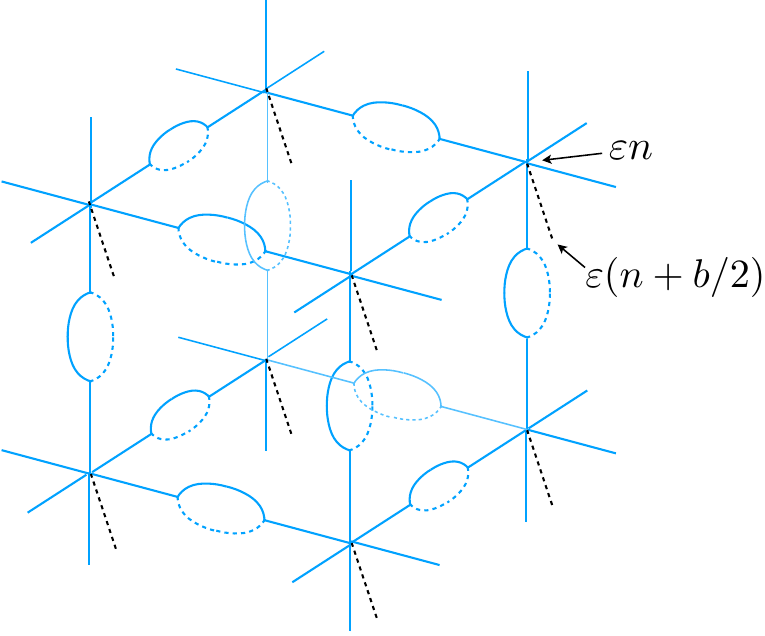}}}
}
\caption{\small Left: A simple example of a ``monolayer" 3D periodic graph $\GGeps$ with stiff (solid) and soft (dotted) edges.  It consists of the scaled integer grid $\eps\ZZ^3$ with each pair of nearest neighbors connected by a compound edge $\bedge$ shown in Fig.\,\ref{fig:bdedge}.  Middle: The bilayer graph consists of two copies of the monolayer graph, connected periodically by a soft edge connecting corresponding grid points.  Right: Half of the bilayer graph, which is a fundamental domain for the involution that interchanges the two copies (blue and red) of the monolayer within the bilayer graph.  It consists of the monolayer graph with a decoration $\dedge$ attached to each vertex $v\in\eps\ZZ^3$.}
\label{fig:SingleDouble3D}
\end{figure}

\begin{figure}[ht] 
\centerline{
\raisebox{17pt}{\scalebox{0.36}{\includegraphics{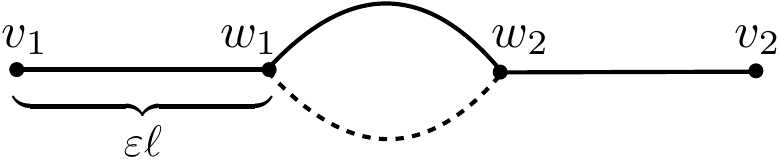}}}
\hspace{4em}
\scalebox{0.29}{\includegraphics{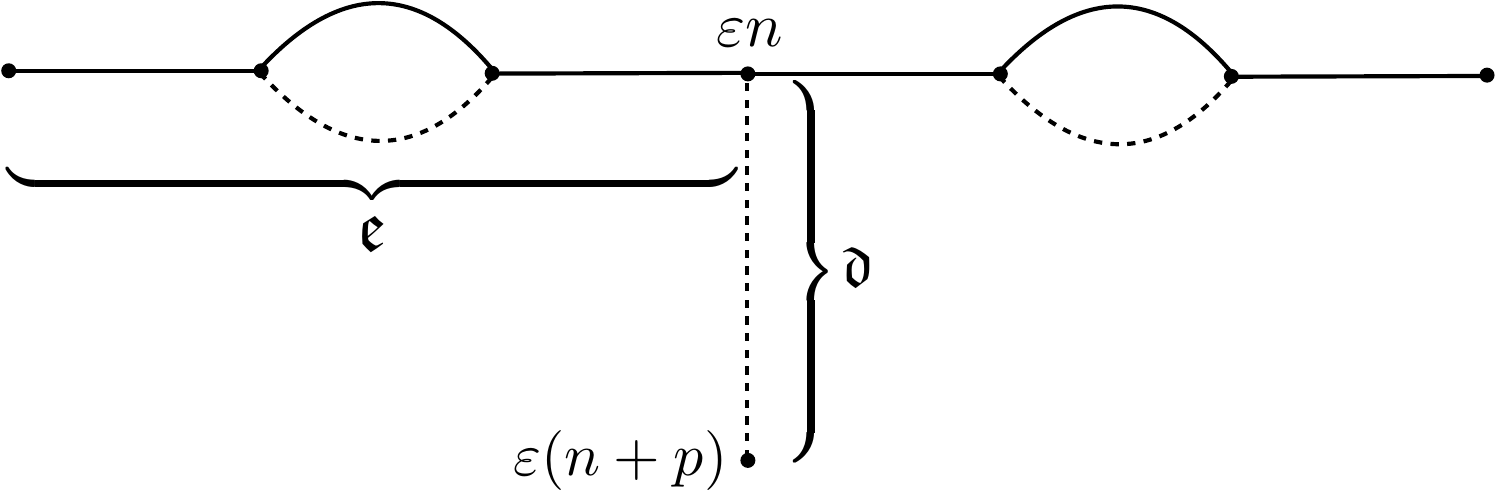}}
}
\caption{\small Left: A ``compound edge" $\bedge$ that serves as the basic element in the monolayer bulk metric graph.  The solid edges are stiff, and the dotted one is soft.  Each edge has length $\eps\ell$, and the distance between $v_1$ and $v_2$, when embedded in~$\RR^d$ is~$\eps$.  Right: A one-dimensional depiction of attaching a decoration edge $\dedge$ to each vertex of the lattice $\eps\ZZ^d$.}
\label{fig:bdedge}
\end{figure}

\section{Monolayer metric graph and defect states}\label{sec:single}

We will describe a specific prototypical metric graph structure as a prototype of a family of structures that are handled by the analysis of the specific one without invoking any essential changes.

\subsection{Periodic monolayer graph}\label{sec:periodicsingle}

The monolayer metric graph is constructed by connecting copies of an $\eps$-dependent {\em compound resonant edge}~$\bedge$, depicted in Fig.\,\ref{fig:bdedge}, to form a periodic square-lattice graph with $\eps$ as its characteristic length scale.  Each pair of nearest-neighbor vertices in the lattice~$\eps\ZZ^d\subset\RR^d$ is joined by a copy of~$\bedge$, as depicted in Fig.\,\ref{fig:SingleDouble3D}, resulting in each point in $\eps\ZZ^d$ having $2d$ copies of $\bedge$ emanating from it.  Of course, $\bedge$ has to be constructed so that, when embedded into $\RR^d$, the distance between its endpoints is~$\eps$.
Then, to each vertex of this graph located at $\eps n\!\in\!\eps\ZZ^d$, an identical soft decoration edge~$\dedge$ is attached, with one vertex at $\eps n$ and the other vertex at $\eps(n+p)$, for some fixed vector $p\in\RR^d$; see Fig.\,\ref{fig:bdedge}.  The resulting graph is denoted by $\Ge$.  In dimension $d\geq3$, the vector $p$ chosen such that $\Ge$ has no self-intersections.  (A more general decoration graph~$\dedge$ can be attached at each vertex.)

Let each edge $e=(v_1,v_2)$ in $\Ge$ be identified with a coordinate interval $x^e\in\eps[x^e_1,x^e_2]$, with $x^e_1$ (resp.\,$x^e_2$) being identified with $v_1$ (resp.\,$v_2$).  (The coordinate of the same edge directed in the opposite direction, namely $(v_2,v_1)$, is the coordinate of $e$ reversed to the opposite direction.)  This renders $\Ge$ a metric graph.  Observe that the underlying combinatorial graph remains unchanged with $\eps$, but scaling the edge coordinates by $\eps$ changes the metric graph.

This paragraph describes how to construct an elliptic differential operator $\He$ on the metric graph~$\Ge$.  After that, we will specialized the construction to two operators $\Heper$ and $\Hedef$ that we will study.  To the arbitrary edge~$e$ is assigned a differential operator that acts on complex valued functions $u$ in the Sobolev space $W^{1,2}(e)$ ($u$ has absolutely continuous derivative on the closed interval) and a ``flux" $\Phi[u]\!=\!\Phi_e[u]$ at the vertices of the edge,
\begin{equation}\label{HandPhi}
  u \mapsto -\frac{d}{dx}a(x)\frac{du}{dx},
  \qquad
  \Phi[u](v_1) = a(\eps x_1)u'(\eps x_1),\quad \Phi[u](v_2)=-a(\eps x_2)u'(\eps x_2)
\end{equation}
(let us assume that $a$ is smooth for this discussion).  The value $\Phi_e[u](v)$ is the flux of $u$ out of the vertex $v$ into the edge~$e$.  On a {\em stiff edge}, $a(x)$ is strictly $O(1)$ as $\eps\to0$ and we take it to be a constant $\sigma^2$, which we will set to $\sigma=1$ below.  On a {\em soft edge}, $a(x)\!=\!(\sigma\eps)^2$.  The domain of the operator requires a condition on how functions on $\Ge$ behave at the vertices.  To define this condition, fix a vertex $v$ of $\Ge$, and consider functions $u_e\in W^{1,2}(e)$ that are defined on the edges $e$ incident to $v$ such that their values $u_e(v)$ at $v$ are all equal to a common value~$u(v)$.  Then the following Robin vertex condition makes sense:
\begin{equation}\label{Rcondition}
  \sum_{e:v\sim e} \Phi[u_e](v) \;=\; \eps\alpha_v\,u(v),
\end{equation}
in which $\alpha_v$ is a real number or~$\infty$.  If $\alpha_v\!=\!0$, this is known as the Neumann, Kirchoff, or free vertex condition.  If $\alpha_v=\pm\infty$, the condition is interpreted to mean that $u(v)=0$ (with no conditions on the fluxes), and it is known as the Dirichlet vertex condition.  The domain of $\He$, as an operator in $L^2(\Ge)$,~is
\begin{equation}\label{DomHe}
  \Dom(\He) \;=\; 
  \left\{ u\in L^2(\Ge) \cap \oplus_{e\in\Ge} W^{1,2}(e) : \text{$u$ is continuous and satisfies (\ref{Rcondition}) } \forall v\in\Ge \right\},
\end{equation}
and its action is $u\mapsto-d(a(x)du/dx)/dx$ on each edge.  The operator $\He$ is self-adjoint in~$L^2(\Ge)$.  

The condition $\alpha_v>0$ corresponds to a typical linear spring with spring constant $\alpha_v$ providing a restoring force at vertex~$v$ directed in the opposite direction of the displacement from~$u\!=\!0$.  The condition $\alpha_v\!=\!0$ means that there is no spring at $v$, and $\alpha<0$ describes an artificially created auxetic spring.

By taking the vertex condition to be spatially homogeneous, one obtains a periodic metric-graph operator with period $\eps$; we call the operator~$\Heper$.  By this, we mean that
\begin{equation}
  \alpha_v = \alpha \in\RR
  \qquad \forall v\in\eps\ZZ^d.
\end{equation}
The parameter $\alpha_v$ can depend on the vertex within a single copy of $\bedge$ or $\dedge$, but the values are the same from copy to copy.  For example, $\dedge$ could be a single dangling soft edge, with one edge in the lattice $\eps\ZZ^d$ and the other (terminal) edge having either the Dirichlet or Neumann condition, uniformly throughout the graph.  The operator $\Heper$ commutes with the translational symmetry group $\ZZ^d$, where $n\in\ZZ^d$ effectuates a shift by~$\eps n$.

\subsection{Defect states}\label{sec:defectstates}

A local defect is created in $\Ge$ by adding a term to the Robin parameter $\alpha$ at the vertices located within a finite region of the lattice~$\eps\ZZ^d$.  A nonempty open bounded subset $\Omega\subset\RR^d$ identifies the defect region.  A~defect is placed at each vertex of $\Ge$ lying in the set $\Omega_\eps=\eps\ZZ^d\cap\Omega$ by changing the Robin parameter.  These sets are depicted in Fig.\,\ref{fig:DefectRegion}.  Specifically, a defect distribution is given by a smooth enough function (smoothness to be specified later) $\Psi:\RR^d\to\RR$ with $\supp\Psi=\overline\Omega$; then the defect in $\Ge$ is defined by adding~$\mu\Psi(\eps n)$ to~$\alpha$, with $\mu>0$ controlling the defect strength.  Thus the vertex condition (\ref{Rcondition}) at $v=\eps n$ ($n\in\ZZ^d$) is replaced~by
\begin{equation}\label{Rdefcondition}
  \sum_{e:\,\eps n\sim e} \Phi[u_e](\eps n) \;=\; \eps\big(\alpha+\mu\Psi(\eps n)\big)\,u(\eps n)
\end{equation}
in (\ref{DomHe}) to obtain the domain of the defective operator~$\Hedef$.

We are interested in the bound-state problem
\begin{equation}\label{BS1}
  (\Hedef - k^2)u \;=\; 0,
\end{equation}
in which $u\in\Dom(\Hedef)$ and $k$ is interpreted as the nondimensionalized frequency of a time-harmonic field.

\begin{figure}[ht] 
\centerline{
\scalebox{0.34}{\includegraphics{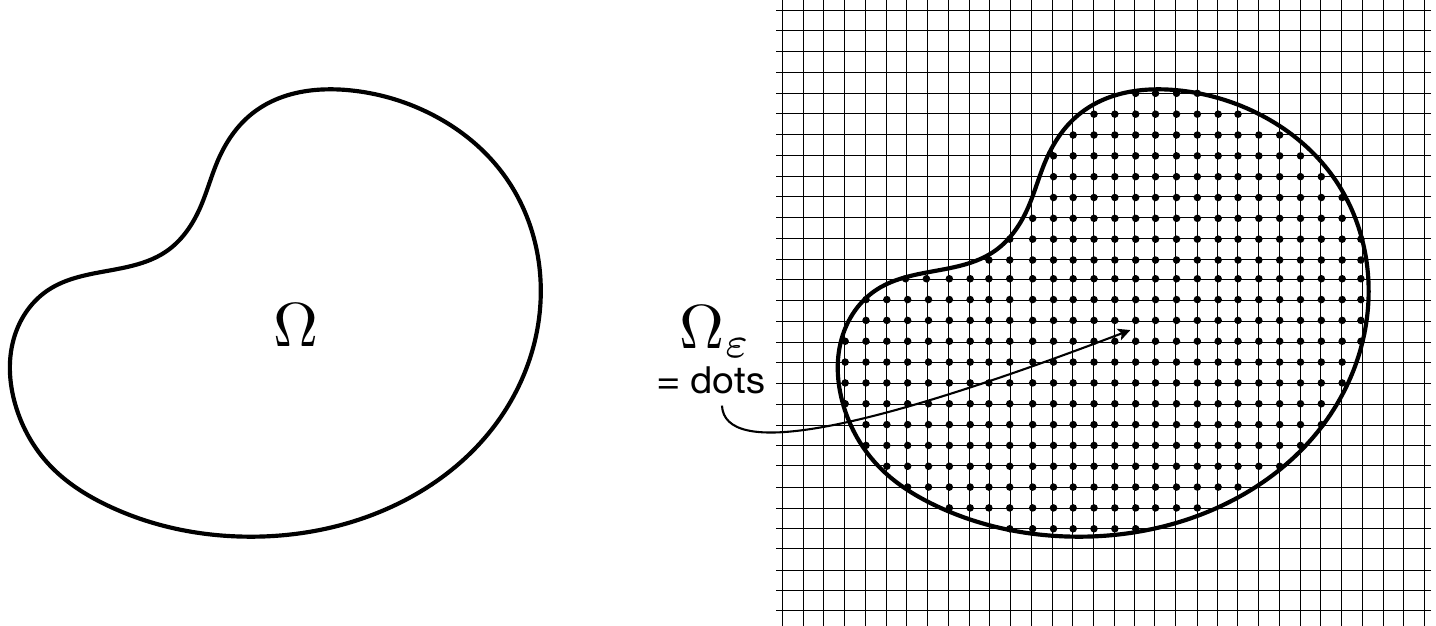}}
}
\caption{\small Left:  The continuum defect region is the open set $\Omega\subset\RR^d$, pictured here in $\RR^2$ as the interior region of a smooth simple curve.  Right: The set $\Omega_\eps$ of vertices that comprise the support of the defect for the operator $\He$, on the metric graph of period $\eps$, consists of those points of $\eps\ZZ^d$ that lie within $\Omega$.}
\label{fig:DefectRegion}
\end{figure}

\subsection{Spectrum of the periodic and defective operators}

As $\Heper$ is a periodic metric graph operator, is spectrum consists of intervals (bands) of continuous spectrum, each determined by the image of a band function, which is a frequency-valued function of wavevector.  The band functions are the branches of a dispersion function.  (A periodic graph operator can also have ``flat bands functions", when a constant band function corresponds to an infinite-multiplicity eigenvalue.)

Both $\Heper$ and $\Hedef$ are self-adjoint extensions of a single symmetric operator $\Hemin$ with additional homogeneous boundary conditions at the vertices $\Omega_\eps=\eps\ZZ^d\cap\Omega$, which are the vertices that participate in the defect.  This operator $\Hemin$ retains the action $u\mapsto-d(a(x)du/dx)/dx$ on the edges, and its domain is
\begin{equation}\label{DomHemin}
\begin{aligned}
  \Dom(\Hemin) \;=\; &
  \left\{ u\in L^2(\Ge) \cap \oplus_{e\in\Ge} W^{1,2}(e) \;{\big|}\; \forall v\in\Ge\!\setminus\!\Omega_\eps\; \text{$u$ satisfies (\ref{Rcondition})};\, \right.\\
& \left.\;\;  \forall v\in\Omega_\eps\; u(v)=0 \text{ and } u'(v)=0 \, \right\}.
\end{aligned}
\end{equation}
(Note that functions in this domain are continuous.)  Each vertex $v$ in the finite set $\Omega_\eps$ contributes $2\deg(v)$ to the deficiency indices of $\Hemin$.  Theorem 4.1.4 of \cite{AlbeverioKurasov1999a} states that two self-adjoint extensions of the same symmetric operator with finite deficiency indices have the same continuous spectrum.  We conclude that $\Heper$ and $\Hedef$ have the same continuous spectrum.

We are interested in the finite-multiplicity eigenvalues of $\Hedef$, particularly those that are embedded in the continuous spectrum.

\section{Bilayer metric graph and defect states}\label{sec:bilayer}

This section concerns the coupling of two monolayer metric graph operators to create a bilayer operator.  The construction allows decoupling by symmetry into even and odd states.  Each of these two components is equivalent to a monolayer with different decoration~$\dedge$ and therefore different spectral bands.  For a frequency that is within a spectral gap of one component but in a spectral band of the other component, we construct in \S\ref{sec:BIC} spectrally embedded defect states.

\subsection{Connecting two monolayer graphs}

A bilayer metric graph is created by connecting two identical copies of $\Ge$ with the decorations $\dedge$ omitted.  Two copies of $\Ge$ are placed next to each other in $\RR^d$---one with vertices on the lattice $\eps\ZZ^d$ and another with vertices at a shifted lattice $\eps(\ZZ^d + b)$, where $b\!\in\!\RR^d$ is chosen in such a way that the two copies do not intersect each other.  Then, for each $n\in\ZZ^d$, the vertex $\eps n$ is connected to the vertex $\eps (n+b)$ by a soft edge of length~$\eps|b|$, denoted by~$e_n$.  The resulting bilayer graph is denoted by~$\Gebi$.

Let $\Hebi$ denote the operator on the bilayer graph $\Gebi$ constructed according to the previous section.  Particularly, denote by $\Heperbi$ the periodic bilayer operator.  This means that the action of the operator is $u\mapsto-d(a(x)du/dx)/dx$ on each edge, the domain is of the form (\ref{DomHe}), and $\alpha_v\!=\!\alpha$ at all vertices $v$ located at the sites $\eps(\ZZ^d)$ and $\eps(\ZZ^d + b)$.
Let $\Hedefbi$ denote the corresponding defective operator.  Specifically, the defect is given by adding the quantity $\Psi(\eps n)$ to the Robin parameter on both layers, that is, for each $n\in\ZZ^d$ and both vertices $v\in\{\eps n,\,\eps(n+b)\}$,
\begin{equation}\label{Rdefconditionbi}
  \sum_{e:\,v\sim e} \Phi[u_e](v) \;=\; \eps\big(\alpha+\mu\Psi(\eps n)\big)\,u(v).
\end{equation}

Again, we are interested in the bound-state problem
\begin{equation}\label{BS2}
  (\Hedefbi - k^2)u \;=\; 0,
\end{equation}
in which $u\in\dom(\Hedefbi)\subset L^2(\Gebi)$.  Of particular interest are {\em bound states in the continuum}, meaning that the eigenvalue $k^2$ of the bound state lies within a band of continuous spectrum of the operator.  Such states can occur in bilayer structures, as we show later.

\subsection{Decoupling into monolayer components}

An order-2 symmetry of the bilayer graphs described above permits an orthogonal decomposition into two noninteracting monolayer decorated graph operators.  Particularly, the spectrum of the bilayer graph is the union of the spectra of these two monolayer graphs.

The graph $\Gebi$ has a natural reflection involution $\iota$ that interchanges the lattices $\eps\ZZ^d$ and $\eps(\ZZ^d+b)$ (so $i(\eps n)=\eps(n+b)$) and permutes the edges correspondingly.  This involution is carried over to functions $f$ on $\Gebi$ by $(\iota f)(x)=f(\iota x)$.
The eigenspaces of this action of $\iota$ corresponding to eigenvalues $+1$ and $-1$ are the even and odd subspaces of $L^2(\Gebi)$; these lead to the decomposition $L^2(\Gebi)=L^{2,\mathrm{e}}(\Gebi)\oplus L^{2,\mathrm{o}}(\Gebi)$.  The operators $\Heperbi$ and $\Hedefbi$ commute with $\iota$, and are thus decomposed~as
\begin{equation}\label{eodecomposition}
  \Heperbi \;=\; \Heperbie \oplus \Heperbio,
  \qquad
  \Hedefbi \;=\; \Hedefbie \oplus \Hedefbio.
\end{equation}
The domain of $\Heperbie$ consists of the even functions in the domain of~$\Heperbi$,
\begin{equation}
  \Dom(\Heperbie) \;=\; \left\{ f\in\Dom(\Heperbi) : \iota f = f \right\},
\end{equation}
and the domain of $\Heperbio$ has $\iota f = -f$; the domains of $\Hedefbie$ and $\Hedefbio$ are analogous.

The operators $\Heperbie$ and $\Heperbio$ can be identified with monolayer operators by restricting them to a fundamental region of the involution~$\iota$.  A fundamental region consists of ``half of $\Gebi$" obtained by cutting the connecting edges $e_n$ in half, that is, a monolayer square-lattice graph with vertices at $\eps\ZZ^d$ decorated with soft edges of length~$\eps|b|/2$.  Each copy of this decoration edge $\dedge$ has one vertex attached to the monolayer graph at $\eps n$ and a terminal vertex at $\eps(n+b/2)$ (so $p\!=\!b/2$ in \S\ref{sec:periodicsingle}).  We denote this graph by $\Ge$, as it coincides with the decorated monolayer construction in the previous section.  It is depicted in Fig.\,\ref{fig:SingleDouble3D}.

Functions in $L^{2,e}(\Ge)$ are determined by their values on the fundamental region~$\Ge$ by extending functions on $\Ge$ to all of $\Gebi$ by the reflection~$\iota$.  This extension is an isomorphism (up to constant) $\mathcal{E}:L^2(\Ge)\to L^{2,e}(\Gebi)$, which is inverted by the restriction of functions from $L^{2,e}(\Gebi)$ to $L^2(\Ge)$.  The restriction of functions in $\Dom(\Heperbie)$ to~$\Ge$ satisfy the Neumann condition at the terminal vertices of~$\Ge$.  Therefore, pulling the operator $\Heperbie$ back under $\mathcal{E}$ yields a unitarily equivalent operator in $L^2(\Ge)$ whose domain is (\ref{DomHe}) with the Neumann condition on the terminal vertices of the decorations.  Denote this self-adjoint operator by~$\Hepere$.
Similarly, the operator $\Heperbio$ is unitarily equivalent to an operator $\Hepero$ in $L^2(\Ge)$.  Its domain has the Dirichlet condition at the terminal endpoints.

In summary, the study of the bilayer operator $\Heperbi$ is reduced to the study of two monolayer operators $\Hepere$ and $\Hepero$.  These differ only in the self-adjoint condition at the terminal vertices.  Particularly, the spectrum of $\Heperbi$ is the union of the spectra of these two.

An analogous argument can be made for the defective bilayer operator $\Hedefbi$.  It is unitarily equivalent to the direct sum of two self-adjoint monolayer operators $\Hedefe$ and $\Hedefo$.  Observe that the defective Neumann operator $\Hedefe$ (resp. defective Dirichlet operator $\Hedefo$) is related to the periodic Neumann operator $\Hepere$ (resp. $\Hepero$) exactly as $\Hedef$ is constructed from $\Heper$ in~\S\ref{sec:defectstates}.

Briefly, we conclude that 
\begin{equation}\label{nddecomposition}
  \Heperbi \;\cong\; \Hepere \oplus \Hepero,
  \qquad
  \Hedefbi \;\cong\; \Hedefe \oplus \Hedefo.
\end{equation}

\subsection{Defect states in the continuum}

Because of the orthogonal decomposition $\Hedefbi\cong\Hedefe\oplus\Hedefo$, the bound state equation (\ref{BS2}) is equivalent to the simultaneous pair
\begin{equation}\label{BS3}
  (\Hedefe - k^2)u_\eee \;=\; 0,
  \qquad
  (\Hedefo - k^2)u_\ooo \;=\; 0
\end{equation}
in which at least one of $u_\eee$ and $u_\ooo$ is nonzero.

If $k^2\in\RR$ lies within the resolvent set (a spectral gap) of one of the corresponding periodic operators, say $\Hepere$, then one can prove (see Section~\ref{sec:results}) that, for fixed $\eps$, there exists a sequence of values of $\mu$, diverging to infinity (see (\ref{Rdefcondition}) for $\mu$) for which $(\Hedefe - k^2)u_\eee\!=\!0$ has a nonzero solution~$u_\eee$ in $\Dom(\Hedefe)$.  By taking $u_\ooo\!=\!0$, one obtains a bound state for the defective bilayer operator~$\Hedefbi$.  If $k^2$ lies within a spectral band of $\Hepero$ and therefore also of $\Hedefo$, then $k^2$ is a spectrally embedded eigenvalue of~$\Hedefbi$.

\section{Reduction to discrete graph operator}\label{sec:reduction}

The vertex condition (\ref{Rcondition}) involves fluxes of a field $u$ at a vertex $v$ coming from all edges incident to~$v$.  When $u$ satisfies the eigenvalue equation $(\He-k^2)u=0$, the fluxes are determined by the values of $u$ at the vertices, except when $k^2$ is a Dirichlet eigenvalue of an edge.  This allows one to express the eigenvalue equation as a discrete graph operator equation, with the operator depending on~$k$.  The basic object in this construction is the Dirichlet-to-Neumann (DtN) matrix for an edge, which relates the values at the endpoints to the fluxes there.

Starting at this point, we set some parameters in our metric graphs; these choices simplify the expressions but otherwise essentially do not compromise the generality of the calculations.
 Let an edge of the $\eps$-scale graph be coordinatized by a variable $x\in[0,\eps\ell]$, where $\ell$ is a parameter.  If the edge is stiff, we take the stiffness in (\ref{HandPhi}) to be $a(x)\!=\!\ell^2$, and if the edge is soft, we use $a(x)\!=\!(\eps\ell)^2$.  We also take the length of the connecting edge to be $|b|=2|p|=2\ell$, so the length of the decoration for the even and odd components is~$\ell$.

\subsection{Dirichlet-to-Neumann matrix for an edge}

 The DtN matrix, or $M$-matrix, for an edge $e$ joining vertices $v_1$ and $v_2$ is a $2\times2$ matrix $M_e(k,\eps)$ that takes Dirichlet data to Neumann (flux) data of $u$ at the boundary (vertices) of an edge.  This means that if $u$ satisfies $(\He-k^2)u=0$ on $e$, then
\begin{equation}\label{DtNgen}
  M_e(k,\eps) \col{1.2}{u(v_1)}{u(v_2)} = \col{1.2}{\Phi[e](v_1)}{\Phi[e](v_2)}.
\end{equation}
Appendix~\ref{sec:calculations} contains details that lead to
\begin{equation}\label{DtNe}
  M_\mathrm{stiff}(k,\eps) \;=\; \frac{\ell k}{\sin k\eps}\mat{1.3}{-\cos k\eps}{1}{1}{-\cos k\eps},
  \qquad
  M_\mathrm{soft}(k,\eps) \;=\; \frac{\eps\ell k}{\sin k}\mat{1.3}{-\cos k}{1}{1}{-\cos k}.
\end{equation}

Similarly, a decoration edge $\dedge$ has one terminal vertex where a homogeneous self-adjoint condition is placed, and its DtN matrix is scalar, as it only concerns the value and flux of $u$ at the other vertex $v$ of~$\dedge$.  For a soft decoration, the notation
\begin{equation}\label{Robinsoft}
  \Phi[u](v) = \eps\, m(k)u(v)
\end{equation}
is convenient.  When the length of a soft decoration $\dedge$ is $\eps\ell$ and the terminal vertex has the Dirichlet or Neumann condition, $m(k)$ is equal~to
\begin{equation}\label{DtNdn}
  m_\mathrm{D}(k)=-\ell k\cot k,
  \qquad
  m_\mathrm{N}(k)=\ell k\tan k.
\end{equation}

\subsection{Dirichlet-to-Neumann matrix for a compound edge}\label{sec:DtNcompound}

Consider a basic ``compound edge" $\bedge$ as in Fig.\,\ref{fig:bdedge}\,(left) consisting of sub-edges of length~$\eps\ell$.  Denote the two outside (terminal) vertices by $v_1$ and $v_2$ and the interior vertices by $w_1$ and $w_2$, as in the figure.  The DtN matrix $M_{\bedge}(k,\eps)$ for $\bedge$, as a compound edge joining $v_1$ to $v_2$, is defined as the matrix $M_e(k,\eps)$ in~(\ref{DtNgen}).

This matrix can be obtained as a Schur complement of the $4\times4$ matrix $\tilde M_{\bedge}(k,\eps)$ that maps the values $[u(v_1),u(v_2),u(w_1),u(w_2)]^\top$ to the fluxes $[\Phi(v_1),\Phi(v_2),\Phi(w_1),\Phi(w_2)]^\top$, in which
\begin{equation}
  \Phi(v) = \sum_{v\sim e}\Phi[e](v).
\end{equation}
Combining the matrices in (\ref{DtNe}) for the four edges of the compound edge $\bedge$ leads~to 
\begin{equation}\label{DtNb1}
\tilde M_{\bedge}(k,\eps) \;=\;
  \renewcommand{\arraystretch}{1.5}
\ell k\left[
  \begin{array}{cc|cc}
    -\cot \eps k & 0 & \csc \eps k & 0 \\
    0 & -\cot \eps k & 0 & \csc \eps k \\\hline
    \csc \eps k & 0 & -2\cot \eps k -\eps \cot k & \csc\eps k + \eps \csc k \\
    0 & \csc\eps k & \csc\eps k + \eps \csc k & -2\cot \eps k -\eps \cot k
  \end{array}
\right]
=\;\ell k \left[\!\!
  \begin{array}{cc}
     A & B \\
    B & D
  \end{array}
\!\right].
\end{equation}

Let $g_v\!=\![u(v_1),u(v_2)]^\top$ and $g_w\!=\![u(w_1),u(w_2)]^\top$ and $\phi_v\!=\![\Phi(v_1),\Phi(v_2)]^\top$.  Set $[\Phi(w_1),\Phi(w_2)]^\top\!=\![0,0]^\top$, which amounts to imposing the Kirchoff condition (Robin parameter equal to zero) at the internal vertices of~$\bedge$.  We obtain
\begin{equation}
  \ell k\mat{1.1}{A}{B}{B}{D} \col{1.1}{g_v}{g_w} \;=\; \col{1.1}{\phi_v}{0}.
\end{equation}
Eliminating $g_w$ yields
\begin{equation}
  M_{\bedge}(k,\eps)g_v \;=\; \ell k \left(A - BD^{-1}B\right) g_v \;=\; \phi_v.
\end{equation}
By denoting the Schur complement by $\Sigma\!=\!(A - BD^{-1}B)$, we obtain
$M_\bedge(k,\eps) \;=\; \ell k\hspace{1pt}\Sigma(k,\eps)$.
A computation detailed in the appendix yields
\begin{equation}
   \Sigma(k,\eps) \;=\; \frac{-1}{\sin\eps k} \left\{ \cos\eps k I_2 + \frac{1}{2} \mat{1.1}{\tau_++\tau_-}{\tau_+-\tau_-}{\tau_+-\tau_-}{\tau_++\tau_-} \right\}
\end{equation}
with
\begin{equation}
  \tau_\pm \;=\; \big( \pm1-2\cos\eps k + \eps\sin\eps k\csc k (\pm1-\cos k) \big)^{-1}.
\end{equation}

\subsection{Monolayer discrete graph operator family}

We are interested in single-frequency bound states, and therefore in the source-free equation
\begin{equation}\label{Eeqn}
  (\He-k^2)u=0,
\end{equation}
for a fixed real value of $k$.  This equation allows generally $u\in H^2_\mathrm{loc}(\Gamma)$ with $u$ being continuous at the vertices, although to be a bound state, $u$ must also lie in~$L^2(\Gamma)$.  This section derives a discrete equation for the values of $u$ on the lattice $\eps\ZZ^d$ that is a homogeneous discrete Laplacian with $k$-dependent potential; see equations (\ref{equivalence1}) and~(\ref{dlaplacian}).

At the vertex located at $\eps n\in\eps\ZZ^d$, denote the Robin parameter simply by $\alpha_n$; thus for the homogeneous graph with a local defect, $\alpha_n=\alpha+\mu\Psi(\eps n)$.  Let $\bar u_n$ denote the value of a solution $u$ of (\ref{Eeqn}) at vertex~$\eps n$.  Let $\phi^{j\pm}_n$ denote the flux out of vertex $\eps n$ into the edge between $n$ and ${n\pm e_j}$, and let $\tilde\phi_n$ denote the flux out of $\eps n$ into the decoration edge.  The vertex condition (\ref{Rcondition})~is
\begin{equation}
  \sum_{j\in[1,d]} \left(\phi_n^{j-}+\phi_n^{j+}\right) + \tilde\phi_n \;=\; \eps\alpha_n\bar u_n.
  \label{Robin_cond}
\end{equation}
In view of (\ref{Robinsoft}), this becomes
\begin{equation}\label{generalRobin}
  \sum_{j\in[1,d]} \left(\phi_n^{j-}+\phi_n^{j+}\right) \;=\; \eps \varrho_n(k)\bar u_n,
\end{equation}
in which
\begin{equation}\label{Rnk}
  \varrho_n(k) \;=\; \alpha_n - m(k),
\end{equation}
with the choice of $m(k)$ depending on the homogeneous vertex condition at the endpoint of the decoration as in~(\ref{DtNdn}).
One can think of the coefficient on the right-hand side of (\ref{generalRobin}) as a frequency-dependent Robin parameter.

By extracting the first row of each of the equations
\begin{equation*}
  M_\bedge(k,\eps) \col{1}{\bar u_n}{\bar u_{n\pm e_j}} \,=\,\col{1}{\phi_n^{j\pm}}{\phi_{n\pm e_j}^{j\mp}}
  \qquad (n\in\ZZ^d)
\end{equation*}
and summing those over $j\in[1,d]$ and $\pm\in\{+,-\}$, for fixed $n$, one obtains
\begin{equation}
  \sum_{j\in[1,d]} \left(\phi_n^{j-}+\phi_n^{j+}\right) \;=\; \ell k\Big( 2d\,\Sigma_{11}\bar u_n + \,\Sigma_{12}\!\!\sum_{j\in[1,d]} (\bar u_{n-e_j}+ \bar u_{n+e_j}) \Big),
\end{equation}
in which $\Sigma_{ij}$ are the entries of the matrix~$\Sigma$.
This, together with the generalized Robin condition (\ref{generalRobin}) yields, for $k\not=0$,
\begin{equation}\label{Jacobi1}
  \big( 2d\,\ell k\,\Sigma_{11} - \eps \varrho_n(k) \big) \bar u_n + \,\ell k\,\Sigma_{12}\!\!\sum_{j\in[1,d]} (\bar u_{n-e_j}+ \bar u_{n+e_j}) \;=\; 0,
\end{equation}
or
\begin{equation}
  \sum_{j\in[1,d]} (\bar u_{n-e_j}+ \bar u_{n+e_j}) + \tilde V(k,\eps,n) \bar u_n \;=\; 0,
\end{equation}
with
\begin{equation}\label{Vken}
  \tilde V(k,\eps,n) \;=\; \frac{2d\,\ell k\,\Sigma_{11}(k,\eps) - \eps \varrho_n(k)}{\ell k\,\Sigma_{12}(k,\eps)}.
\end{equation}

We conclude, except when $k^2$ is an Dirichlet eigenvalue of an edge, that $u$, continuous and in $H^2(e)$ for each edge~$e$, is a solution to $\He u=k^2u$ if and only if its restriction $\bar u$ to $\eps\ZZ^d$ satisfies a discrete graph problem.  Precisely,
\begin{equation}
  (\He - k^2) u= 0
  \quad\iff\quad
  H(k,\eps)\bar u = 0,
\end{equation}
in which $H(k,\eps)$ is the Jacobi operator.
\begin{equation}
  H(k,\eps) \;=\; \sum_{j\in[1,d]}\left( S_j+S_j^{-1} \right) + \tilde V(k,\eps;\cdot),
\end{equation}
where $S_j$ is the $j^\mathrm{th}$ elementary shift operator acting on functions defined on~$\eps\ZZ^d$.

A  calculation in the Appendix shows that the diagonal potential function is equal~to
\[
\tilde V(k,\eps,n) \;=\; -2d+\eps^2\,V(k,\eps,n),
\]
in which
\begin{equation}\label{V}
  V(k,\eps,n) \;=\; d\left[ 9k^2 + 6k\left( \csc k - \cot k \right) \right] - \frac{3}{\ell}\left( \alpha_n - m(k) \right) + \eps^2\,\left[ g_1(k,\eps) + g_2(k,\eps)(\alpha_n-m(k))\right],
\end{equation}
in which the functions $g_1(k,\eps)$ and $g_2(k,\eps)$ are computed in the Appendix.

Thus $H$ is a discrete Laplacian plus a potential,
\begin{equation}\label{Apken}
  H(k,\eps) \;=\; \sum_{j\in[1,d]}\left( S_j+S_j^{-1} \right) - 2d\, + \eps^2\,V(k,\eps;\cdot)
  \;=\; \Delta_\mathrm{d} + \eps^2\,V(k,\eps;\cdot).
\end{equation}
The appropriately scaled discrete Laplacian is
\begin{equation}\label{dlaplacian}
  \Delta_\eps \;=\; \eps^{-2}\Delta_\mathrm{d} \;=\; \eps^{-2}\! \sum_{j\in[1,d]}\left( S_j+S_j^{-1} -2 \right),
\end{equation}
and we obtain the equivalence
\begin{equation}\label{equivalence1}
  (\He - k^2) u= 0 
  \quad\iff\quad
  \big( \Delta_\eps + V(k,\eps;\cdot) \big)\bar u = 0.
\end{equation}
The operator $-\Delta_\eps$ is a positive operator having absolutely continuous spectrum equal to the interval~$[0,4d\eps^{-2}]$.

\subsection{Bilayer discrete graph operator family}

The analogous equation for a bilayer metric graph operator 
\begin{equation}\label{Eeqnbi}
  (\Hebi-k^2)u=0,
\end{equation}
is treated identically as the previous section and results in a discrete operator on the union $\eps\ZZ^d\cup\eps(\ZZ^d+b)$ of two lattices.  But because of the decomposition (\ref{nddecomposition}), we can directly apply the analysis of the previous section to the Neumann and Dirichlet components.  We obtain that, for an even function $u_\mathrm{e}$ on $\Gebi$,
\begin{equation}\label{equivalence2}
  (\Hebi - k^2) u_\mathrm{e} = 0 
  \quad\iff\quad
  \big( \Delta_\eps + V_\mathrm{N}(k,\eps;\cdot) \big)\bar u_\mathrm{e} = 0,
\end{equation}
in which $\bar u_\mathrm{e}$ is the restriction of $u_\mathrm{e}$ to~$\ZZ^d$ and $V_\mathrm{N}$ is the potential $V$ above resulting from using $m_\mathrm{N}$ in place of $m$ (see \ref{DtNdn}) in the definition of $\varrho_n$ (see \ref{Rnk}).  And for an odd function $u_\mathrm{o}$ on $\Gebi$,
\begin{equation}\label{equivalence3}
  (\Hebi - k^2) u_\mathrm{o} = 0 
  \quad\iff\quad
  \big( \Delta_\eps + V_\mathrm{D}(k,\eps;\cdot) \big)\bar u_\mathrm{o} = 0,
\end{equation}
in which $\bar u_\mathrm{o}$ is the restriction of $u_\mathrm{o}$ to~$\ZZ^d$ and $V_\mathrm{D}$ is the potential $V$ with $m_\mathrm{D}$ in place of~$m$.

\subsection{Defect eigenvalues}
\label{sec:defect_eigenvalues}

Because of the previous two sections, finding eigenvalues of the bilayer graph operator $\Hedefbi$ (with symmetric defect) reduces to finding solutions of the monolayer discrete graph equation
\begin{equation}\label{BS4}
  ( \Delta_\eps + V(k,\eps;\cdot) )\bar u = 0,
\end{equation}
with $V\!=\!V_\mathrm{D}$ or $V\!=\!V_\mathrm{N}$ and $\alpha_n=\alpha+\mu\Psi(\eps n)$.  Let $V_\free(k,\eps)$ denote the corresponding discrete potential function for the ``free", undefective, system with constant Robin parameter~$\alpha$.

From the expression (\ref{V}) for $V(k,\eps,n)$, we obtain
\begin{equation}
  V(k,\eps;n) - V_\free(k,\eps) \;=\; \left[ -\frac{3}{\ell} + \eps^2g_2(k,\eps) \right] \mu\Psi(\eps n).
\end{equation}
Thus, (\ref{BS4}) is equivalent to
\begin{equation}\label{BS5}
  \left( -\Delta_\eps - V_\free(k,\eps) \right)\bar u \;=\; \left[ -\frac{3}{\ell} + \eps^2g_2(k,\eps) \right] \mu\Psi(\eps\,\cdot)\bar u
\end{equation}
for $\bar u\in\ell^2(\eps\ZZ^d)$.
The quantity on the right-hand side identifies a new eigenvalue parameter
\begin{equation}\label{lambda}
  \lambda \;=\; \lambda(k,\eps) \;=\; -\mu^{-1}\left[ \frac{3}{\ell} - \eps^2g_2(k,\eps) \right]^{-1},
\end{equation}
so that (\ref{BS5}) becomes
\begin{equation}\label{BS6}
  \left( -\Delta_\eps - V_\free(k,\eps) - \lambda(k,\eps)^{-1} \right)\bar u \;=\; 0.
\end{equation}

The function $V_\free(k,0)$ is the ``Zhikov function" for the ambient medium, indicating the effective square frequency of the medium at frequency~$k$.  We assume that $V_\free(k,0)<0$ so that, for sufficiently small $\eps$, the operator $-\Delta_\eps - V_\free(k,\eps)$ is invertible,
\begin{equation}
  \cR(k,\eps) \;:=\; \big( -\!\Delta_\eps - V_\free(k,\eps) \big)^{-1}.
\end{equation}
The bound-state problem (\ref{BS5}) becomes
\begin{equation}\label{BS6-7}
  \mu\left[ -\frac{3}{\ell} + \eps^2g_2(k,\eps) \right] \cR(k,\eps)\Psi(\eps\,\cdot)\,\bar u \;=\; \bar u
\end{equation}
or, equivalently, with $\bar v:=\Psi(\eps\,\cdot)\,\bar u$,
\begin{equation}\label{BS7}
    \Psi(\eps\,\cdot) \cR(k,\eps) \,\bar v \;=\; \lambda \,\bar v
\end{equation}
for $\bar v\in\ell^2(\Omega_\eps)$.

This last formulation parses the bound-state problem for a defective medium in terms of an eigenvalue problem for the resolvent of the ambient medium, within the defect region.  Specifically, for a fixed $k$, the eigenvalues of the operator $\Psi(\eps\,\cdot) \cR(k,\eps)$ in the defect region determine the defects for which the graph admits a bound state.  Note that $\ell^2(\Omega_\eps)$ is finite dimensional.

To recover the bound state in all of $\eps\RR^d$, one uses (\ref{BS6-7}):
\begin{equation}\label{recoverubar}
  \bar u \;=\; \lambda^{-1} \cR(k,\eps) \,\bar v\,.
\end{equation}
Then, the eigenfunction $u$ on the metric graph (Equation~\ref{Eeqn}) can be reconstructed from~$\bar u$ while preserving the norm.

\section{Homogenization results}\label{sec:results}

Before presenting the details of convergence of the metric graph bound-state problem to a continuum one,  this section provides a concise overview of the results of this paper in light of the foregoing exposition.  We present the heuristics of the homogenized operator as a continuum limit of graph operators and describe the main rigorous results of this paper and philosophy underlying the generation of bound states.

\subsection{Heuristics of the monolayer continuum limit}\label{sec:heuristics1}

We have seen that the bound-state problem for a monolayer metric graph at scale $\eps$ and frequency~$k$, namely $(\He-k^2)u=0$, is equivalent to a discrete Schr\"odinger operator problem~(\ref{BS4}).  When the Robin parameter is a constant $\alpha$ plus a local defect, the bound-state problem can be written in terms of the resolvent of a free discrete Schr\"odinger operator~(\ref{BS7}).

In~(\ref{BS4}), the discrete Laplace operator has the continuous Laplace operator as its continuum limit
\begin{equation}
  \Delta_\eps \;\to\; \Delta \;=\;  \sum_{j\in[1,d]}\!\frac{\partial^2}{\partial x_j^2},
\end{equation}
and the potential has limit
\begin{equation}
  V_\free(k,\eps) + \lambda(k,\eps)^{-1}\Psi(\eps n)
  \;\to\;
  V_\free(k,0) + \lambda(k,0)^{-1}\Psi(x),
\end{equation}
in which by definition $V_\free(k,0)=\lim_{\eps\to0}V_\free(k,\eps)$ ($\alpha_n\equiv\alpha$ in (\ref{V})), so that
\begin{equation}\label{Vfree0}
  V_\free(k,0) \;=\; d\left[ 9k^2 + 6k\left( \csc k - \cot k \right) \right] - \frac{3}{\ell}\left( \alpha - m(k) \right).
\end{equation}
The bound-state formulation (\ref{BS5}) has as its continuum limit
\begin{equation}\label{eig_prob_hom}
  \bigl(-\Delta-V_\free(k,0)\bigr)u(x)=\lambda^{-1}\Psi(x)u(x),
\end{equation}
or, equivalently,
\begin{equation}\label{eig_prob_hom2}
  \bigl(-\Delta-V_\free(k,0) + \lambda^{-1}\Psi(x)\bigr) u(x) \;=\; 0,
\end{equation}
which is a dispersive inhomogeneous linear PDE of the form Laplacian plus $k$- and $x$- dependent potential.
Equation (\ref{eig_prob_hom}) is equivalent to the following continuum version of~(\ref{BS7}):
\begin{equation}
    \Psi\cR(k,0)v=\lambda\, v.
\end{equation}
Here, $v(x)=\Psi(x)u(x)$ lies in $L^2(\Omega)$ and $\cR(k,0)$ is by definition the resolvent
\begin{equation}
  \cR(k,0):=\bigl(-\Delta-V_\free(k,0)\bigr)^{-1},
\end{equation}
when $V_\free(k,0)<0$.  The bound state in all of $\RR^d$ is recovered by
\begin{equation}\label{recoveru}
  u \;=\; \lambda^{-1} \cR(k,0) \,v\,.
\end{equation}

\subsection{Bilayer continuum limit as a coupled dispersive system}\label{sec:heuristics2}

The homogenization limit of the defect-state problem $(\Hedefbi - k^2)u\!=\!0$ (\ref{BS2}) for the bilayer metric graph operator is described by a PDE eigenvalue problem of the form $\dot H(k)u\!=\!\lambda u$.  It consists of two identical copies of a dispersive medium in $\RR^d$ with a dispersive coupling.  The partial differential operator for the ``monolayer" dispersive medium without defect is taken to be the formal continuum limit of a monolayer metric graph operator without decoration.  Thus it has the~form
\begin{equation}\label{PDE1}
  -\Delta-V_\mathrm{s}(k,0).
\end{equation}
To define $V_\mathrm{s}(k,0)$, let $V_\mathrm{s}(k,\eps)$ be the modification of (\ref{V}) obtained by putting $\alpha_n\!=\!\alpha$ and removing $m(k)$. 
Then set $V_\mathrm{s}(k,0)=\lim_{\eps\to0}V_\mathrm{s}(k,\eps)$, so that
\begin{equation}\label{Vs}
  V_\mathrm{s}(k,0) \;=\; d\left[ 9k^2 + 6k\left( \csc k - \cot k \right) \right] - \frac{3\,\alpha}{\ell}. 
\end{equation}

Two copies of this operator are then coupled by the DtN map for the soft connecting edge of length $2\ell\eps$, which is given in (\ref{DtNe2}) and a local defect supported on $\supp\Psi\!=\!\overline{\Omega}$ is added to each layer identically,
\begin{equation}\label{limitsystem}
  \dot H(k) \;:=\; 
  \mat{1.2}{\!-\Delta-V_\mathrm{s}(k,0)\!}{0}{0}{\!-\Delta-V_\mathrm{s}(k,0)\!}
  - \frac{3}{\ell} \frac{\ell k}{\sin 2k} \mat{1.2}{\!-\cos 2k\!}{1}{1}{\!-\cos 2k\!}
  + \frac{3}{\ell} \mat{1.2}{\!\mu\Psi(x)\!}{0}{0}{\!\mu\Psi(x)\!}.
\end{equation}
The eigenvalues of (\ref{DtNe2}) are $\eps\,m_\mathrm{D}(k)$ and $\eps\,m_\mathrm{N}(k)$, with eigenvectors $[1,-1]^t$ and $[1,1]^t$.  Thus this coupled operator is diagonalized~to
\begin{multline}\label{limitdiag}
  \mat{1.2}{\!-\Delta-V_\mathrm{s}(k,0)\!}{0}{0}{\!-\Delta-V_\mathrm{s}(k,0)\!}
  - \frac{3}{\ell} \mat{1.2}{m_\mathrm{D}(k)}{0}{0}{m_\mathrm{N}(k)}
  + \frac{3}{\ell} \mat{1.2}{\mu\Psi(x)}{0}{0}{\mu\Psi(x)}
  \\
  \;=\; \mat{1.4}{\!-\Delta-V_\mathrm{f,D}(k,0) - \lambda^{-1}\Psi(x)}{0}{0}{-\Delta-V_\mathrm{f,N}(k,0) - \lambda^{-1}\Psi(x)\!},
\end{multline}
in which $\lambda = -(3\mu/\ell)^{-1}$ and $V_\mathrm{f,D}$ and $V_\mathrm{f,N}$ are the free (constant-$\alpha$) potentials 
\begin{equation}\label{limitV}
  V_\free(k,0) \;=\; d\left[ 9k^2 + 6k\left( \csc k - \cot k \right) \right] - \frac{3}{\ell}\big( \alpha - m(k) \big)
\end{equation}
with $m(k)$ set to $m_\mathrm{D}(k)$ or $m_\mathrm{N}(k)$.
The operator in the first block acts on odd functions and the second acts on even functions, with respect to the order-2 symmetry of switching the layers.  These two blocks are precisely the formal continuum limits of the operator in~(\ref{BS6}).

These dispersive systems can be realized as the projection of a self-adjoint operator onto the macroscopic ``observable" $x$-dependent fields.  The self-adjoint operator couples the macroscopic fields to ``hidden", microscopic, fields, and it can be realized in an explicit way.  While the dispersive system is frequency dependent, the extended self-adjoint one is a closed system thus independent of frequency.  A detailed construction is presented in Appendix~\ref{sec:explicit-soft-limit}.

\subsection{What we prove}\label{sec:whatweprove}

In Section~\ref{sec:convergence}, we prove that the operators $\Psi(\eps\cdot)\cR(k,\eps)$ converge in norm to $\Psi\cR(k,0)$ as $\eps\to0$.  Since the former operator acts in $\ell^2(\Omega_\eps)$ and the limit operator acts in $L^2(\Omega)$, the task requires an interpolation that places all of these operators unitarily into the same space.  We realize this interpolation through the Fourier transform.  The proof of operator convergence constitutes the bulk of the work in this paper.

Once this convergence is established, it is relatively simple to construct bound defect states, including embedded ones for the bilayer system, albeit the computational details may be involved, as seen in Section~\ref{sec:BIC}.  First fix a real frequency~$k$ in a gap for the continuum operator $-\Delta-V_\free(k,0)$, which means $V_\free(k,0)<0$.  Let $\lambda$ be an eigenvalue of the compact operator $\Psi\cR(k,0)$.  Because of the operator convergence of $\Psi(\eps\cdot)\cR(k,\eps)$ to $\Psi\cR(k,0)$, for small enough $\eps$ the operator $\Psi(\eps\cdot)\cR(k,\eps)$ has an eigenvalue $\lambda_\eps$ that converges to $\lambda$ as $\eps\to0$.  The defect strength for the corresponding $\eps$-scale metric graph is 
\begin{equation*}
  \mu \;=\; -\lambda(k,\eps)^{-1}\left[ \frac{3}{\ell} - \eps^2g_2(\eps,0) \right]^{-1},
\end{equation*}
which converges to the defect value of $\mu$ in the continuum problem,
\begin{equation*}
  \mu \;=\; -\ell/(3\lambda).
\end{equation*}

One also obtains $L^2$ convergence of the bound states.  One has
\begin{equation*}
  \Psi(\eps\cdot)\cR(k,\eps)\bar v_\eps = \lambda_\eps\,\bar v_\eps\,,
  \qquad
  \Psi\cR(k,0) v = \lambda\,v,
\end{equation*}
with $\bar v_\eps \!\to\! v$ in $L^2$ in the sense of interpolation when suitably normalized.  Thus the bound states converge by the reconstruction formulas (\ref{recoverubar} with $\bar v\mapsto\bar v_\eps$ and $\lambda\mapsto\lambda_\eps$) and (\ref{recoveru}).

\begin{theorem}[Monolayer defect states]\label{thm:monolayer}
  Let $k$ be a real number such that $V_\free(k,0)\!<\!0$, and suppose that the locally defective dispersive PDE~(\ref{eig_prob_hom2}) has a square-integrable solution~$u$.  Then there exists $\epsilon'\!>\!0$ and a real-valued function $\mu(\eps)$ defined for $0\!<\!\eps\!<\!\eps'$  that converges to $-\ell/(3\lambda)$ as $\eps\!\to\!0$ and is such that the monolayer metric-graph problem~(\ref{BS1}) with defect strength $\mu\!=\!\mu(\eps)$ has a square-integrable solution~$u_\eps$ that converges to $u$ in $L^2$.  The corresponding eigenvalue $k^2$ is isolated in the spectrum of the monolayer operator~$\Hedef$.
\end{theorem}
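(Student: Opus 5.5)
The plan is to reduce everything to the coupling-constant eigenproblem of \S\ref{sec:defect_eigenvalues} and transfer the eigenvalue by spectral perturbation theory, using the $O(\eps)$ operator-norm convergence of $\Psi(\eps\,\cdot)\cR(k,\eps)$ to $\Psi\cR(k,0)$ that will be proved in Section~\ref{sec:convergence}. I would take that convergence as given, in the common Hilbert space $L^2(\RR^d)$ obtained through $J_\eps=\cF^{-1}\iota U_\eps$.

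\emph{Step 1: the effective eigenvalue.} Let $u\in L^2$ solve (\ref{eig_prob_hom2}) with some $\lambda$. Since $V_\free(k,0)<0$, the operator $\cR(k,0)$ is bounded. Put $v=\Psi u$. Then $v\neq0$, because otherwise $(-\Delta-V_\free(k,0))u=0$, which forces $u=0$. Moreover $\Psi\cR(k,0)v=\lambda v$ and $u=\lambda^{-1}\cR(k,0)v$. The operator $\Psi\cR(k,0)$ is compact: $\cR(k,0)$ maps $L^2$ into $H^2$, and multiplication by the compactly supported $\Psi$ is compact from $H^2$ into $L^2$. Hence $\lambda\ne0$ is an isolated eigenvalue of finite algebraic multiplicity. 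I would fix a small circle $\gamma$ around $\lambda$ that encloses no other spectrum and excludes $0$.

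\emph{Step 2: transfer to the lattice.} Write $K_\eps=J_\eps\Psi(\eps\,\cdot)\cR(k,\eps)J_\eps^*$, extended by zero on $(\ran J_\eps)^\perp$; the extension adds only the eigenvalue $0$, which lies outside $\gamma$. Norm convergence $K_\eps\to\Psi\cR(k,0)$ then gives uniform convergence of the resolvents on $\gamma$ for $\eps<\eps'$. The Riesz projections $P_\eps$ therefore converge in norm to $P_0$, so their ranks agree and eventually $\sigma(K_\eps)\cap\mathrm{int}\,\gamma\neq\emptyset$. I would choose an eigenvalue $\lambda_\eps$ of $K_\eps$ inside $\gamma$, with $\lambda_\eps\to\lambda$ as the radius of $\gamma$ shrinks.

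Two points need care here.

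\begin{itemize}
\item $\lambda_\eps$ must be real, so that $\mu(\eps)$ is real-valued. I would argue that $\Psi(\eps\,\cdot)\cR(k,\eps)$ is similar, on the range of $\Psi$, to the symmetric operator $|\Psi|^{1/2}\cR\,|\Psi|^{1/2}\mathrm{sgn}\Psi$. If $\Psi$ changes sign this is not self-adjoint, so instead I would use that the eigenvalues come in conjugate pairs: the matrix is real because $\cR$ has a real kernel. I would then choose $\lambda$ simple (or track all eigenvalues in $\gamma$) so that a conjugate-symmetric cluster contains a real one.
\item I would then set $\mu(\eps)=-\lambda_\eps^{-1}[3/\ell-\eps^2g_2(k,\eps)]^{-1}$ from (\ref{lambda}). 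This converges to $-\ell/(3\lambda)$ because $g_2$ is bounded for small $\eps$.
\end{itemize}

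\emph{Step 3: eigenvector and reconstruction.} Take $\bar v_\eps:=J_\eps^*P_\eps J_\eps v$, suitably normalized, which is an eigenvector in the simple case; in general I would take a vector in the eigenspace close to $P_\eps v$. Norm convergence of $P_\eps$ gives $J_\eps\bar v_\eps\to v$. Define $\bar u_\eps=\lambda_\eps^{-1}\cR(k,\eps)\bar v_\eps$. The same Fourier-symbol comparison gives $J_\eps\cR(k,\eps)J_\eps^*\to\cR(k,0)$ in norm at rate $O(\eps)$, so $J_\eps\bar u_\eps\to u$. By (\ref{BS6-7}) and (\ref{equivalence1}), $\bar u_\eps$ solves the discrete defect equation, and it is in $\ell^2$ since $\cR$ is bounded. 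Reconstructing edge fields from vertex values yields $u_\eps\in\Dom(\Hedef)$ with $(\Hedef-k^2)u_\eps=0$. This requires $k^2$ to avoid the Dirichlet eigenvalues of the edges, which holds because $\eps k$ is small and $k$ is taken away from the soft-edge resonances implicit in $V_\free(k,0)$ being finite. The "convergence in $L^2$" of the statement is understood via the interpolation $J_\eps$ together with the norm-preserving reconstruction.

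\emph{Isolation of $k^2$.} The continuous spectrum of $\Hedef$ equals that of $\Heper$ by the finite-deficiency argument of \S\ref{sec:defectstates}. So it suffices to show that $k^2$ lies in a gap of $\Heper$. The Floquet symbol of $-\Delta_\eps-V_\free(k',\eps)$ is $\ge -V_\free(k',\eps)>0$ for $k'$ near $k$ and small $\eps$, so no Floquet solution exists near $k^2$. The eigenvalue is then discrete, hence isolated.

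I expect the main obstacle to be Step 2 together with the reality of $\lambda_\eps$. Everything rests on the common-space norm convergence; beyond that, the delicate points are the non-self-adjointness of $\Psi\cR$ and the changing underlying space, which I handle by extending by zero and observing that this only adds the eigenvalue $0$.
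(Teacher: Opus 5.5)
Your proposal follows the paper's own route. You reduce to the coupling-constant problem (\ref{BS7}) and place $\Psi(\eps\,\cdot)\cR(k,\eps)$ and $\Psi\cR(k,0)$ in $L^2(\RR^d)$ via the Fourier interpolation $J_\eps$; your $K_\eps$ is, after conjugation by $\cF$, the paper's $(2\pi)^{-d/2}A_\eps$. You then use the $O(\eps)$ estimate of Theorem~\ref{convergence_theorem} to transfer the isolated eigenvalue and its Riesz projection, define $\mu(\eps)$ by (\ref{lambda}), and reconstruct via (\ref{recoverubar}) together with Lemma~\ref{Reps_R0_diff}.

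Your arguments for the reality of $\lambda_\eps$ (real kernel of $\cR(k,\eps)$, or symmetrisation when $\Psi$ has one sign) and for the isolation of $k^2$ supply details the paper leaves implicit. One caveat: convergence to the \emph{prescribed} $u$ genuinely requires $\lambda$ to be simple, or its eigenspace not to split under the perturbation. The paper's statement tacitly assumes this too.
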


To construct a spectrally embedded eigenvalue for a bi-layer metric graph, one starts with the two-component coupled continuum system~(\ref{limitsystem}), or equivalently its block-diagonal form~(\ref{limitdiag}) and chooses a frequency $k$ that lies in a gap of one of the components and in a band of the other, say $V_\mathrm{f,D}(k,0)<0$ and $V_\mathrm{f,N}(k,0)>0$.  The defect state constructed for the Dirichlet (odd) component will lie within the continuous spectrum of the Neumann (even) component and therefore within the continuous spectrum of the whole metric graph operator.  This is because the perturbation does not change the continuous spectrum (see Theorem 4.1.4 of \cite{AlbeverioKurasov1999a}).

\begin{theorem}[Spectrally embedded bilayer defect states]\label{thm:bilayer}
  Let $k$ be a real number such that $V_\mathrm{f,N}(k,0)\!<\!0$ and $V_\mathrm{f,D}(k,0)\!>\!0$ (or vice-versa),
  and suppose that the locally defective dispersive PDE system~(\ref{limitsystem}) has a square-integrable solution~$u$.  Then there exists $\epsilon'\!>\!0$ and a function $\mu(\eps)$ defined for $0\!<\!\eps\!<\!\eps'$ that converges to $-\ell/(3\lambda)$ as $\eps\!\to\!0$ and is such that the bilayer metric-graph problem~(\ref{BS2}) with defect strength $\mu\!=\!\mu(\eps)$ has a square-integrable solution~$u_\eps$ that converges to $u$ in $L^2$.  The corresponding eigenvalue $k^2$ is embedded in the continuous spectrum of the bilayer operator~$\Hedefbi$.
\end{theorem}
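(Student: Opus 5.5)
We reduce the bilayer statement to the monolayer one (Theorem~\ref{thm:monolayer}) through the parity decomposition~(\ref{nddecomposition}), and then use the invariance of the continuous spectrum under the defect. Take the case $V_\mathrm{f,N}(k,0)<0<V_\mathrm{f,D}(k,0)$; the other case is symmetric.

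\textbf{Step 1: the continuum solution lives in the gapped channel.} Diagonalize the continuum system~(\ref{limitsystem}) by the change of variables to~(\ref{limitdiag}), i.e.\ by the eigenvectors $[1,1]^t$ and $[1,-1]^t$. A square-integrable solution $u$ then splits into an even part $u_\eee$ solving $(-\Delta-V_\mathrm{f,N}(k,0)-\lambda^{-1}\Psi)u_\eee=0$ and an odd part $u_\ooo$ solving the Dirichlet-channel equation. In the Dirichlet channel $V_\mathrm{f,D}(k,0)>0$, so $k$ lies in the band of that homogeneous constant-coefficient operator. There I would argue that $u_\ooo=0$ by a Rellich-type argument: $u_\ooo$ solves a Helmholtz equation $(-\Delta-\kappa^2)u_\ooo=0$ with $\kappa^2=V_\mathrm{f,D}(k,0)>0$ outside $\overline\Omega$, so it vanishes outside $\overline\Omega$ by Rellich's lemma. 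Unique continuation for the Schr\"odinger operator with bounded potential $\lambda^{-1}\Psi$ then gives $u_\ooo\equiv0$.

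I expect this step to be the main subtlety. If the regularity of $\Psi$ makes unique continuation awkward, the alternative is to read the hypothesis as ``$u$ is a solution supported in the gapped (even) channel'', which is how the construction of \S\ref{sec:heuristics2} is meant. In that case $u=\mathcal E$-extension of a solution $u_\eee$ of~(\ref{eig_prob_hom2}) with $V_\free=V_\mathrm{f,N}$ and $\lambda=-\ell/(3\mu)$.

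\textbf{Step 2: apply the monolayer theorem in the Neumann channel.} Theorem~\ref{thm:monolayer} applies to $u_\eee$ because $V_\mathrm{f,N}(k,0)<0$. It yields $\eps'>0$, a function $\mu(\eps)\to-\ell/(3\lambda)$, and solutions $u_{\eee,\eps}\in\Dom(\Hedefe)$ of $(\Hedefe-k^2)u_{\eee,\eps}=0$ with $u_{\eee,\eps}\to u_\eee$ in $L^2$ (in the interpolated sense). Set $u_\eps$ to be the even extension $\mathcal E u_{\eee,\eps}$, normalized by the constant in $\mathcal E$, and take the odd component $u_{\ooo}=0$. By~(\ref{BS3}) and the unitary equivalence $\Hedefbie\cong\Hedefe$, $u_\eps$ is an $L^2$ eigenfunction of $\Hedefbi$ with eigenvalue $k^2$. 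Its convergence to $u$ is inherited from that of $u_{\eee,\eps}$, since $\mathcal E$ and the continuum even embedding $[1,1]^t$ commute with the identifications used.

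\textbf{Step 3: the eigenvalue is embedded.} We must show that $k^2$ lies in the continuous spectrum of $\Hepero$ for small $\eps$. Since $V_{\mathrm{f,D}}(k,\eps)\to V_\mathrm{f,D}(k,0)>0$, for small $\eps$ we have $0<V_\mathrm{f,D}(k,\eps)<4d\eps^{-2}$. Then $-\Delta_\eps-V_\mathrm{f,D}(k,\eps)$ has $0$ in the interior of its absolutely continuous spectrum, so there exist bounded plane-wave solutions $e^{i\eps n\cdot\xi}$ of $(\Delta_\eps+V_\mathrm{f,D}(k,\eps))\bar u=0$. By the equivalence~(\ref{equivalence3}), reconstructed edge by edge, these give Bloch solutions of $(\Hepero-k^2)u=0$. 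Here $k$ must avoid the Dirichlet eigenvalues of edges, which holds because $V_\mathrm{f,D}(k,0)$ is finite. Truncating these Bloch solutions produces a Weyl sequence, so $k^2\in\sigma(\Hepero)$. A band point is not an isolated flat-band point here, because the dispersion is nonconstant, so $k^2$ lies in the continuous spectrum.

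Finally, by Theorem~4.1.4 of \cite{AlbeverioKurasov1999a}, the continuous spectrum of $\Hedefo$ equals that of $\Hepero$. Via~(\ref{nddecomposition}), $k^2$ therefore belongs to the continuous spectrum of $\Hedefbi$, and it is an eigenvalue of $\Hedefbi$ by Step~2.
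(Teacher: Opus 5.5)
Your proposal is correct and follows the paper's own route (the sketch in \S\ref{sec:bilayer} and \S\ref{sec:whatweprove}): decompose $\Hedefbi\cong\Hedefe\oplus\Hedefo$, apply Theorem~\ref{thm:monolayer} in the gapped parity channel with the other component set to zero, and use Theorem~4.1.4 of \cite{AlbeverioKurasov1999a} to see that $k^2$ lies in the continuous spectrum supplied by the banded channel. Two of your steps fill in details the paper leaves implicit: in Step~1, Rellich's lemma plus unique continuation shows that the banded-channel component of $u$ must vanish; in Step~3, the bound $0<V_\mathrm{f,D}(k,\eps)<4d\eps^{-2}$ places $k^2$ in a band of $\Hepero$ for small $\eps$.
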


\section{Convergence of operators}\label{sec:convergence}

This section provides a proof that $\Psi(\eps\cdot)\cR(k,\eps)$ converges to $\Psi\cR(k,0)$.  The reader is likely ready to point out that all of these operators act in different spaces.  This is resolved by isometrically embedding the domains of the operators into a single Hilbert space by interpolation.

\subsection{Convergence of resolvents}\label{sec:resolventconvergence}

This subsection carries out the interpolation, then the convergence of $\cR(k,\eps)$ to $\cR(k,0)$, and finally $\Psi(\eps\cdot)\cR(k,\eps)$ to $\Psi\cR(k,0)$.

\subsubsection{Interpolation for operators in the ambient medium}\label{sec:interp}

The space $\ell^2(\eps\ZZ^d)$ can be isometrically embedded into $L^2(\RR^d)$ by interpolation through Fourier analysis according to the diagram
\begin{equation*}
  \ell^2(\eps\ZZ^d) \;\overset{\cU_\eps}{\twoheadrightarrow}\; L^2(\eps^{-1}Q') \;\overset{\iota}{\hookrightarrow}\; L^2(\RR^d) \,\overset{\cF^{-1}}{\longrightarrow}\, L^2(\RR^d),
\end{equation*}
as we explain presently.

Define the dual cells $Q=[-\sfrac{1}{2},\sfrac{1}{2}]^d$ and $Q'=[-\pi,\pi]^d$.
On $\ell^2(\eps\ZZ^d)$, we put the norm $\| f \|_{2,\eps}$ defined by
\begin{equation*}
  \| f \|_{2,\eps}^2 \;=\; \eps^d\! \sum_{x\in\eps\ZZ^d} |f(x)|^2.
\end{equation*}
The unitary discrete Fourier transform (Gelfand transform) $\cU_\eps$ is defined by
\begin{equation}\label{Gelfand}
  (\cU_\eps f)(\xi) \;=\; \frac{\eps^d}{(2\pi)^{d/2}}\!\sum_{x\in\eps\ZZ^d} f(x) e^{-ix\cdot\xi}
\end{equation}
for $\xi\in\eps^{-1}Q'$ and is inverted by
\begin{equation*}
  f(x) \;=\; \frac{1}{(2\pi)^{d/2}} \int\limits_{\eps^{-1}Q'} (\cU_\eps f)(\xi) e^{ix\cdot\xi}\,dV(\xi)
\end{equation*}
for $x\in\eps\ZZ^d$.  This formula is extended to an interpolation of $f$ by a real-analytic function on $\RR^d$ simply by allowing $x$ to reside in $\RR^d$. 
This amounts to first including $L^2(\eps^{-1}Q')$ into $L^2(\RR^d)$ through extension by zero ($\iota$ in the diagram above) and then applying the inverse Fourier transform
$\cF^{-1}:L^2(\RR^d)\to L^2(\RR^d)$:
\begin{equation}\label{Finv}
  \left(\cF^{-1}\!g\right)(x) \;=\; \frac{1}{(2\pi)^{d/2}} \int\limits_{\RR^d} g(\xi) e^{ix\cdot\xi}\,dV(\xi),
\end{equation}
to obtain the isometric interpolation operator
\begin{equation*}
  J_\eps \;:=\; \cF^{-1}\circ\iota\circ\cU_\eps : \ell^2(\eps\ZZ^d) \to L^2(\RR^d).
\end{equation*}
A basis of interpolating functions consists of the family $\cF^{-1}(\mathds{1}_{\eps^{-1}Q'})(x-\eps n)$ indexed by $n\in\ZZ^d$. 

Now consider the diagram
\begin{equation*}
  \ell^2(\eps\ZZ^d) \;\overset{J_\eps}{\hookrightarrow}\; L^2(\RR^d) \;\overset{\cF}{\longrightarrow}\; L^2(\RR^d)
\end{equation*}
and observe that $\cF\circ J_\eps = \iota\circ\cU_\eps$.  In other words, interpolating and then taking the Fourier transform is the same as taking the discrete Fourier transform $\cU_\eps$.

Under $\cU_\eps$, the operator $-\Delta_\eps$ is conjugated into multiplication by its symbol
\begin{equation*}
  \eps^{-2}\! \sum\limits_{j\in[1,d]} (2-2\cos\eps\xi_j)
\end{equation*}
in $L^2(\eps^{-1}Q')$.  When $V_\free(k,0)<0$, the (generalized) resolvent $\cR(k,\eps)=(-\Delta_\eps-V_\free(k,\eps))^{-1}$ has symbol
\begin{equation}
  R_\eps(\xi) \;=\; \Big( \eps^{-2}\! \sum\limits_{j\in[1,d]} (2-2\cos\eps\xi_j) - V_\free(k,\eps) \Big)^{-1}.
  \label{Rkeps_symbol}
\end{equation}
After applying the inclusion map $\iota:L^2(\eps^{-1}[-\pi,\pi]^d)\to L^2(\RR^d)$, we extend this symbol by zero (but we refrain from choosing an extension of the symbol of~$-\Delta_\eps$).  This amounts to extending the multiplication operator by zero on the orthogonal complement of $L^2(\eps^{-1}[-\pi,\pi]^d)$ in $L^2(\RR^d)$.

We denote the multiplication operator with this symbol by $\tilde\cR(k,\eps)$.  It acts in $L^2(\RR^d)$ and is related to $\cR(k,\eps)$ through an isometric transformation, namely interpolation by $J_\eps$ and then extension by zero on $\ran(J_\eps)^\perp$:
\begin{equation*}
  \tilde\cR(k,\eps) \;:=\; J_\eps \cR(k,\eps) J_\eps^{-1}P_{\ran(J_\eps)}.
\end{equation*}
in which $P_{\ran(J_\eps)}$ is the orthogonal projection onto $\ran(J_\eps)$ and $J_\eps^{-1}$ acts on $\ran(J_\eps)$.

\subsubsection{Convergence of resolvents for the ambient medium: $R$}

As we have just seen, by Fourier transform, the the operator $\cR(k,\eps)$ is converted to multiplication by its symbol
\begin{equation}
  R_\eps(\xi) \;=\; 
  \renewcommand{\arraystretch}{1.3}
\left\{
\begin{array}{ll}
     \bigg(\eps^{-2}\!\! \sum\limits_{j\in[1,d]} (2-2\cos\eps\xi_j) - V_\free(k,\eps)\bigg)^{-1}, & \xi\in\eps^{-1}Q', \\
     0, & \xi\not\in\eps^{-1}Q'.
\end{array}
\right.
\label{Repsilon_def}
\end{equation}
in which $\xi = (\xi_1,\dots,\xi_d)\in\RR^d$.  The limit of this function is the symbol $R_0(\xi)$ of the resolvent $\cR(k,0)$ of the differential operator $-\Delta = -\sum_{j=1}^d\partial^2/\partial x_j^2$ at spectral value $A'(k,0)$, that is
\begin{align}
  \cR(k,0) &\;=\; \left( -\Delta - V_\free(k,0) \right)^{-1},\label{R0_operator}\\
  R_0(\xi) &\;=\; \left( \,|\xi|^2 -  V_\free(k,0) \right)^{-1}.\label{R0_formula}
\end{align}
The convergence of resolvents in the operator norm is of order $O(\eps)$, that is, there exists $C$ such that for all sufficiently small~$\eps$,
\begin{equation*}
  \big\| \tilde\cR(k,\eps) - \cR(k,0) \big\| \;<\; C\eps.
\end{equation*}
This is equivalent to $O(\eps)$-uniform convergence of the corresponding symbols.

\begin{lemma}
\label{Reps_R0_diff}
  Let $k\in{\mathbb R}$ be such that $V_\free(k,0)=\lim_{\varepsilon\to0}V_\free(k, \varepsilon)$ is negative. There exist $\eps_0>0$ and $C>0$ such that for all  $\eps\in(0,\eps_0]$ and for all $\xi\in\RR^d$,
\begin{equation}
  |R_\eps(\xi)-R_0(\xi)| \;<\; C\eps.
\end{equation}
\end{lemma}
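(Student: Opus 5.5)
We would prove the lemma by splitting $\RR^d$ into the Brillouin zone $\eps^{-1}Q'$ and its complement, and estimating the two symbol differences separately.

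Set $c_\eps:=-V_\free(k,\eps)$ and $c_0:=-V_\free(k,0)>0$. From (\ref{V}), $V_\free(k,\eps)-V_\free(k,0)=\eps^2[g_1(k,\eps)+g_2(k,\eps)(\alpha-m(k))]$. Assuming $g_1,g_2$ are bounded for small $\eps$ at fixed admissible $k$, this gives $|c_\eps-c_0|\le C_1\eps^2$. In particular $c_\eps\ge c_0/2$ for $\eps\le\eps_0$.

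Write $s_\eps(\xi)=\eps^{-2}\sum_j(2-2\cos\eps\xi_j)=\sum_j \xi_j^2\,\sinc^2(\eps\xi_j/2)$ (with the normalisation $\sinc t=\sin t/t$). On $\eps^{-1}Q'$ we have $|\eps\xi_j/2|\le\pi/2$, so Jordan's inequality $\sin t\ge 2t/\pi$ gives
\[
\frac{4}{\pi^2}|\xi|^2\le s_\eps(\xi)\le|\xi|^2 .
\]
Hence both denominators $s_\eps+c_\eps$ and $|\xi|^2+c_0$ are bounded below by $c(|\xi|^2+1)$, and
\[
|R_\eps-R_0|\le\frac{\big||\xi|^2-s_\eps(\xi)\big|+|c_\eps-c_0|}{c^2(|\xi|^2+1)^2}.
\]
By Taylor's theorem, $|\xi|^2-s_\eps(\xi)=\sum_j\xi_j^2\,(1-\sinc^2(\eps\xi_j/2))$. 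Using both $0\le 1-\sinc^2 t\le\min(1,t^2/3)$ and $\min(1,t^2)\le |t|$, each term is at most $C\,\xi_j^2\,\eps|\xi_j|$. The first part of the quotient is therefore bounded by $C\eps|\xi|^3/(|\xi|^2+1)^2\le C\eps$, and the constant part is bounded by $C_1\eps^2$.

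Off the zone, $R_\eps=0$ and $|\xi|\ge\pi/\eps$, because $\xi\notin\eps^{-1}Q'$ forces some $|\xi_j|>\pi/\eps$. Therefore $|R_\eps-R_0|=R_0(\xi)\le(|\xi|^2+c_0)^{-1}\le\eps^2/\pi^2$, which is smaller than $C\eps$.

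The main obstacle is the first step, which depends on the appendix behaviour of $g_1,g_2$ at fixed $k$ and requires $k$ to stay away from the poles of $m(k)$ and the exceptional edge frequencies. If $g_1,g_2$ are only known to be $O(1)$ as $\eps\to0$ rather than uniformly bounded, that still suffices because $k$ is fixed. Everything else is elementary and uniform in $\xi$, which yields the stated bound with a single $C$ and $\eps_0$.
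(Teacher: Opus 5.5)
Your argument is correct, but it is organised differently from the paper's proof.

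\textbf{The paper's route.} The paper first treats $d=1$ by cutting the $\xi$-axis into five regions: $|\xi|\le\eps^{-1/4}$, $\eps^{-1/4}\le|\xi|\le\eps^{-1/2}$, $\eps^{-1/2}\le|\xi|\le x_0/\eps$, $x_0/\eps\le|\xi|\le\pi/\eps$, and $|\xi|\ge\pi/\eps$. At low frequencies it uses resolvent identities built on $2-2\cos x=x^2-x^4f(x)$. At intermediate frequencies it uses a lower bound $2-2\cos x\ge cx^2$, assumed only for $|x|\le x_0<\pi$. It then reaches general $d$ by a telescoping sum that replaces $\eps^{-2}(2-2\cos\eps\xi_j)$ by $\xi_j^2$ one coordinate at a time. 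A separate term handles $V_\free(k,\eps)-V_\free(k,0)$, controlled by regularity of $V_\free(k,\cdot)$ at $\eps=0$.

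\textbf{Your route.} You observe, via $2-2\cos x=4\sin^2(x/2)$ and Jordan's inequality, that $\frac{4}{\pi^2}|\xi|^2\le s_\eps(\xi)\le|\xi|^2$ on the \emph{whole} zone $\eps^{-1}Q'$. So both denominators are comparable to $|\xi|^2+1$ there, and a single quotient estimate handles every in-zone $\xi$ and every $d$ at once. The exterior is handled exactly as in the paper.

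\textbf{What each buys.} The paper's region-splitting is the kind of argument one needs when only local (small $|\eps\xi|$) control of the symbol is available. Your global comparability makes it unnecessary: it removes both the region bookkeeping and the telescoping, and gives explicit constants.

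Your approach also shows the stated rate is not sharp. If you keep $1-\sinc^2t\le\min(1,t^2/3)$ instead of weakening it to $|t|$, you get $|\xi|^2-s_\eps(\xi)\le\min\bigl(|\xi|^2,\eps^2|\xi|^4/12\bigr)$. Dividing by $(|\xi|^2+1)^2$ gives at most $\eps^2/12$. Together with your $O(\eps^2)$ bounds for the constant shift and the exterior, this yields $|R_\eps-R_0|=O(\eps^2)$.

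\textbf{Your flagged obstacle.} The closeness of $V_\free(k,\eps)$ to $V_\free(k,0)$ at fixed admissible $k$ is the same input the paper uses when it appeals to analyticity in $\eps$. It is therefore not a gap relative to the paper's proof.
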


\begin{proof}
First, we prove the lemma for $d=1$.  Since $V_\free(k,0)<0$, there exists $a>0$ and $\varepsilon_0>0$ such that $0<a\le -V_\free(k,\eps)$ for all $\eps\in(0,\varepsilon_0]$.
Therefore, both functions 
\begin{equation}	
  R_\eps(\xi) \;=\; \left(\eps^{-2}(2-2\cos\eps\xi) -V_\free(k,\eps)\right)^{-1} \mathds{1}_{Q'}(\eps\xi),
  \qquad
  R_0(\xi) \;=\; \left(\xi^2 -V_\free(k,0)\right)^{-1}
  \label{R0_expr}
\end{equation}
are positive and attain their maximum values of $-V_\free(k,\eps)^{-1}$ and $-V_\free(k,0)$, respectively, at $\xi=0$. Observe that there exist $c>0$ and $x_0\in(0,\pi)$ such that
\begin{equation}
\label{Lbound}
  2-2\cos x \;\geq\; cx^2 \qquad \forall x : |x|\leq x_0,
\end{equation}
and that there is an analytic function $f(x)$ and a number $F$ such that
\begin{equation*}
  2-2\cos x \;=\; x^2 - x^4 f(x) \;\;\;\text{and}\;\;\; |f(x)|<F,
  \qquad \forall x : |x|\leq \pi.
\end{equation*}
Thus, for all $\xi\in\eps^{-1}[-\pi,\pi]$,
\begin{equation*}
  \eps^{-2}(2-2\cos\eps\xi) \;=\; \xi^2 - \eps^2\xi^4 f(\eps\xi)
\end{equation*}
and the following resolvent formulas hold:
\begin{align}
   R_\eps(\xi) - R_0(\xi) &\;=\; \eps^2\xi^4\,f(\eps\xi)\,R_0(\xi)R_\eps(\xi) \label{Rest1} \\[0.2em]
   &\;=\; \eps^2\xi^4\,f(\eps\xi)\,R_0(\xi)^2 \left[ 1 + \eps^{-2}\xi^4 f(\eps\xi)\,R_\eps(\xi) \right]. \label{Rest2}
\end{align}
In the region $|\xi|\leq\eps^{-1/4}$, the expression (\ref{Rest1}) yields
\begin{equation*}
\bigl|R_\eps(\xi)-R_0(\xi)\bigr| \,\leq\, \eps Fa^{-2}.
\end{equation*}
In the region $\eps^{-1/4}\leq|\xi|\leq\eps^{-1/2}$, one has $a+\xi^2>\eps^{-1/2}$, so that $R_0(\xi)<\eps^{-1/2}$, and $\eps^2\xi^4\leq1$.  Using these inequalities in (\ref{Rest2}) yields
\begin{equation*}
  |R_\eps(\xi)-R_0(\xi)| \,\leq\, \eps F(1+Fa^{-1}).
\end{equation*}
In the region $\eps^{-1/2}\leq|\xi|\leq\eps^{-1}x_0$, the lower bound (\ref{Lbound}) yields
$\eps^{-2}(2-2\cos\eps\xi) \,\geq\, c\eps^{-1}$, so that
\begin{equation*}
  0 \,\leq\, R_0(\xi) \,\leq\, R_\eps(\xi) \,\leq\, c^{-1}\eps,
\end{equation*}
which implies
\begin{equation*}
\bigl|R_\eps(\xi)-R_0(\xi)\bigr| \,\leq\, c^{-1}\eps.
\end{equation*}
In the region $\eps^{-1}x_0\leq|\xi|\leq\eps^{-1}\pi$, one has
\begin{equation*}
  \xi^2 \,\geq\, \eps^{-2}(2-2\cos\eps\xi) \,\geq\, \eps^{-2}cx_0^2 > 0,
\end{equation*}
and therefore
\begin{equation}
  0 \,\leq\, R_0(\xi) \,\leq\, R_\eps(\xi) \,\leq\, \eps^2c^{-1}x_0^{-2},
\label{R_est_eps_squared}
\end{equation}
which yields
\begin{equation*}
\bigl|R_\eps(\xi)-R_0(\xi)\bigr| \,\leq\, \eps^2c^{-1}x_0^{-2}.
\end{equation*}
In the region $|\xi|\geq\eps^{-1}\pi$,
\begin{equation*}
   \bigl|R_\eps(\xi)-R_0(\xi)\bigr| \;=\; R_0(\xi) \;\leq\; \eps^2\pi^{-2}.
\end{equation*}
The lemma is now proved for the case $d=1$, which can now be used in a telescoping sum to obtain the lemma for general values of~$d$.
For $\xi\in{\mathbb R}^d$,
\begin{equation}
	\begin{aligned}
 \bigl|R_\eps(\xi)&- R_0(\xi)\bigr| \\
  &\leq \Bigg| \Big( \eps^{-2} \sum_{j\in[1,d]} (2-2\cos\eps\xi_j) - V_\free(k,\eps) \Big)^{-1}
                    -  \Big( \eps^{-2} \sum_{j\in[1,d]} (2-2\cos\eps\xi_j) - V_\free(k,0) \Big)^{-1} \Bigg|+
                    \\
  &+\sum_{\ell\in[1,d]}  
        \Bigg| \Big( \eps^{-2} \sum_{j<\ell} \xi_j^2 + \sum_{j\geq\ell} (2-2\cos\eps\xi_j) - V_\free(k,0) \Big)^{-1}\\
        &\hspace{4cm}
              - \Big( \eps^{-2} \sum_{j\leq\ell} \xi_j^2 + \sum_{j>\ell} (2-2\cos\eps\xi_j) - V_\free(k,0) \Big)^{-1} \Bigg|\\
   &\hspace{6cm}\le C_1\eps+d\max\{Fa^{-2}, F(1+Fa^{-1}), c^{-1}\}\varepsilon \;=\; C\eps,
   \end{aligned}
   \label{diff_bound}
\end{equation}
where $C:=C_1+d\max\{Fa^{-2}, F(1+Fa^{-1}), c^{-1}\}$, $\varepsilon\in(0,\varepsilon_0]$ for suitable $\varepsilon_0>0$.
Considering that all of the terms of all of the sums in \eqref{diff_bound} are nonnegative, the $C_1\eps$ bound for the first part comes from the analyticity of $A'(k,\eps)$ at $\eps=0$ and for the $\ell^\text{th}$ difference in the second part, the $\max\{Fa^{-2}, F(1+Fa^{-1}), c^{-1}\}\varepsilon$ bound comes from the lemma for $d=1$ with $\xi_\ell$ in place of~$\xi$.
\end{proof}

\subsubsection{Convergence of $\Psi R$}
\label{res_est_sec}

We will use the same notation ${\mathcal U}_\varepsilon$ for the operator that maps 
$f\in\ell^2(\varepsilon{\mathbb Z}^d)$ to the $L^2({\mathbb R}^d)$ function obtained by extending ${\mathcal U}_\eps f$ by zero in ${\mathbb R}^d\setminus\varepsilon^{-1}Q$.  In other words, we write ${\mathcal U}_\eps$ in place of $\iota\circ {\mathcal U}_\eps$.
We will make use of the mapping $\widetilde{\mathcal U}_\varepsilon$ that takes $f\in\ell^2(\varepsilon{\mathbb Z}^d)$ to the $\varepsilon^{-1}Q'$-periodic extension of~${\mathcal U}_\varepsilon f$,
\begin{equation}
\widetilde{\mathcal U}_\varepsilon f=\sum_{m\in{\mathbb Z}^d}({\mathcal U}_\varepsilon f)(\cdot+2\pi m\,\eps^{-1}).
\label{tilde_no_tilde}
\end{equation}
The Fourier transform ${\mathcal F}$ on $L^2({\mathbb R}^d)$ corresponding to (\ref{Finv})~is
\[
({\mathcal F}f)(\xi)=\frac{1}{(2\pi)^{d/2}}\int_{{\mathbb R}^d}f(x){\rm e}^{-{\rm i}x\cdot\xi}dV(x),\qquad \xi\in{\mathbb R}^d.
\]
This definition can be extended to tempered distributions in the usual way.  Particularly, for $z\in{\mathbb R}^d$ one has 
\[
{\mathcal F}\bigl[\delta(\cdot-z)\bigr](\xi)=(2\pi)^{-{d/2}}{\rm e}^{-{\rm i}z\cdot\xi},\qquad \xi\in{\mathbb R}^d.
\]  

For every $f\in\ell^2(\varepsilon{\mathbb Z}^d)$, 
consider the $\varepsilon$-periodic moderate-growth distribution  
\[
\tilde{f}=\sum_{z\in\varepsilon{\mathbb Z}^d}f(z)\delta(\cdot-z).
\]
Note that for all $f\in\ell^2(\varepsilon{\mathbb Z}^d)$ one has 
\begin{equation}
(\widetilde{\mathcal U}_\varepsilon f)(\xi)=\biggl(\frac{\varepsilon^2}{2\pi}\biggr)^{d/2}\sum_{z\in\varepsilon{\mathbb Z}^d}f(z){\rm e}^{-{\rm i}z\cdot\xi}=\varepsilon^d({\mathcal F}\tilde{f})(\xi),\qquad \xi\in{\mathbb R}^d.
\label{UFlink_f}
\end{equation}

Furthermore, fix $\Psi\in C_0^\infty({\mathbb R}^d)$ that is nonzero in $\Omega$ and vanishes in $\RR^d\setminus\Omega$. The Fourier transform of $\Psi$ satisfies the standard decay estimate: for each $l>0$, there is $c_l>0$ such that
\begin{equation}
\bigl|({\mathcal F}\Psi)(\xi)\bigr|\le\frac{c_l}{(|\xi|^2+1)^{l/2}},\qquad \xi\in{\mathbb R}^d.
\label{Fourier_estimate}
\end{equation}
Recall that, by the Poisson summation formula, one has 
\begin{equation}\label{PS}
\varepsilon^d\sum_{y\in\varepsilon{\mathbb Z}^d}\Psi(y){\rm e}^{-{\rm i}y\cdot\xi}=(2\pi)^{d/2}\sum_{m\in{\mathbb Z}^d}{\mathcal F}\bigl[\Psi(x){\rm e}^{-{\rm i}x\cdot\xi}\bigr](2\pi m/\varepsilon)\equiv(2\pi)^{d/2}\sum_{m\in{\mathbb Z}^d}({\mathcal F}\Psi)(\xi+2\pi m/\varepsilon).
\end{equation}
It follows that
\begin{equation}
({\mathcal U}_\varepsilon\Psi)(\xi)=\sum_{m\in{\mathbb Z}^d}({\mathcal F}\Psi)(\xi+2\pi m/\varepsilon)=({\mathcal F}\Psi)(\xi)+{\mathcal E}_\varepsilon(\xi),\qquad
\xi\in{\mathbb R}^d,
\label{UFlink_Psi}
\end{equation}
where, by virtue of (\ref{Fourier_estimate}) and noting that $|\xi_j|\le\pi/\varepsilon$, $j=1,\dots,d$, one has, for $\xi\in\varepsilon^{-1}Q'$ and $l>d$, 
\begin{equation}
\begin{aligned}
\bigl\vert{\mathcal E}_\varepsilon(\xi)\bigr\vert&\le\sum_{m\in{\mathbb Z}^d\setminus\{0\}}\frac{c_l}{(|\xi+2\pi m/\varepsilon|^2+1)^{l/2}}<
c_l\biggl(\frac{\varepsilon}{\pi}\biggr)^l\sum_{m\in{\mathbb Z}^d\setminus\{0\}}(2|m|-1)^{-l}<\tilde{c}_l\varepsilon^l, 
\\[0.3em]
\tilde{c}_l&:=c_l\pi^{-l}\sum_{m\in{\mathbb Z}^d\setminus\{0\}}|m|^{-l}.
\end{aligned}
\label{E1_est}
\end{equation}

\begin{lemma}
\label{Fourier_conv}
For $\Psi\in L^2({\mathbb R}^d)$ and $f\in\ell^2(\varepsilon{\mathbb Z}^d)$ one has 
\begin{equation*}
\widetilde{\mathcal U}_\varepsilon[\Psi f]=(2\pi)^{-d/2}\widetilde{\mathcal U}_\varepsilon\Psi*{\mathcal U}_\varepsilon f.
\end{equation*}

\end{lemma}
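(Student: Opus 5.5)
The plan is to work directly with the explicit series definitions and reduce the identity to the classical fact that the Fourier series of a pointwise product on a lattice is the (normalized) periodic convolution of the Fourier series. Here $\Psi$ on the left means the restriction $\Psi|_{\varepsilon\mathbb Z^d}$, so that $\Psi f\in\ell^2(\varepsilon\mathbb Z^d)$ whenever $\Psi$ is bounded on the lattice, as in our application with $\Psi\in C_0^\infty$. The symbol $\widetilde{\mathcal U}_\varepsilon\Psi$ is the $\varepsilon^{-1}Q'$-periodic trigonometric series $\bigl(\varepsilon^2/(2\pi)\bigr)^{d/2}\sum_{y}\Psi(y)e^{-iy\cdot\xi}$ from (\ref{UFlink_f}). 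The convolution is over $\mathbb R^d$. Since ${\mathcal U}_\varepsilon f$ is supported in $\varepsilon^{-1}Q'$ (extended by zero), it reduces to an integral over one period cell:
\[
(\widetilde{\mathcal U}_\varepsilon\Psi*{\mathcal U}_\varepsilon f)(\xi)=\int_{\varepsilon^{-1}Q'}(\widetilde{\mathcal U}_\varepsilon\Psi)(\xi-\eta)\,({\mathcal U}_\varepsilon f)(\eta)\,dV(\eta),
\]
which is manifestly $\varepsilon^{-1}Q'$-periodic in $\xi$, consistent with the left-hand side.

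First I would treat $f$ finitely supported (dense in $\ell^2$). In that case $\widetilde{\mathcal U}_\varepsilon[\Psi f]$ is a finite trigonometric sum $(\varepsilon^2/2\pi)^{d/2}\sum_z\Psi(z)f(z)e^{-iz\cdot\xi}$. On the right I substitute the series for $\widetilde{\mathcal U}_\varepsilon\Psi$ and the finite sum $({\mathcal U}_\varepsilon f)(\eta)=(\varepsilon^2/2\pi)^{d/2}\sum_z f(z)e^{-iz\cdot\eta}$ on the cell. Interchanging the finite $z$-sum with the integral, each term becomes
\[
\int_{\varepsilon^{-1}Q'}(\widetilde{\mathcal U}_\varepsilon\Psi)(\xi-\eta)e^{-iz\cdot\eta}\,dV(\eta).
\]
This is the $z$-th Fourier coefficient of the periodic function $\widetilde{\mathcal U}_\varepsilon\Psi$, shifted. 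I use the orthogonality relation $\int_{\varepsilon^{-1}Q'}e^{i(y-z)\cdot\eta}dV(\eta)=(2\pi/\varepsilon)^d\delta_{yz}$ for $y,z\in\varepsilon\mathbb Z^d$, which picks out the single term $y=z$. The term is then $(\varepsilon^2/2\pi)^{d/2}(2\pi/\varepsilon)^d\Psi(z)e^{-iz\cdot\xi}$.

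Collecting the constants gives
\[
(\varepsilon^2/2\pi)^{d}(2\pi/\varepsilon)^d=(2\pi)^{-d/2}\cdot(2\pi)^{d/2}\varepsilon^d\cdot(2\pi)^{-d/2}\cdot\ldots,
\]
and I would check carefully that the total is exactly $(2\pi)^{d/2}\times(\varepsilon^2/2\pi)^{d/2}$, matching the left side after the prefactor $(2\pi)^{-d/2}$. Explicitly, $(\varepsilon^2/2\pi)^{d}(2\pi/\varepsilon)^d=\varepsilon^d$ and $(2\pi)^{d/2}(\varepsilon^2/2\pi)^{d/2}=\varepsilon^d$, so the constants agree.

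The only genuine obstacle is justifying the termwise integration of the $\Psi$-series when $\Psi$ is merely $L^2$ on the lattice. For $\Psi\in C_0^\infty$ the series is finite, so nothing is needed. In general $\Psi|_{\varepsilon\mathbb Z^d}\in\ell^2$, so $\widetilde{\mathcal U}_\varepsilon\Psi\in L^2$ of the cell and the series converges in $L^2$ of the cell. The pairing with the bounded trigonometric polynomial ${\mathcal U}_\varepsilon f$ is continuous, so the interchange is legitimate.

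Finally, I would extend from finitely supported $f$ to all of $\ell^2$ by density. The left side is continuous in $f$ on each cell when $\Psi$ is bounded on the lattice. The right side is continuous in $f$ pointwise in $\xi$ by Cauchy–Schwarz, since $\|{\mathcal U}_\varepsilon f\|_{L^2}=\|f\|_{2,\varepsilon}$. Hence both sides agree for every $f\in\ell^2(\varepsilon\mathbb Z^d)$.
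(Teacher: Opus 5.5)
Your proof is correct, but it takes a different route from the paper's.

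The paper works on $\mathbb R^d$ with distributions. It writes $\widetilde{\mathcal U}_\varepsilon[\Psi f]=\varepsilon^d\mathcal F[\Psi\tilde f]$ with the weighted Dirac comb $\tilde f=\sum_z f(z)\delta(\cdot-z)$. It then applies the $\mathbb R^d$ convolution theorem to the product of $\Psi$ with this tempered distribution, which gives $(2\pi)^{-d/2}\mathcal F\Psi*\widetilde{\mathcal U}_\varepsilon f$. Next it moves the periodisation from $\mathcal U_\varepsilon f$ onto $\mathcal F\Psi$ by a change of variables. Finally it identifies $\sum_m(\mathcal F\Psi)(\cdot+2\pi m/\varepsilon)$ with $\widetilde{\mathcal U}_\varepsilon\Psi$ through the Poisson summation formula (\ref{UFlink_Psi}).

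You instead stay on the Brillouin cell. You expand both factors as lattice Fourier series and use the orthogonality of $e^{i(y-z)\cdot\eta}$ on $\varepsilon^{-1}Q'$, which is the discrete convolution theorem, and then extend from finitely supported $f$ by density.

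What your route buys:
\begin{itemize}
\item It is more elementary: no distributional products and no Poisson summation.
\item It makes explicit that only the lattice samples of $\Psi$ enter. The stated hypothesis $\Psi\in L^2(\mathbb R^d)$ must really be read as $\Psi|_{\varepsilon\mathbb Z^d}\in\ell^2$, or as $\Psi\in C_0^\infty$ in the application. The paper's argument tacitly needs $\Psi$ smooth for $\Psi\tilde f$ and for the Poisson formula anyway.
\item It justifies the interchanges of sums and integrals, which the paper passes over.
\end{itemize}

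What the paper's route buys: it produces along the way the aliasing form of $\widetilde{\mathcal U}_\varepsilon\Psi$ in terms of $\mathcal F\Psi$, which is used immediately afterwards to split $A_\varepsilon$. That representation is, however, available separately from (\ref{UFlink_Psi}), so nothing downstream depends on which proof of the lemma is used.

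One cosmetic fix: the line beginning ``Collecting the constants gives'' is garbled. Replace it with the explicit check you give right after it, $(\varepsilon^2/2\pi)^d(2\pi/\varepsilon)^d=\varepsilon^d=(2\pi)^{d/2}(\varepsilon^2/2\pi)^{d/2}$.
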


\begin{proof}
Using the identities (\ref{UFlink_f}), (\ref{tilde_no_tilde}), we obtain
\begin{equation*}
\begin{aligned}
\widetilde{\mathcal U}_\varepsilon[\Psi f]&=\varepsilon^d{\mathcal F}[\Psi\tilde{f}]=\biggl(\frac{\varepsilon^2}{2\pi}\biggr)^{\!\!d/2}{\mathcal F}\Psi*{\mathcal F}\tilde{f}=(2\pi)^{-d/2}{\mathcal F}\Psi*\widetilde{\mathcal U}_\varepsilon f=(2\pi)^{-d/2}{\mathcal F}\Psi*\biggl(\sum_{m\in{\mathbb Z}^d}({\mathcal U}_\varepsilon f)(\cdot+2\pi m/\varepsilon)\biggr)\\[0.3em]
&=(2\pi)^{-d/2}\biggl(\sum_{m\in{\mathbb Z}^d}({\mathcal F}\Psi)(\cdot+2\pi m/\varepsilon)\biggr)*{\mathcal U}_\varepsilon f=(2\pi)^{-d/2}\widetilde{\mathcal U}_\varepsilon\Psi*{\mathcal U}_\varepsilon f,
\end{aligned}
\end{equation*}
as claimed.
\end{proof}

\subsection{Asymptotics of resolvents and convergence of $\mu$ as an eigenvalue}

Within this section, $F$ is an arbitrary function in $L^2({\mathbb R}^d)$. For each $\varepsilon$, the transform ${\mathcal U}_\varepsilon$ maps $f\in L^2(\varepsilon{\mathbb Z}^d)$ to functions in $L^2(\varepsilon^{-1}Q')$, and $\widetilde{\mathcal U}_\varepsilon$ is its $\varepsilon^{-1}Q'$-periodic extension, see (\ref{tilde_no_tilde}).  Consider the operator $A_\varepsilon$ on $L^2({\mathbb R}^d)$ defined in the following way: 
\begin{equation}
(A_\varepsilon F)(\xi)=\mathds{1}_{\varepsilon^{-1}Q'}(\xi)\int_{\varepsilon^{-1}Q'}(\widetilde{\mathcal U}_\varepsilon\Psi)(\xi-\eta)R_\varepsilon(\eta)F(\eta)dV(\eta),\qquad \xi\in{\mathbb R}^d.
\label{Aeps_def}
\end{equation}

Applying the operator $\widetilde{\mathcal U}_\varepsilon$ to both sides of (\ref{BS7}), in view of the discussion at the end of Section \ref{sec:interp}, yields 
\[
\widetilde{\mathcal U}_\varepsilon\bigl[\Psi\,\cR(k,\eps)\Psi\bar u\bigr]=(3\mu)^{-1}\widetilde{\mathcal U}_\varepsilon[\Psi\bar u].
\]
Then, using Lemma \ref{Fourier_conv}, we obtain
\[
(2\pi)^{-d/2}\widetilde{\mathcal U}_\varepsilon\Psi*{\mathcal U}_\varepsilon\bigl[{\cR}(k,\eps)\Psi\bar u\bigr]=(3\mu)^{-1}\widetilde{\mathcal U}_\varepsilon[\Psi\bar u],
\]
and hence, in view of the definition of $R_\varepsilon$ (see (\ref{Repsilon_def})) and the symbol of the operator ${\mathcal R}(k, \varepsilon)$ (see (\ref{Rkeps_symbol})), we have 
\begin{equation}
(2\pi)^{-d/2}\mathds{1}_{\varepsilon^{-1}Q'}(\xi)\,\widetilde{\mathcal U}_\varepsilon\Psi*R_\eps\,{\mathcal U}_\varepsilon[\Psi\bar u]
=(3\mu)^{-1}{\mathcal U}_\varepsilon[\Psi\bar u].
\label{transform_eq}
\end{equation}
The equality (\ref{transform_eq}) can be written as 
\begin{equation}
(2\pi)^{-d/2}A_\varepsilon\,{\mathcal U}_\varepsilon[\Psi\bar u]=(3\mu)^{-1}{\mathcal U}_\varepsilon[\Psi\bar u].
\label{whole_space_eig}
\end{equation}

Note that, by virtue of the first equality in (\ref{UFlink_Psi}), one has (for all $F\in L^2({\mathbb R}^d)$) 
\begin{align*}
A_\varepsilon F&=\mathds{1}_{\varepsilon^{-1}Q'}\int_{\varepsilon^{-1}Q'}\sum_{m\in{\mathbb Z}^d}({\mathcal F}\Psi)(\cdot-2\pi\varepsilon^{-1}m-\eta)R_\varepsilon(\eta)F(\eta)dV(\eta)\\[0.3em]
&\hspace{2cm}=\mathds{1}_{\varepsilon^{-1}Q'}\int_{\varepsilon^{-1}Q'}({\mathcal F}\Psi)(\cdot-\eta)R_\varepsilon(\eta)F(\eta)dV(\eta)\\[0.3cm]
&\hspace{3cm}+\mathds{1}_{\varepsilon^{-1}Q'}\int_{\varepsilon^{-1}Q'}\biggl(\sum_{m\in{\mathbb Z}^d\setminus\{0\}}({\mathcal F}\Psi)(\cdot-2\pi\varepsilon^{-1}m-\eta)\biggr)R_\varepsilon(\eta)F(\eta)dV(\eta).
\end{align*}
In view of Lemma \ref{Reps_R0_diff}, as $\varepsilon\to0$ one has the following convergence on the sense of $L^2({\mathbb R}^d):$ 
\begin{equation}
\begin{aligned}
\mathds{1}_{\varepsilon^{-1}Q'}\int_{\varepsilon^{-1}Q'}({\mathcal F}\Psi)(\cdot-\eta)R_\varepsilon(\eta)F(\eta)dV(\eta)\to\int_{{\mathbb R}^d}({\mathcal F}\Psi)(\cdot-\eta)R_0(\eta)F(\eta)dV(\eta)
=:A_0F,
\end{aligned}
\label{A0_def}
\end{equation}
with an $O(\varepsilon)$ convergence error that is uniform with respect to $F\in L^2({\mathbb R}^d)$.
Furthermore, 
\begin{equation} 
\begin{aligned}
\int_{\varepsilon^{-1}Q'}&\biggl(\sum_{m\in{\mathbb Z}^d\setminus\{0\}}({\mathcal F}\Psi)(\cdot-2\pi\varepsilon^{-1}m-\eta)\biggr)R_\varepsilon(\eta)F(\eta)dV(\eta)\\[0.3em]
&=\int_{\varepsilon^{-1}Q'/2}\biggl(\sum_{m\in{\mathbb Z}^d\setminus\{0\}}({\mathcal F}\Psi)(\cdot-2\pi\varepsilon^{-1}m-\eta)\biggr)R_\varepsilon(\eta)F(\eta)dV(\eta)\\[0.3em]
&\hspace{1.2cm}+\int_{\varepsilon^{-1}(Q'\setminus(Q'/2))}\biggl(\sum_{m\in{\mathbb Z}^d\setminus\{0\}}({\mathcal F}\Psi)(\cdot-2\pi\varepsilon^{-1}m-\eta)\biggr)R_\varepsilon(\eta)F(\eta)dV(\eta).
\end{aligned}
\label{sumof2}
\end{equation}
The sum under the first integral in (\ref{sumof2}) converges to zero as $\varepsilon\to0$, since the argument in each its term is bounded below as follows.
On the one hand, for $\xi\in\varepsilon^{-1}Q$,
\begin{align*}
\bigl|\xi-2\pi\varepsilon^{-1}m-\eta\bigr|&\ge \sqrt{d}\min_j\bigl\{|2\pi\varepsilon^{-1}m_j-(\xi_j-\eta_j)|\bigr\}\\[0.3em]
&\ge\sqrt{d}\min_j\bigl\{2\pi\varepsilon^{-1}|m_j|-(|\xi_j|+|\eta_j|)|\bigr\} \ge\sqrt{d}\varepsilon^{-1}(2\pi-3\pi/2)=\varepsilon^{-1}\pi\sqrt{d}/2,
\end{align*}
On the other hand, again for $\xi\in\varepsilon^{-1}Q'$,
\begin{align*}
\bigl|\xi-2\pi\varepsilon^{-1}m-\eta\bigr|\ge2\pi\varepsilon^{-1}|m|-|\xi-\eta|\ge \varepsilon^{-1}(2\pi|m|-3\sqrt{d}\pi/2),
\end{align*}
since for $\xi\in\varepsilon^{-1}Q'$, $\eta\in\varepsilon(Q'\setminus(Q'/2))$ one has $|\xi-\eta|\le\varepsilon^{-1}\sqrt{d(\pi+\pi/2)^2}=\varepsilon^{-1}3\sqrt{d}\pi/2$. Now, for the sum in question we have (cf. (\ref{E1_est}))
\begin{align}
&\bigg|\sum_{m\in{\mathbb Z}^d\setminus\{0\}}({\mathcal F}\Psi)(\cdot-2\pi\varepsilon^{-1}m-\eta)\biggr|\nonumber\\[0.3em]
&\le\biggl|\sum_{m\in{\mathbb Z}^d: 0<|m|<\sqrt{d}}({\mathcal F}\Psi)(\cdot-2\pi\varepsilon^{-1}m-\eta)\biggr|+\biggl|\sum_{m\in{\mathbb Z}^d: |m|\ge\sqrt{d}}({\mathcal F}\Psi)(\cdot-2\pi\varepsilon^{-1}m-\eta)\biggr|\label{2abs_values}\\[0.3em]
&\hspace{0.4cm}\le \sum_{m\in{\mathbb Z}^d: 0<|m|<\sqrt{d}}\frac{c_l}{(|\varepsilon^{-1}\pi\sqrt{d}/2|^2+1)^{l/2}}+
\sum_{m\in{{\mathbb Z}^d:|m|\ge\sqrt{d}}}\frac{c_l}{(|\varepsilon^{-1}(2\pi|m|-3\sqrt{d}\pi/2)|^2+1)^{l/2}}.
\label{final_line_2terms}
\end{align}
(Note that the first term in (\ref{2abs_values}) vanishes in the case $d=1$, and so the first term in the parentheses in (\ref{final_line_2terms}) is absent.) 

We estimate the second integral in (\ref{sumof2}) by using Young's convolution inequality, as follows:
\begin{align*}
 \biggl\Vert\int_{\varepsilon^{-1}(Q'\setminus(Q'/2))}&\biggl(\sum_{m\in{\mathbb Z}^d\setminus\{0\}}
 {\mathcal F}\Psi(\cdot-2\pi\varepsilon^{-1}m-\eta)\biggr)R_\varepsilon(\eta)F(\eta)dV(\eta)\biggr\Vert_{L^2(\varepsilon^{-1}Q')}
 \\[0.3em]
 &\le\biggl\Vert\sum_{m\in{\mathbb Z}^d\setminus\{0\}}
 {\mathcal F}\Psi(\cdot-2\pi\varepsilon^{-1}m)\biggr\Vert_{L^1(\varepsilon^{-1}Q')}\Vert R_\varepsilon F\Vert_{L^2(\varepsilon^{-1}(Q'\setminus(Q'/2)))} 
 \le C\varepsilon^2\Vert F\Vert_{L^2({\mathbb R}^d)}, 
 \end{align*}
 since $|R_\varepsilon(\eta)|\le C\varepsilon^2$ for $\eta\in \varepsilon^{-1}(Q'\setminus(Q'/2))$ by virtue of the estimate (\ref{R_est_eps_squared}) utilised with $x_0=\pi/2$.

We have thus proved the following result. 
\begin{theorem}
\label{convergence_theorem}
There exist $\varepsilon_0, C>0$ such that  
\[
\bigl\Vert A_\varepsilon-A_0\bigr\Vert_{L^2({\mathbb R}^d)\to L^2({\mathbb R}^d)}\le C\varepsilon\qquad \forall\varepsilon\in(0,\varepsilon_0].
\]
where the operators $A_\varepsilon$, $A_0$ are defined by (\ref{Aeps_def}), (\ref{A0_def}), respectively.
\end{theorem}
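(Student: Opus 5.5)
We decompose $A_\varepsilon-A_0$ into three pieces and bound each in operator norm on $L^2(\RR^d)$ by $C\varepsilon$, with $C$ independent of $F$. Using the Poisson-summation form (\ref{UFlink_Psi}) of $\widetilde{\mathcal U}_\varepsilon\Psi$, write $A_\varepsilon=B_\varepsilon+E_\varepsilon$. Here $B_\varepsilon$ is the operator with kernel $\mathds{1}_{\varepsilon^{-1}Q'}(\xi)({\mathcal F}\Psi)(\xi-\eta)R_\varepsilon(\eta)$, where we recall that $R_\varepsilon$ vanishes outside $\varepsilon^{-1}Q'$ by (\ref{Repsilon_def}). The remainder $E_\varepsilon$ carries the aliased terms $m\neq0$. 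Then
\[
A_\varepsilon-A_0=\bigl(B_\varepsilon-\mathds{1}_{\varepsilon^{-1}Q'}A_0\bigr)-\bigl(1-\mathds{1}_{\varepsilon^{-1}Q'}\bigr)A_0+E_\varepsilon .
\]

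For the first piece, $B_\varepsilon-\mathds{1}_{\varepsilon^{-1}Q'}A_0=\mathds{1}_{\varepsilon^{-1}Q'}\,\mathcal C\,(R_\varepsilon-R_0)$. Here $\mathcal C$ denotes convolution with ${\mathcal F}\Psi$, which is bounded on $L^2$ by Young's inequality with norm at most $\|{\mathcal F}\Psi\|_{L^1}<\infty$ by (\ref{Fourier_estimate}). Lemma~\ref{Reps_R0_diff} gives $\|R_\varepsilon-R_0\|_\infty\le C\varepsilon$ uniformly on $\RR^d$, including the region outside $\varepsilon^{-1}Q'$. This piece is therefore $O(\varepsilon)$.

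For the second piece, the output is localized to $\xi\notin\varepsilon^{-1}Q'$, i.e.\ $|\xi|\ge\pi/\varepsilon$. We split the $\eta$-integral at $|\eta|=\pi/(2\varepsilon)$.
\begin{itemize}
\item For $|\eta|\ge\pi/(2\varepsilon)$ we have $R_0(\eta)\le C\varepsilon^2$, and Young's inequality gives $O(\varepsilon^2)$.
\item For $|\eta|<\pi/(2\varepsilon)$ we have $|\xi-\eta|\ge\pi/(2\varepsilon)$. The kernel is then bounded by a truncated tail of ${\mathcal F}\Psi$, whose $L^1$ norm is $O(\varepsilon^l)$ for any $l$ by (\ref{Fourier_estimate}). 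Combined with $R_0\le a^{-1}$ and Young's inequality, this part is $O(\varepsilon^l)$.
\end{itemize}

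The remainder $E_\varepsilon$ is the main obstacle, because $\widetilde{\mathcal U}_\varepsilon\Psi$ is periodic and does not decay. We follow (\ref{sumof2}) and split the $\eta$-integral into $\varepsilon^{-1}Q'/2$ and $\varepsilon^{-1}(Q'\setminus Q'/2)$.

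On the inner region $\eta\in\varepsilon^{-1}Q'/2$, with $\xi\in\varepsilon^{-1}Q'$, every aliased argument satisfies $|\xi-\eta-2\pi m/\varepsilon|\gtrsim|m|/\varepsilon$. The key point is to bound the \emph{kernel} rather than the pointwise sum alone. Using (\ref{Fourier_estimate}) with large $l$, we get
\[
\sup_\xi\int|k(\xi,\eta)|\,d\eta+\sup_\eta\int|k(\xi,\eta)|\,d\xi\le C\varepsilon^{l-d}\cdot\varepsilon^{-d}\,\|R_\varepsilon\|_\infty,
\]
since the domains have volume $O(\varepsilon^{-d})$. Schur's test then yields $O(\varepsilon)$ once $l>2d+1$. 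Alternatively, bound the Hilbert--Schmidt norm by $\sup|k|\cdot|\varepsilon^{-1}Q'|=O(\varepsilon^{l-2d})$.

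On the outer region, $|R_\varepsilon|\le C\varepsilon^2$ by (\ref{R_est_eps_squared}) with $x_0=\pi/2$. Young's inequality with the $L^1(\varepsilon^{-1}Q')$ norm of the aliased sum $\sum_{m\ne0}{\mathcal F}\Psi(\cdot-2\pi m/\varepsilon)$ finishes the estimate. That norm is at most the full-period $L^1$ norm of $\widetilde{\mathcal U}_\varepsilon\Psi$, which equals $\|{\mathcal F}\Psi\|_{L^1(\RR^d)}$, so it is bounded uniformly in $\varepsilon$. Care is only needed over translated domains, since $\xi-\eta$ ranges over $2\varepsilon^{-1}Q'$, covering at most $2^d$ periods. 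This piece is $O(\varepsilon^2)$.

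Collecting the three pieces gives $\|A_\varepsilon-A_0\|\le C\varepsilon$ for all $\varepsilon\in(0,\varepsilon_0]$, with $\varepsilon_0$ taken from Lemma~\ref{Reps_R0_diff}.
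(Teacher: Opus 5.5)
Your proposal is correct and follows essentially the same route as the paper. You separate the $m=0$ term from the aliased terms and control the former by Lemma~\ref{Reps_R0_diff} together with Young's inequality. You then split the aliased part at $\eta\in\varepsilon^{-1}Q'/2$, using the decay (\ref{Fourier_estimate}) on the inner region and $|R_\varepsilon|\le C\varepsilon^2$ on the outer one. Along the way you make explicit three steps the paper leaves implicit:
\begin{itemize}
\item the tail $(1-\mathds{1}_{\varepsilon^{-1}Q'})A_0$;
\item the Schur-test conversion of the pointwise aliasing bound into an operator-norm bound;
\item the fact that $\xi-\eta$ ranges over $2\varepsilon^{-1}Q'$ rather than $\varepsilon^{-1}Q'$ in the final Young estimate.
\end{itemize}

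One small wording fix is needed in the outer-region estimate. Because of possible cancellations, the aliased sum is not pointwise dominated by $\widetilde{\mathcal U}_\varepsilon\Psi$. Instead, bound its $L^1$ norm over a period directly by $\int\sum_{m\neq0}|{\mathcal F}\Psi(\cdot-2\pi m/\varepsilon)|\le\|{\mathcal F}\Psi\|_{L^1(\RR^d)}$.
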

\begin{corollary}
As $\varepsilon\to0$, the limit of every convergent sequence of eigenvalues of the problem (\ref{whole_space_eig}) is an  eigenvalue $(3\mu)^{-1}$ and 
(the Fourier transform of the corresponding eigenspace of the problem)
\begin{equation}
\Psi{\mathcal R}(k,0)[\Psi u]=(3\mu)^{-1}\Psi u
\label{eff_eig} 
\end{equation} 
The corresponding eigenspaces converge (in an appropriate sense) to the Fourier transform of the eigenspace for (\ref{eff_eig}).  

Conversely, for an eigenvalue  $(3\mu)^{-1}$ of (\ref{eff_eig}), there is a sequence of eigenvalues of (\ref{whole_space_eig}) indexed by $\varepsilon$ that converges to it as $\varepsilon\to0$, a statement regarding the convergence of the corresponding eigenspaces holds. 
 \end{corollary}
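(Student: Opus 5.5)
The plan is to deduce the corollary from Theorem~\ref{convergence_theorem} by standard analytic perturbation theory for isolated eigenvalues of bounded operators under norm convergence. The first step is to establish that $A_0$ and each $A_\varepsilon$ are compact on $L^2(\RR^d)$. For $A_0$, write $A_0=\mathcal F\,\Psi\,\cR(k,0)\,\mathcal F^{-1}$. This uses the convolution theorem, and $\Psi\cR(k,0)$ is compact because $\Psi\in C_0^\infty$ and $R_0$ decays like $|\xi|^{-2}$; the Rellich criterion, or Hilbert--Schmidt after factoring $\Psi\cR^{1/2}\cdot\cR^{1/2}$ with $d\le3$, applies. For $A_\varepsilon$, Lemma~\ref{Fourier_conv} together with \eqref{transform_eq} identifies $(2\pi)^{-d/2}A_\varepsilon$ with $\mathcal U_\varepsilon\Psi(\varepsilon\cdot)\cR(k,\varepsilon)\mathcal U_\varepsilon^{-1}$ on $\ran\,\mathcal U_\varepsilon$, and it vanishes on the complement. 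Its range consists of Fourier images of functions supported on the finite set $\Omega_\varepsilon$, so $A_\varepsilon$ has finite rank. Hence every nonzero spectral point of $A_\varepsilon$ and of $A_0$ is an isolated eigenvalue of finite algebraic multiplicity. Moreover, these nonzero eigenvalues coincide, with eigenspaces related by $\mathcal F$, respectively $\mathcal U_\varepsilon$, with those of \eqref{whole_space_eig} and \eqref{eff_eig}.

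For the forward direction, let $\lambda_\varepsilon\to\lambda_*$ be eigenvalues of $(2\pi)^{-d/2}A_\varepsilon$ with $\lambda_*\neq0$. I would argue by contradiction. If $\lambda_*$ were in the resolvent set of $A_0$ (after rescaling), then the estimate $\|(A_0-z)^{-1}\|\le M$ holds near $\lambda_*$. The Neumann series with $\|A_\varepsilon-A_0\|\le C\varepsilon$ from Theorem~\ref{convergence_theorem} then gives invertibility of $A_\varepsilon-\lambda_\varepsilon$ for small $\varepsilon$, which is a contradiction. The case $\lambda_*=0$ has to be excluded or treated as degenerate, since the nonzero eigenvalues are the ones that define $\mu$.

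For the converse direction, fix a nonzero eigenvalue $\lambda_*$ of $A_0$ and a small circle $\Gamma$ around it that encloses no other spectrum. The Riesz projections
\begin{equation*}
P_\varepsilon=-\frac{1}{2\pi i}\oint_\Gamma (A_\varepsilon-z)^{-1}\,dz,\qquad P_0=-\frac{1}{2\pi i}\oint_\Gamma (A_0-z)^{-1}\,dz,
\end{equation*}
are well defined for small $\varepsilon$, and the second resolvent identity gives $\|P_\varepsilon-P_0\|=O(\varepsilon)$. Consequently $\operatorname{rank}P_\varepsilon=\operatorname{rank}P_0$ for small $\varepsilon$, so $A_\varepsilon$ has eigenvalues inside $\Gamma$ with total multiplicity equal to that of $\lambda_*$. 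Shrinking $\Gamma$ yields a sequence of eigenvalues converging to $\lambda_*$. The eigenspaces converge in the sense of the gap between $\ran P_\varepsilon$ and $\ran P_0$. Eigenvectors are obtained as $P_\varepsilon F_0$ for an eigenvector $F_0$ of $A_0$, and $\|P_\varepsilon F_0-F_0\|=O(\varepsilon)$.

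The main obstacle I expect is the non-self-adjointness of $A_\varepsilon$ and $A_0$, which are similar to $\Psi\cR$ rather than self-adjoint. Ascent could exceed one, so eigenvector convergence holds within Riesz spaces and not necessarily within eigenspaces. To avoid this, I would first symmetrize. Taking $\Psi\ge0$, or using $|\Psi|^{1/2}$ factorizations, $\Psi\cR$ is similar to the self-adjoint operator $\cR^{1/2}\Psi\cR^{1/2}$. The nonzero eigenvalues are then real and semisimple, the Riesz projection coincides with the spectral projection onto the eigenspace, and the corollary's statement about eigenspaces holds literally.
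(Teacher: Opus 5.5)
Your proposal is correct, and it takes a different and more complete route than the paper.

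\textbf{What the paper does.} Its proof passes to the limit $\eps\to0$ directly in (\ref{whole_space_eig}), using Theorem~\ref{convergence_theorem}, to reach $(2\pi)^{-d/2}A_0\,\cF[\Psi u]=(3\mu)^{-1}\cF[\Psi u]$, and then inverts the Fourier transform. Made rigorous, this limit passage needs a compactness step that the paper leaves implicit. For normalized eigenvectors one has $(2\pi)^{-d/2}A_0F_\eps=\lambda_\eps F_\eps+O(\eps)$. Compactness of $A_0$ then gives a strongly convergent subsequence, whose limit is nonzero because $\lambda_*\neq0$. The converse direction and the eigenspace statement are not argued in the paper's proof at all.

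\textbf{What you do instead.} You use resolvent-based perturbation theory. A Neumann series gives upper semicontinuity of the spectrum, which handles the forward direction. Riesz projections with $\|P_\eps-P_0\|=O(\eps)$ handle the converse: they give nearby eigenvalues, preserve total algebraic multiplicity, and give $O(\eps)$ gap convergence of the spectral subspaces. This is exactly the ``transport of isolated eigenvalues and Riesz subspaces'' that the introduction announces but the proof never carries out. The paper's route buys brevity and a direct identification of the limit eigenfunction as $\cF[\Psi u]$. Yours buys both directions, multiplicity control and a rate.

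\textbf{On your symmetrization step.} It is a genuine addition, but $\Psi\ge0$ is unnecessary. For real $\Psi$ the operator $\cR^{1/2}\Psi\cR^{1/2}$ is already self-adjoint. The identity $\cR^{1/2}(\Psi\cR-\lambda)^2=(\cR^{1/2}\Psi\cR^{1/2}-\lambda)^2\cR^{1/2}$, together with injectivity of $\cR^{1/2}$, then shows rigorously that the nonzero eigenvalues of $\Psi\cR$ are real and semisimple. This is worth stating, because the ``similarity'' you invoke formally passes through the unbounded operator $\cR^{-1/2}$.

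\textbf{Small slips.}
\begin{itemize}
\item The correct identity is $(2\pi)^{-d/2}A_0=\cF\,\Psi\,\cR(k,0)\,\cF^{-1}$; a factor $(2\pi)^{d/2}$ is missing in your first formula.
\item $\Psi\cR^{1/2}$ is Hilbert--Schmidt only for $d=1$. For $d\le3$ it is $\Psi\cR(k,0)$ itself that is Hilbert--Schmidt, since $R_0\in L^2(\RR^d)$ exactly when $d\le3$. The Rellich argument covers all $d$ in any case.
\item If a multiple eigenvalue splits for $\eps>0$, then $P_\eps F_0$ lies in the total eigenspace of the cluster inside $\Gamma$, not necessarily in a single eigenspace. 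The eigenspace convergence should therefore be stated for the cluster.
\end{itemize}
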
  
 \begin{proof}
 Passing to the limit as $\varepsilon\to0$ in (\ref{whole_space_eig}), we obtain
 \begin{equation}
(2\pi)^{-d/2}A_0\,{\mathcal F}[\Psi u]=(3\mu)^{-1}{\mathcal F}[\Psi u].
\label{Fourier_eff}
 \end{equation}
Recall the definition of the operator $A_0$ as the multiplication by $R_0$ followed by convolution with ${\mathcal F}\Psi$. Taking the inverse Fourier transform now yields the claim. 
\end{proof}

\section{Spectrally embedded defect states in bilayer structures}\label{sec:BIC} 

The main promise of this work is carried out in this section.  We follow the strategy described in Sections~\ref{sec:defect_eigenvalues} and~\ref{sec:heuristics1} for constructing defect states, using $\ell\!=\!1$ in the computations.

\subsection{Construction of defect states}

Let us take the defect function $\Psi$ to be a positive cutoff function with maximal value $1$ and tapering to $0$ at the boundary of the continuum defect region $\Omega$.  Denote by $\alpha=\alpha_{\rm f}$ the Robin parameter in the ambient medium and $\alpha=\alpha_{\rm d}$ for the maximal value of the Robin parameter in the defect region.  Thus the strength of the added potential in the defect region~is
\[
\mu\;=\;\alpha_{\rm d}-\alpha_{\rm f}.
\]
The problem (\ref{eff_eig}) now reads
\begin{equation}
\Psi\left( -\Delta - V_\free(k,0) \right)^{-1}[\Psi u]=\frac{1}{3(\alpha_{\rm f}-\alpha_{\rm d})}\Psi u,
\label{transformed_effective}
\end{equation}
or, in the Fourier representation (\ref{Fourier_eff}),
\[
(2\pi)^{-d/2}A_0{\mathcal F}[\Psi u]=\frac{1}{3(\alpha_{\rm f}-\alpha_{\rm d})}{\mathcal F}[\Psi u].
\]

Recall from (\ref{A0_def}) that 
\[
A_0F=\int_{{\mathbb R}^d}({\mathcal F}\Psi)(\cdot-\eta)R_0(\eta)F(\eta)dV(\eta),\qquad F\in L^2({\mathbb R}^d),
\]
where (see (\ref{R0_formula}))
\[
R_0(\xi)=\left(|\xi|^2 -  V_\free(k,0)\right)^{-1}.
\]
Theorem \ref{convergence_theorem} justifies seeking eigenvalues of the operator $A_0$ for values of $k$ such that $V_\free(k,0)<0$. Each of these that has an eigenfunction of the form ${\mathcal F}[\Psi u]$, $u\in L^2({\mathbb R}^d)$ for some $u\in L^2({\mathbb R}^d)$ is also an eigenvalue (for the same value of $k$) of the operator (cf. (\ref{transformed_effective})) $\Psi\left( -\Delta - V_\free(k,0) \right)^{-1}$ --- the function $\Psi u$ is then an eigenfunction of the latter. 
The opposite also holds: every eigenvalue of $\Psi\left( -\Delta - V_\free(k,0) \right)^{-1}$, whose eigenfunctions necessarily have the form $\Psi u$ for some $u\in L^2({\mathbb R}^d)$ are eigenvalues of $A_0$ whose eigenspace contains the function ${\mathcal F}[\Psi u]$.  The eigenvalue $\lambda$ then generates a value of the parameter $\alpha_d$.

\subsection{Overlapping bands and gaps}
\label{decoupling_sec}

We seek values $k$ such that $V_\mathrm{f,D}(k,0)<0$ and $V_\mathrm{f,N}(k,0)>0$, and vice versa. To this end, notice first that the values $k$ at which $V_\mathrm{f,D}(k,0)$ is singular are those for which $\sin k=0$, i.e. $k=n\pi$, $n\in{\mathbb Z}$, and the values $k$ at which $V_\mathrm{f,N}(k,0)$ is singular are those for which $\cos k=0$ or $\cos(k/2)=0$ i.e. $k=\pi/2+n\pi$, $n\in{\mathbb Z}$, or $k=2\pi m+\pi$, $n\in{\mathbb Z}$.

We start by investigating the behaviour of $V_\mathrm{f,D}(k,0)$ and $V_\mathrm{f,N}(k,0)$ in the vicinity of the singularities of  $V_\mathrm{f,D}(k,0)$ given by $2\pi n$, $n\in{\mathbb Z}$. For $n\in{\mathbb Z}$ and sufficiently small $\sigma$, one has 
\begin{align*}
V_\mathrm{f,D}(2\pi n+\sigma,0)/3&=2d(2\pi n+\sigma)\tan(\pi n+\sigma/2)
-l(2\pi n+\sigma)\cot(2\pi n+\sigma)
+3d(2\pi n+\sigma)^2-\alpha\\[0.3em]
&=2d(2\pi n+\sigma)
\tan(\sigma/2)
-l(2\pi n+\sigma)\cot\sigma
+3d(2\pi n+\sigma)^2-\alpha,
\end{align*}
which is negative as $\sigma\searrow0$. 
At the same time,
\[
V_\mathrm{f,N}(2\pi n+\sigma,0)/3=2d(2\pi n+\sigma)\tan(\sigma/2)
+l(2\pi n+\sigma)\tan\sigma
+3d(2\pi n+\sigma)^2-\alpha=12\pi^2dn^2-\alpha+O(\sigma).
\]  
Therefore, for $n\in{\mathbb Z}$ such that $12\pi^2dn^2>\alpha$, a Neumann band overlaps a Dirichlet gap in the vicinity of points $2\pi n+\sigma$ with such $n$.

At the values $k$ that are singularities for both $V_\mathrm{f,D}(k,0)$ and $V_\mathrm{f,N}(k,0)$, i.e. $k=(2n+1)\pi+\sigma$, $n\in{\mathbb Z}$, we have, one the one hand,
\begin{align*}
V_\mathrm{f,D}((2n+1)\pi+\sigma,0)/3&=2d(\pi(2n+1)+\sigma)\tan(\pi n+\pi/2+\sigma/2)\\[0.2em]
&-l((2n+1)\pi+\sigma)\cot((2n+1)\pi+\sigma)
+3d((2n+1)\pi+\sigma)^2-\alpha\\[0.3em]
&=2d((2n+1)\pi+\sigma)
\cot(\sigma/2)
-l(\pi(2n+1)+\sigma)\cot\sigma
+3d((2n+1)\pi+\sigma)^2-\alpha,
\end{align*}
which is negative as $\sigma\searrow0$. On the other hand,  
\begin{align*}
V_\mathrm{f,N}((2n+1)\pi+\sigma,0)/3&=2d((2n+1)\pi+\sigma)\cot(\sigma/2)\\[0.2em]
&+l((2n+1)\pi+\sigma)\tan\sigma
+3d((2n+1)\pi+\sigma)^2-\alpha,
\end{align*}
which is negative as $\sigma\searrow0$. 

Finally, we consider the remaining singularities of $V_\mathrm{f,N}(k,0)$.  First, for $n\in{\mathbb Z}$ and sufficiently small $\sigma$, one has
\begin{align*}
V_\mathrm{f,N}(\pm\pi/2+2\pi n+\sigma,0)/3&=2d(\pm\pi/2+2\pi n+\sigma)\tan(\pm\pi/4+\pi n+\sigma/2)\\[0.3em]
&+l(\pm\pi/2+2\pi n+\sigma)\frac{1+O(\sigma)}{-\sigma+O(\sigma^2)}
+3d(\pm\pi/2+2\pi n+\sigma)^2-\alpha,
\end{align*}
which is negative as $\sigma\searrow0$ for $n=1,2,3,\dots$. Near such values of $k$, we also have 
\begin{align*}
V_\mathrm{f,D}(\pm\pi/2+2\pi n+\sigma,0)/3&=2d(\pm\pi/2+2\pi n+\sigma)\tan(\pm\pi/4+\pi n+\sigma/2)
\\[0.3em]
&-l(\pm\pi/2+2\pi n+\sigma)\frac{-\sigma+O(\sigma^2)}{1+O(\sigma)}+3d(\pm\pi/2+2\pi n+\sigma)^2-\alpha\\[0.3em]
&=d(\pm\pi/2+2\pi n)(2\pm3\pi/2+6\pi n)-\alpha+O(\sigma),
\end{align*}
which is positive for sufficiently large $n$. Therefore, a Dirichlet band overlaps a Neumann gap in the vicinity of points $\pm\pi/2+2\pi n$ for such $n$.

 \subsection{Construction of spectrally embedded defect states in 1D}

In the example below, we set $d=1$ and $\alpha_{\rm f}=5$. The functions $V_\mathrm{f,D}(k,0)$, $V_\mathrm{f,N}(k,0)$ are plotted in Fig.\,\ref{fig:Vplot_1D}.

\begin{figure}[!hbp] 
\centerline{
\scalebox{0.45}{\includegraphics{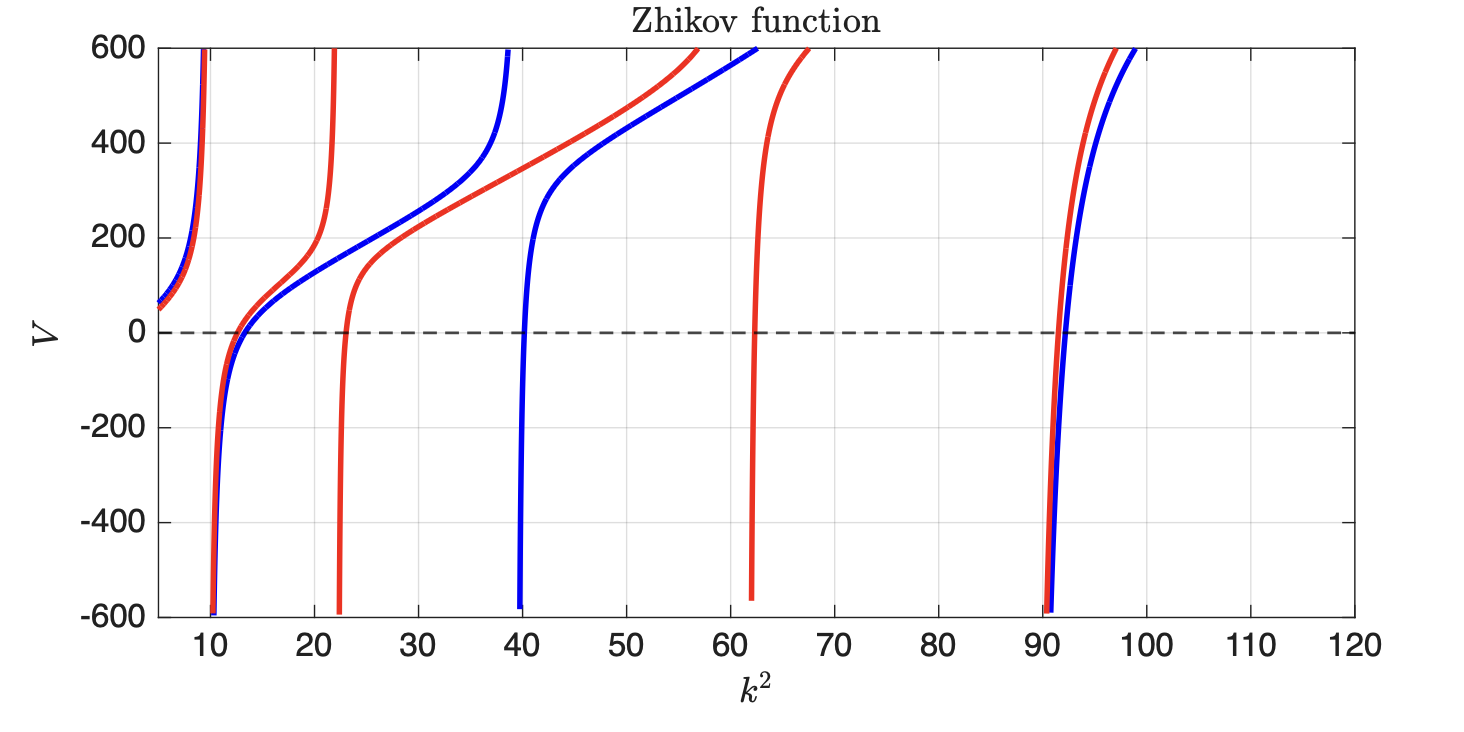}}
}
\caption{\small Blue: Dirichlet, red: Neumann.}
\label{fig:Vplot_1D}
\end{figure}

For $k=6.336$, one has $V_\mathrm{f,D}=-8.964$, $V_\mathrm{f,N}=348.386$, so the value $k^2=40.151$ is in a gap of the problem for the monolayer with Dirichlet decoration and within the spectrum of the problem for the monolayer with Neumann decoration. The first three eigenfunctions of the limit problem for the monolayer with Dirichlet boundary conditions and the defect   
\[
\Psi(x)=\left\{\begin{array}{ll}(1-x^2)^{2.4},& |x|\le1,\\[0.2em]
0,& |x|>1,\end{array}\right.
\]
are shown in Fig.\,\ref{Limit_problem_eigenfunctions}. Of the three corresponding eigenvalues, the first does not exceed $3\alpha_{\rm f}=15$.

\begin{figure}[!hbp] 
\centerline{
\scalebox{0.3}{\includegraphics{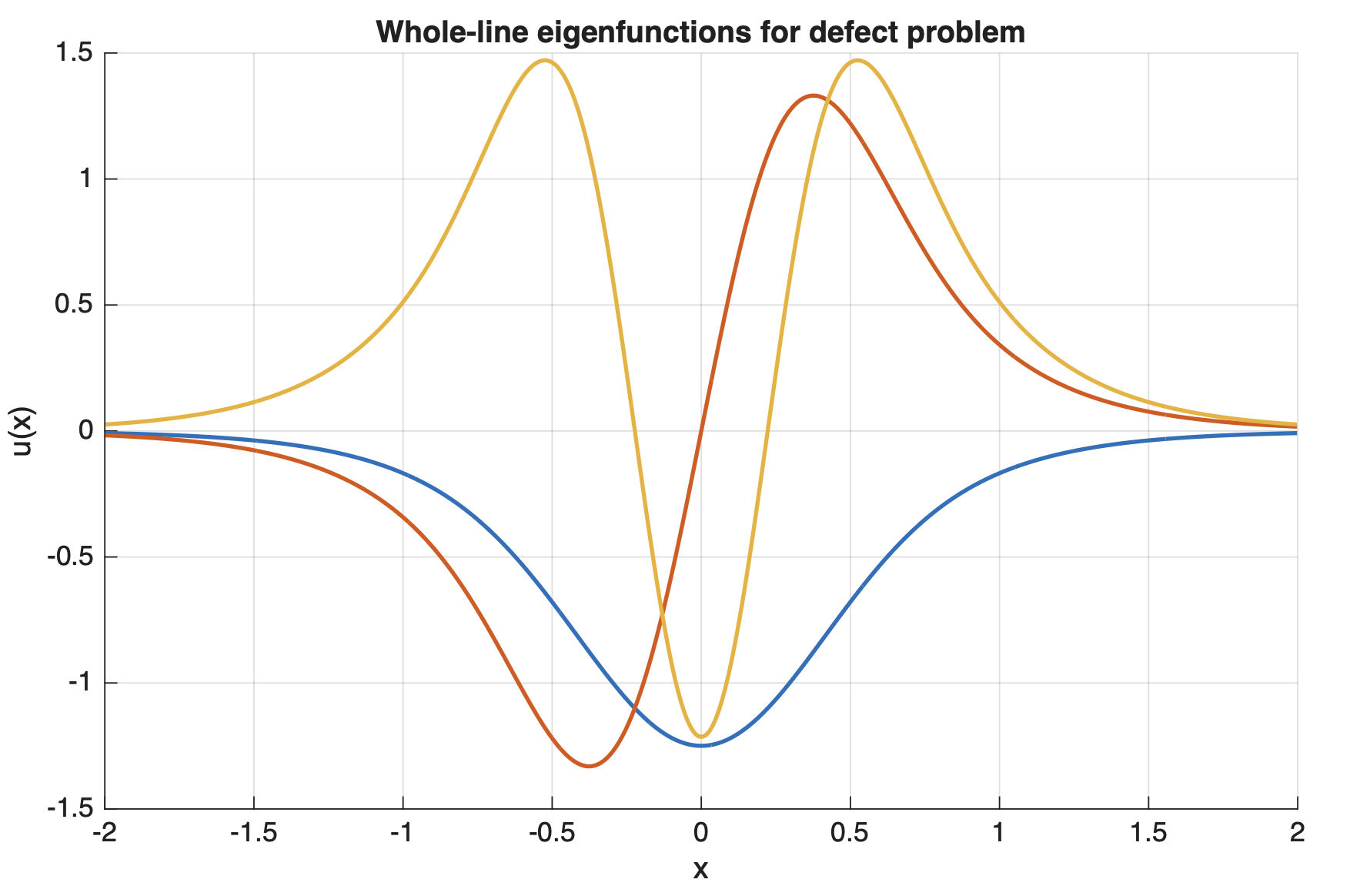}}
}
\caption{\small The corresponding eigenvalues are 14.145, 31.754, 59.646.}
\label{Limit_problem_eigenfunctions}
\end{figure}

The results of direct computation, for $\varepsilon=0.1$ of a defect eigenfunction for the original Dirichlet monolayer with eigenvalue located near 40.151 is provided in Fig.\,\ref{fig:reconstruction}.

 \begin{figure}[!hbp] 
\centerline{
\scalebox{0.23}{\includegraphics{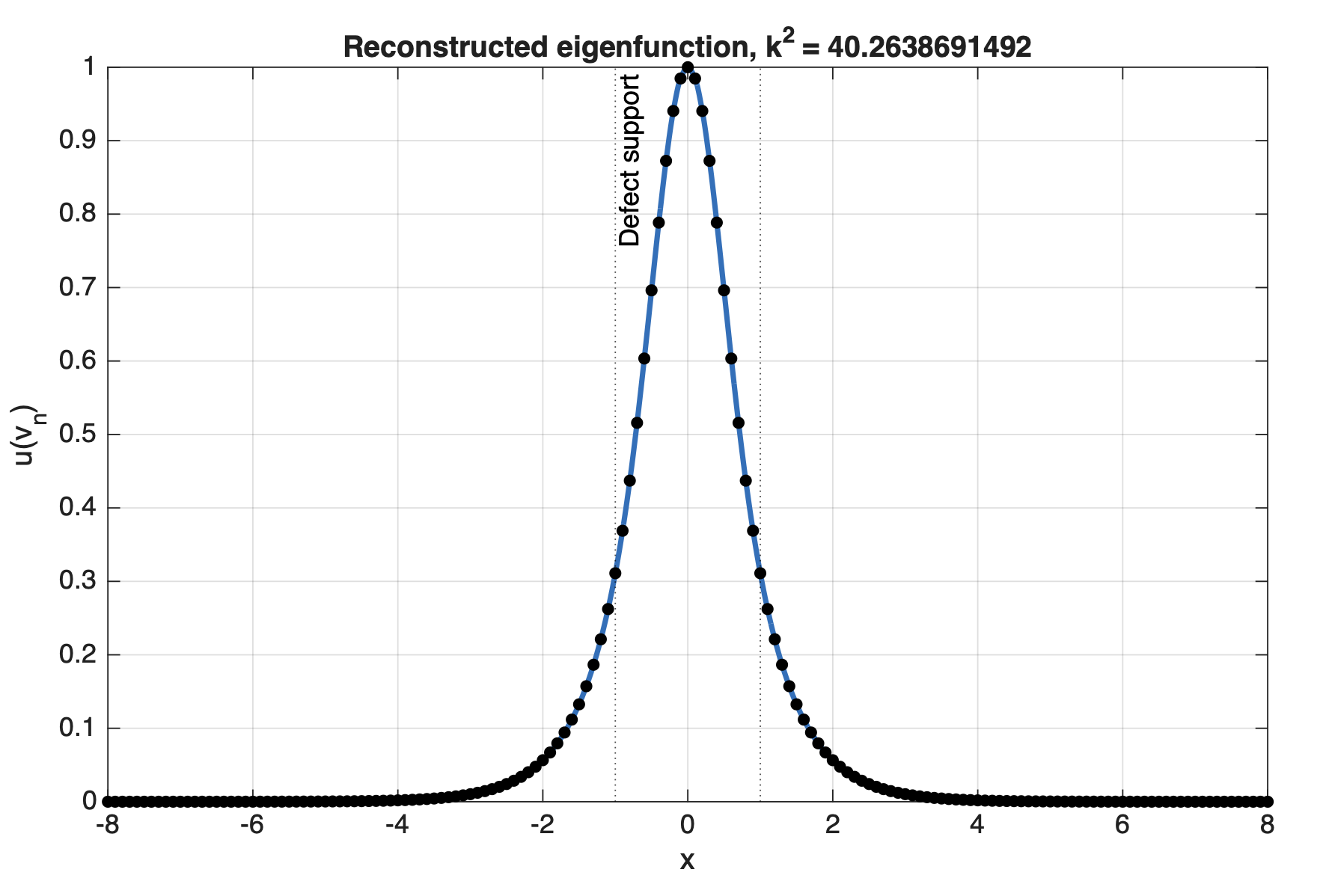}}
\hspace{0.4em}
\scalebox{0.23}{\includegraphics{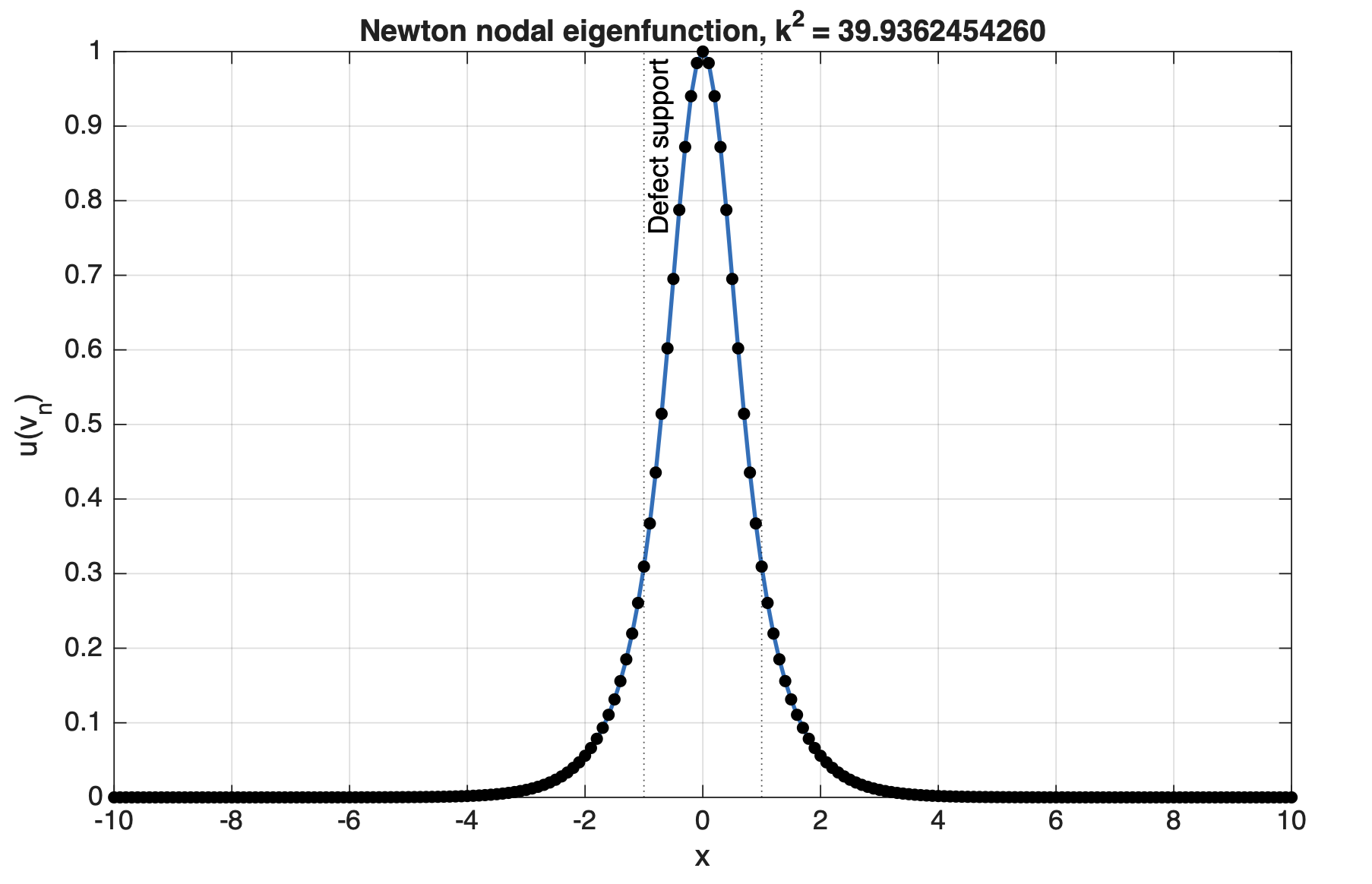}}
}
\caption{\small Left: Original graph via Birman-Schwinger.  Right: Discrete Dirichlet-to-Neumann eigenvalue via Newton method.}
\label{fig:reconstruction}
\end{figure}

 \subsection{Spectrally embedded defect states in 2D}

In the example below, we set $d=2$ and $\alpha_{\rm f}=5$. The functions $V_\mathrm{f,D}(k,0)$, $V_\mathrm{f,N}$ are plotted in Fig.\,\ref{Vplot_2D}.

\begin{figure}[!hbp] 
\centerline{
\scalebox{0.25}{\includegraphics{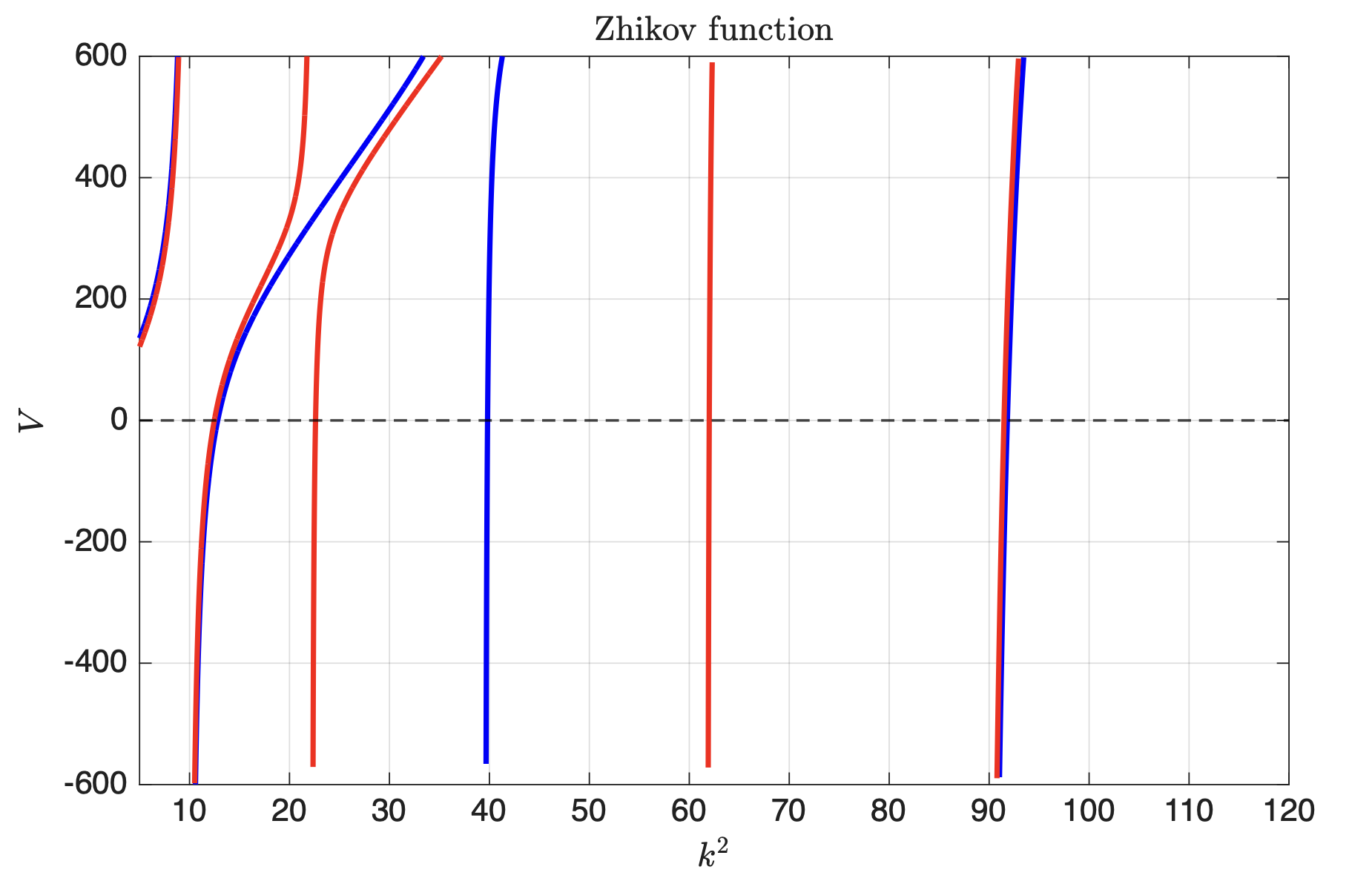}}
}
\caption{\small Blue: Dirichlet, red: Neumann.}
\label{Vplot_2D}
\end{figure}

For $k=6.336$, one has $V_\mathrm{f,D}=-9.4$, $V_\mathrm{f,N}=703.146$, so the value $k^2=39.813$ is in a gap of the problem for the monolayer with Dirichlet decoration and within the spectrum of the problem for the monolayer with Neumann decoration. The first four eigenfunctions of the limit problem for the monolayer with Dirichlet boundary conditions and the defect   
\[
\Psi(x)=\left\{\begin{array}{ll}(1-x^2)^{2.4},& |x|\le1,\\[0.2em]
0,& |x|>1,\end{array}\right.
\]
are shown in Fig.\,\ref{Limit_problem_eigenfunctions_2D}. Note that the second and third eigenfunctions form a basis of the eigenspace for  a double eigenvalue. Of the three corresponding eigenvalues, the first does not exceed $3\alpha_{\rm f}=15$.

\begin{figure}[!hbp] 
\centering 

\scalebox{0.25}{%
\includegraphics{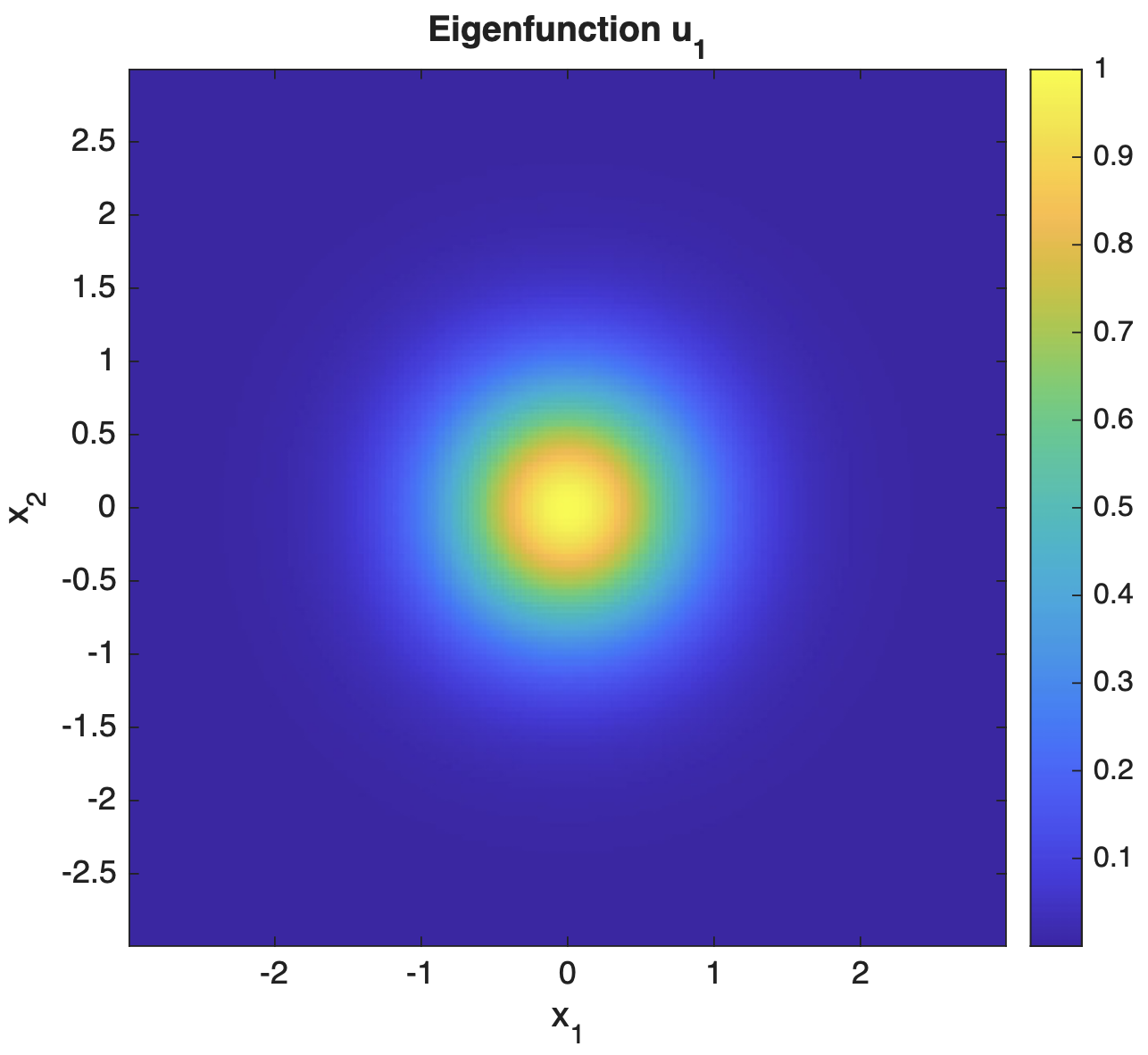}%
} 
\hspace{0.4em} 
\scalebox{0.25}{%
\includegraphics{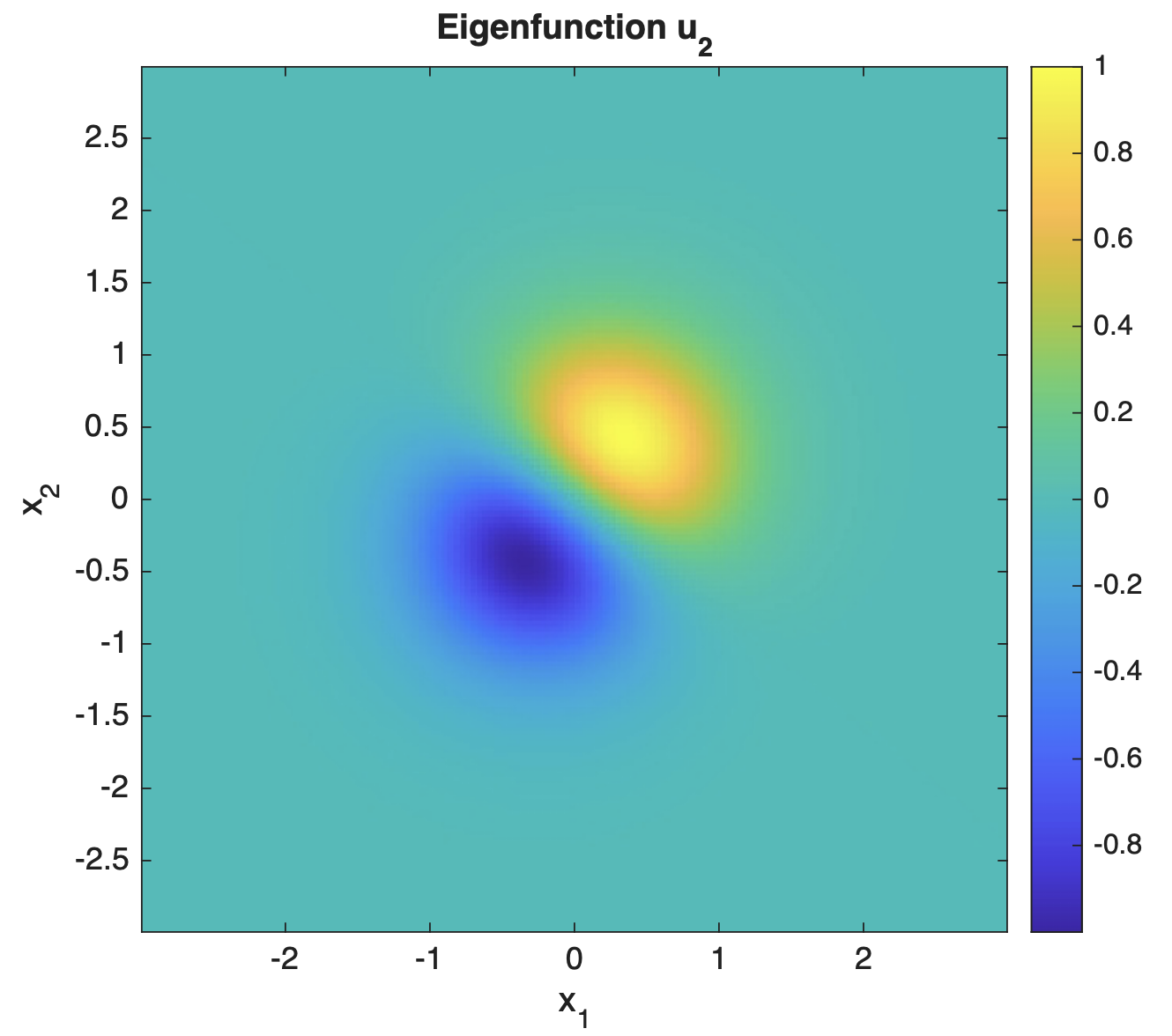}%
} 

\par\medskip 

\scalebox{0.25}{%
\includegraphics{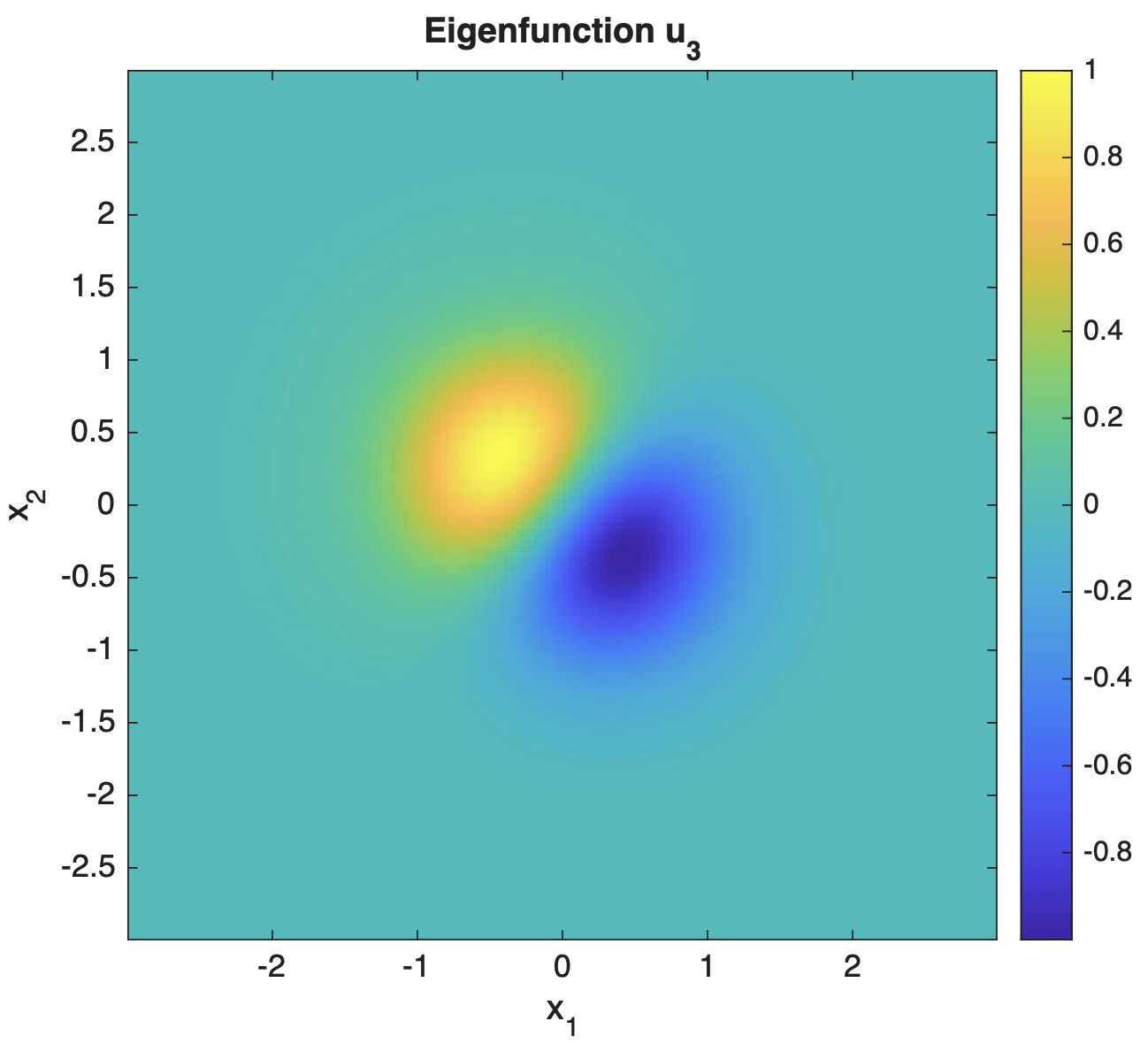}%
} 
\hspace{0.4em}
 \scalebox{0.25}{%
 \includegraphics{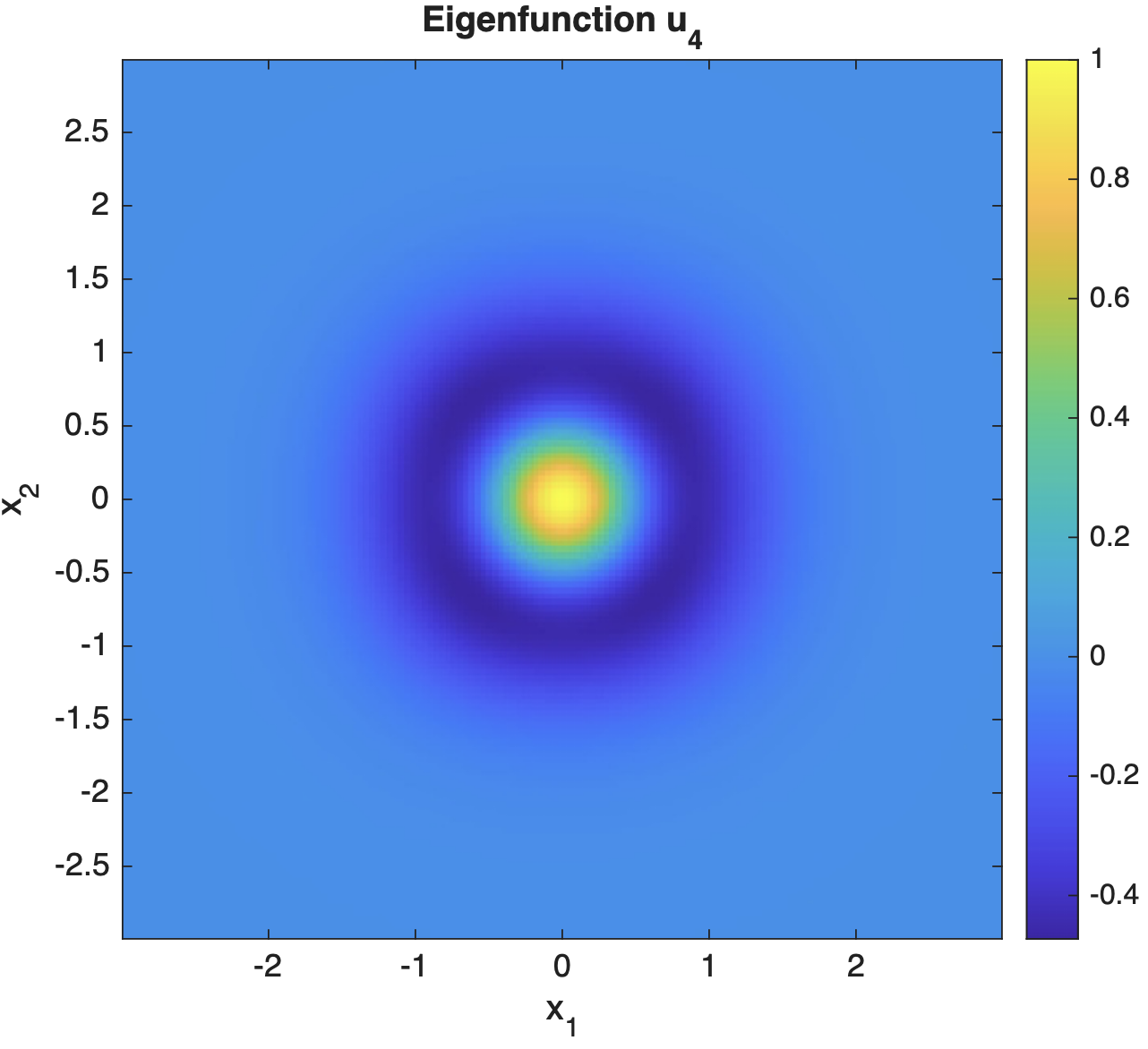}%
} 
\caption{\small The corresponding eigenvalues are 14.934, 22.785 (double), 32.496.} 
\label{Limit_problem_eigenfunctions_2D} 
\end{figure}

The results of direct computation, for $\varepsilon=0.05$ of a defect eigenfunction for the original Dirichlet monolayer with eigenvalue located near 40.151 is provided in Fig.\,\ref{fig:reconstruction_2D}.

 \begin{figure}[!hbp] 
\centerline{
\scalebox{0.22}{\includegraphics{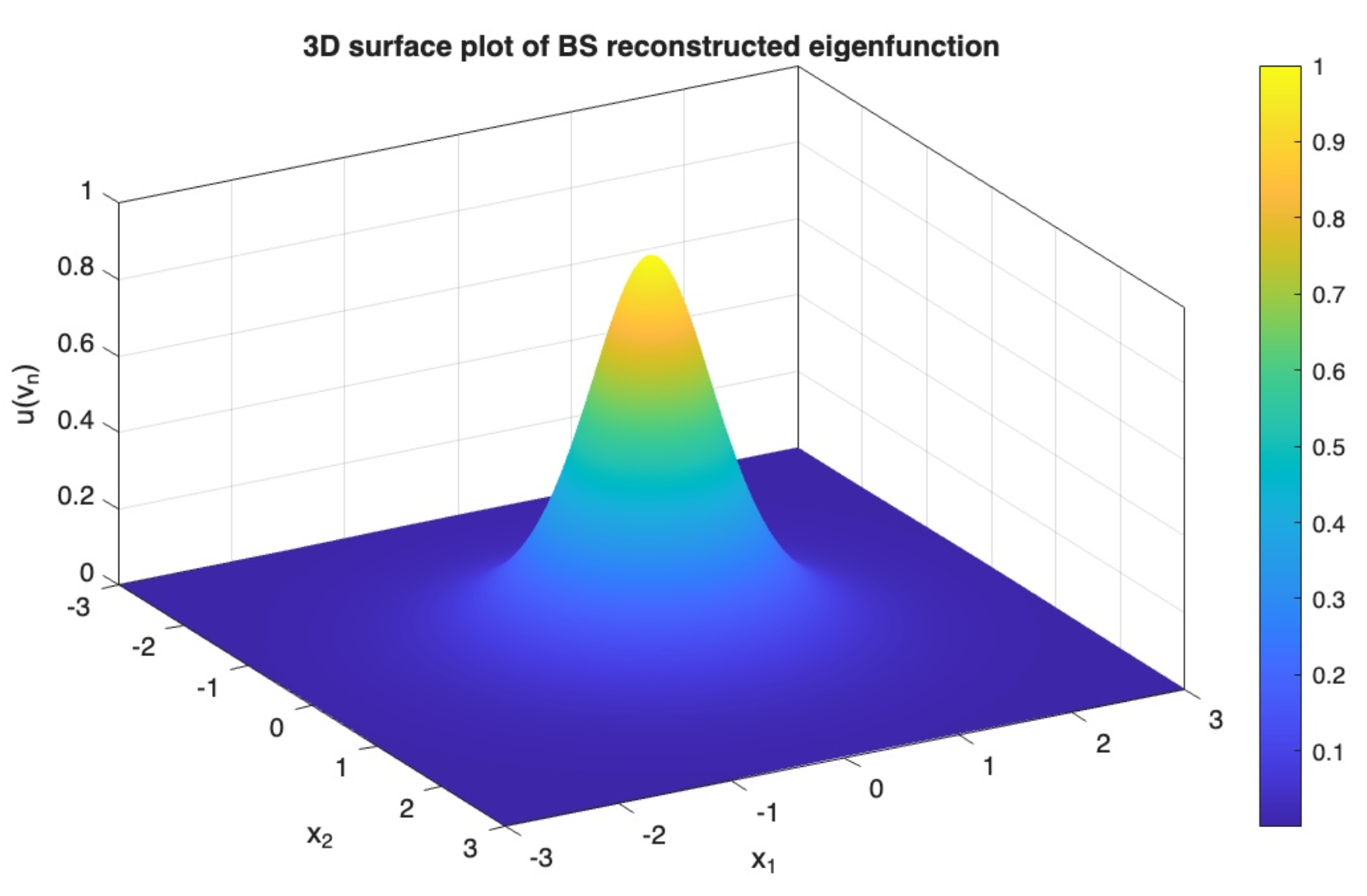}}
\hspace{0.4em}
\scalebox{0.22}{\includegraphics{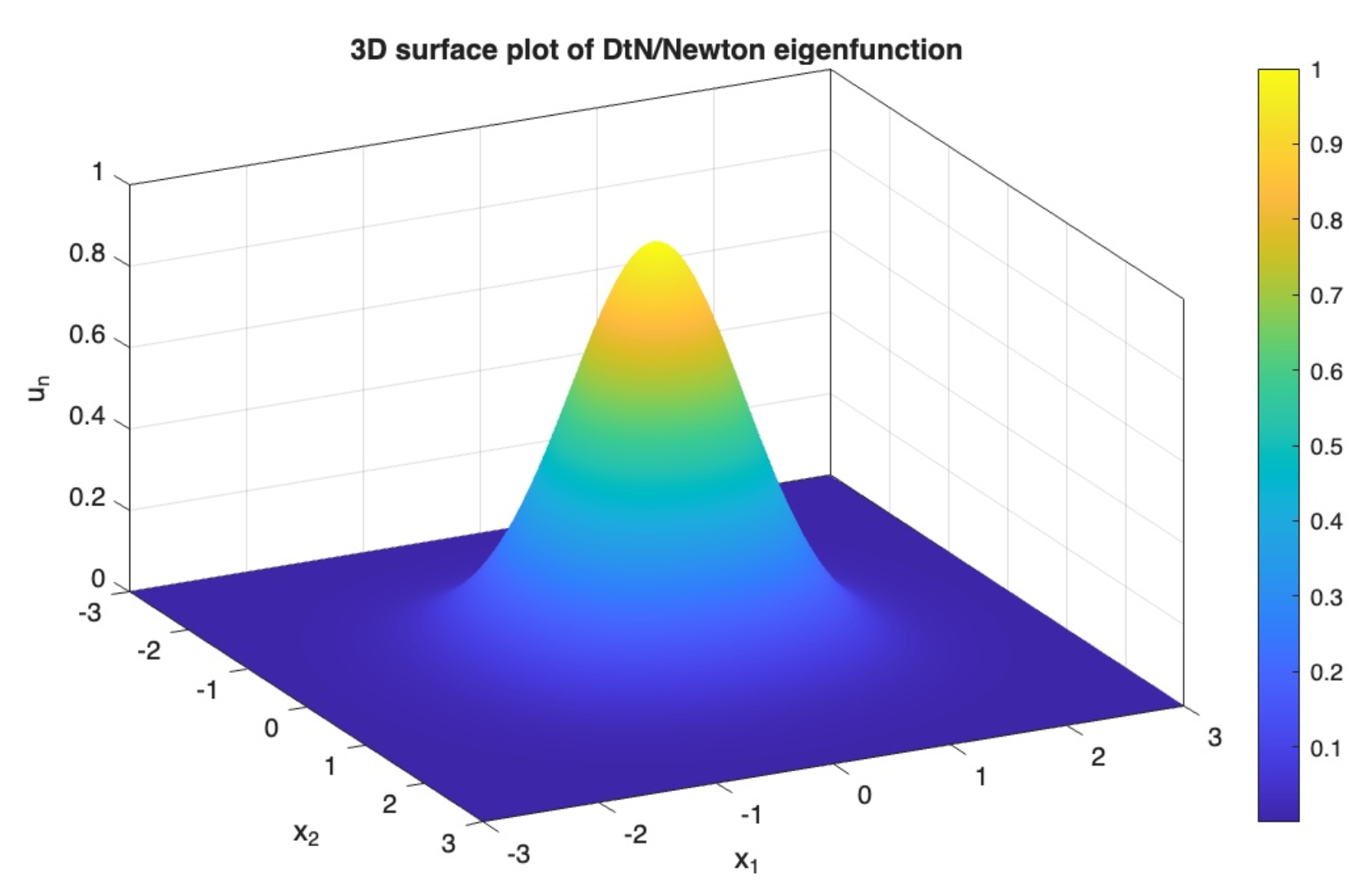}}
}
\caption{\small Left: Original graph via Birman-Schwinger.  Right: Discrete Dirichlet-to-Neumann eigenfunction via Newton method.}
\label{fig:reconstruction_2D}
\end{figure}

\FloatBarrier

\appendix
\section{Calculations}\label{sec:calculations}

\renewcommand{\theequation}{A.\arabic{equation}}
\setcounter{equation}{0}

To compute the DtN matrix $M_\mathrm{soft}$ for an edge $(v,w)$ parameterized by $x\in[0,\eps\ell]$, we compute first the transfer matrix $T_\mathrm{soft}$, which satisfies
\begin{equation*}
  T_\mathrm{soft} \col{1.2}{u(v)}{\Phi[u](v)} \;=\; \col{1.2}{u(w)}{\Phi[u](w)}
\end{equation*}
for any solution $u$ to the differential equation $d((\eps\sigma)^2du/dx)/dx=k^2u$.  Recall that $\Psi[u](v)=(\sigma\eps)^2u'(0)$ and $\Psi[u](w)=-(\sigma\eps)^2u'(\eps\ell)$.  The solution $u_1(x)=\cos(kx/(\eps\sigma))$ has $[u(v),\Phi[u](v)]=[1,0]$, and the solution $u_2(x)=(k\sigma\eps)^{-1}\sin(kx/(\eps\sigma))$ has $[u(v),\Phi[u](v)]=[0,1]$.  Thus,
\begin{equation}
  T_\mathrm{soft} \;=\; \mat{1.3}{u_1(w)}{u_2(w)}{\Phi[u_1](w)}{\Phi[u_2](w)}
  \;=\; \mat{1.3}{\cos k\ell\sigma^{-1}}{(k\sigma\eps)^{-1}\sin k\ell\sigma^{-1}}{k\sigma\eps \sin k\ell\sigma^{-1}}{-\cos k\ell\sigma^{-1}}.
\end{equation}
Some linear algebra yields
\begin{equation}\label{DtNesoft}
  M_\mathrm{soft} \;=\; \frac{\eps\sigma k}{\sin k\ell\sigma^{-1}}\mat{1.3}{-\cos k\ell\sigma^{-1}}{1}{1}{-\cos k\ell\sigma^{-1}}.
\end{equation}
For a soft edge of length $2\ell\eps$, one obtains
\begin{equation}\label{DtNe2}
  \tilde M_\mathrm{soft} \;=\; \frac{\eps\sigma k}{\sin 2k\ell\sigma^{-1}}\mat{1.3}{-\cos 2k\ell\sigma^{-1}}{1}{1}{-\cos 2k\ell\sigma^{-1}}.
\end{equation}

For a stiff edge of length $\eps\ell$, $u$ satisfies $d(\sigma^2du/dx)/dx=k^2u$, and the flux is $\Psi[u](v)=\sigma^2u'(0)$.  One computes
\begin{equation}\label{DtNestiff}
  M_\mathrm{stiff} \;=\; \frac{\sigma k}{\sin \eps k\ell\sigma^{-1}}\mat{1.3}{-\cos \eps k\ell\sigma^{-1}}{1}{1}{-\cos \eps k\ell\sigma^{-1}}.
\end{equation}

To obtain the scalar DtN map for a soft edge with length $\eps\ell$ and Dirichlet condition at one end, we consider solutions $u$ on the interval $[0,2\eps\ell]$ with $u(\eps\ell)\!=\!0$.  These are exactly the odd solutions on the interval $[0,2\ell\eps]$, that is, $u(x)\!=\!u(x-\eps\ell)$.  Such a solution satisfies
\begin{equation}
  \tilde M_\mathrm{soft}\col{1}{u(v)}{-u(v)} \;=\;  \col{1}{\Phi[u](v)}{-\Phi[u](v)}.
\end{equation}
Since $u(v)[1,-1]$ is an eigenvector of $\tilde M_\mathrm{soft}$ with eigenvalue is $-\eps\sigma k\cot k\ell\sigma^{-1}$, one obtains
\begin{equation}
  \Phi[u](v) \;=\; -\eps\sigma k\cot (k\ell\sigma^{-1}) u(v).
\end{equation}
Similarly, to obtain the DtN map for the Neumann condition at $\eps\ell$, consider solutions on $[0,2\eps\ell]$ with $u'(\eps\ell)\!=\!0$.  Such solutions are even, yielding
\begin{equation}
  \tilde M_\mathrm{soft}\col{1}{u(v)}{u(v)} \;=\;  \col{1}{\Phi[u](v)}{\Phi[u](v)}.
\end{equation}
Since $u(v)[1,1]= $ is an eigenvector of $\tilde M_\mathrm{soft}$ with eigenvalue is $\eps\sigma k\tan k\ell\sigma^{-1}$, one obtains
\begin{equation}
  \Phi[u](v) \;=\; \eps\sigma k\tan (k\ell\sigma^{-1}) u(v).
\end{equation}

Now consider the compound edge in Figure~\ref{fig:bdedge} with the lengths of all component edges being $\eps\ell$ and functions $u$ satisfying $d(\sigma^2 du/dx)/dx=k^2u$ on the stiff edges and $d((\eps\sigma)^2 du/dx)/dx=k^2u$ on the soft edge.  Write the matrix $\tilde M_{\bedge}(k,\eps)$ introduced in Section \ref{sec:DtNcompound} in block form
\begin{equation}\label{DtNb2}
\tilde M_{\bedge}(k,\eps) \;=\;
\sigma k \left[\!\!
  \begin{array}{cc}
     A & B \\
    B & D
  \end{array}
\!\right].
\end{equation}
Let us denote $\kappa:=k\ell\sigma^{-1}$.  Using $M_\mathrm{stiff}$ and $M_\mathrm{soft}$ calculate above, we obtain
\begin{equation*}
 A = -(\cot\eps \kappa) I_2,\quad B (\csc\eps \kappa) I_2,\quad
  D \;=\; \mat{2}
  {-2\cot\eps \kappa - \eps\cot \kappa}
  {\csc\eps \kappa + \eps\csc \kappa}
  {\csc\eps \kappa + \eps\csc \kappa}
  {-2\cot\eps \kappa - \eps\cot \kappa},
\end{equation*}
with Schur complement $\Sigma=A-BD^{-1}B$.

The matrix $D$ is diagonalized as follows:
\begin{align*}
  D &\;=\;  \frac{1}{\sin\eps \kappa}\mat{1.2}{-2\cos\eps\kappa}{1}{1}{-2\cos\eps\kappa}
             + \frac{\eps}{\sin\kappa}\mat{1.2}{-\cos\kappa}{1}{1}{-\cos\kappa}\\
   &\;=\;  \frac{1}{2}\mat{1}{1}{-1}{1}{1}
      \left\{ \frac{1}{\sin\eps\kappa}\mat{1.2}{1-2\cos\eps\kappa}{0}{0}{-1-2\cos\eps\kappa}
         + \frac{\eps}{\sin\kappa}\mat{1.2}{1-\cos\kappa}{0}{0}{-1-\cos\kappa} \right\}
      \mat{1}{1}{1}{-1}{1}.
\end{align*}
The inverse eigenvalues are
\begin{align*}
   \rho_\pm &\;=\;  \left[ \frac{\pm1-2\cos\eps\kappa}{\sin\eps\kappa} + \frac{\eps(\pm1-\cos\kappa)}{\sin\kappa} \right]^{-1}
   \;=\; \frac{\sin\kappa\sin\eps\kappa}{\sin\kappa(\pm1-2\cos\eps\kappa) + \eps\sin\eps\kappa(\pm1-\cos\kappa)} \\
   &\;=\; \frac{\sin\eps\kappa}{\pm1-2\cos\eps\kappa + \eps\sin\eps\kappa\csc\kappa(\pm1-\cos\kappa)}.
\end{align*}
Thus we obtain
\begin{equation*}
  D^{-1} \;=\; \frac{1}{2} \mat{1.2}{\rho_++\rho_-}{\rho_+-\rho_-}{\rho_+-\rho_-}{\rho_++\rho_-}
\end{equation*}
and
\begin{align*}
  -\Sigma
  \;=\; (\cot\eps\kappa) I_2 + (\csc\eps\kappa)^2 \frac{1}{2} \mat{1.2}{\rho_++\rho_-}{\rho_+-\rho_-}{\rho_+-\rho_-}{\rho_++\rho_-} 
   \;=\; \frac{1}{\sin\eps\kappa} \left\{ \cos\eps\kappa I_2 + \frac{1}{2} \mat{1.1}{\tau_++\tau_-}{\tau_+-\tau_-}{\tau_+-\tau_-}{\tau_++\tau_-} \right\},
\end{align*}
with
\begin{equation*}
  \tau_\pm \;=\; \big( \pm1-2\cos\eps\kappa + \eps\sin\eps\kappa\csc\kappa (\pm1-\cos\kappa) \big)^{-1}.
\end{equation*}

The $\eps$- and $k$-dependent quantities $\Sigma_{ij}$ are
\begin{align}
  \Sigma_{12} = \Sigma_{21} &\;=\; \frac{-1}{\sin\eps\kappa} \frac{\,\tau_+-\tau_-}{2} \\
   \Sigma_{11} + \Sigma_{22} \;=\; 2\Sigma_{11} & \;=\; \frac{-1}{\sin\eps\kappa} \left( 2\cos\eps\kappa + \tau_++\tau_- \right).
\end{align}
In terms of the quantities
\begin{align}
  b_1 &\;=\; 1+\eps\sin\eps\kappa\csc\kappa \;=\; 1 + \eps^2\kappa\sinc \eps\kappa \csc\kappa  \\
  b_2  &\;=\; 2\cos\eps\kappa + \eps\sin\eps\kappa\cot\kappa \;=\; 2 - \eps^2k \big[\kappa\sinc \textstyle\frac{\eps\kappa}{2} + \sinc\eps\kappa \cot\kappa \big].
\end{align}
we can write
\begin{equation}
  \tau_+-\tau_- \;=\; \frac{-2b_1}{b_2^2-b_1^2}\,,  \qquad
  \tau_++\tau_- \;=\; \frac{-2b_2}{b_2^2-b_1^2}.
\end{equation}

One computes the following asymptotic expansions in $\eps$, noting that the $O(\eps^n)$ terms are not uniform in~$k$.
\begin{align}
  b_2^2-b_1^2 &\;=\; -\left(1+\eps^2\kappa\csc\kappa\right)^2 + \left(2-(\eps\kappa)^2+\eps^2\kappa\cot\kappa\right)^2 + O(\eps^4)  \\
    &\;=\; 3 - \eps^2\left( 2k\csc\kappa + 4(k^2-k\cot\kappa) \right) + O(\eps^4),\\
  b_1^{-1} &\;=\; 1 - \eps^2\kappa\csc\kappa + O(\eps^4),\\
  \left(b_2^2-b_1^2\right)/b_1 &\;=\; 3 - \eps^2\left( 5k\csc\kappa - 4k\cot\kappa + 4k^2 \right) + O(\eps^4), \\
      \frac{b_2^2-b_1^2}{b_1}\cos \eps\kappa &\;=\; 3 - \eps^2\left( 5k\csc\kappa + \textstyle\frac{11}{2}k^2 - 4k\cot\kappa \right) + O(\eps^4), \\
          b_2/b_1 &\;=\; 2 + \eps^2\left( -k^2 +\kappa\cot\kappa - 2k\csc\kappa \right) + O(\eps^4).
\end{align}
Thus,
\begin{equation}
  \frac{\Sigma_{11}}{\Sigma_{12}} \;=\; \frac{b_2}{b_1} - \frac{(b_2^2-b_1^2)\cos \eps\kappa}{b_1}
\end{equation}
\begin{equation}
  -\frac{\eps}{\ell\kappa\,\Sigma_{12}} \;=\; -\eps^2\frac{3}{\ell} + O(\eps^4).
\end{equation}
Putting these together yields
\begin{equation}
    \tilde V(k,\eps;n) \;=\; 2d\,\frac{\Sigma_{11}}{\Sigma_{12}} - \frac{\eps(\alpha_n-m(k))}{\ell\kappa\,\Sigma_{12}}
    \;=\; -2d + \eps^2F(k,n) +  \eps^4\left[ g_1(k,\eps) + g_2(k,\eps)(\alpha_n-m(k))\right],
\end{equation}
in which
\begin{equation}
  F(k,n) \;=\; d\left[ 9k^2 + 6k\left( \csc\kappa - \cot\kappa \right) \right] - \frac{3}{\ell}\left( \alpha_n - m(k) \right).
\end{equation}

\section{A macroscopic model retaining the soft component}
\label{sec:explicit-soft-limit}

The dispersive continuum limit presented in Section~\ref{sec:results} can be realized as the projection of a larger, self-adjoint, system that couples macroscopic and microscopic variables.

\subsection{Monolayer system}

The macroscopic equation obtained by eliminating the soft edges is a frequency-dispersive equation for the field carried by the stiff component. Although this formulation is convenient for the construction of defect states, it suppresses the microscopic degrees of freedom supported by the
soft part of the periodicity cell.  Here we give an equivalent augmented formulation in which the soft-edge fields are retained as independent variables.  The resulting spectral problem is linear in the spectral parameter and is generated by a self-adjoint operator. The frequency-dispersive equation is recovered from the augmented problem by taking a Schur complement with respect to the soft variables.

\subsubsection{The geometry and the natural normalisation}

For each \(n\in\mathbb Z^d\) and \(j\in\{1,\ldots,d\}\), the vertices \(\varepsilon n\) and \(\varepsilon(n+e_j)\) are joined by a compound edge consisting of three stiff edges and one soft edge. All four component edges have metric length \(\varepsilon\ell\). In addition, a soft decoration of length \(\varepsilon\ell\) is attached to each lattice vertex \(\varepsilon n\).
The coefficient of the differential expression is taken to be $\ell^2$ on the stiff edges, $(\varepsilon\ell)^2$ on the soft edges. At the lattice vertex $\varepsilon n$ the Robin parameter is $\alpha_\Psi(\varepsilon n):=\alpha+\mu\Psi(\varepsilon n)$.
The terminal endpoint of every decoration is supplied either with the Dirichlet condition or with the Neumann condition.  We denote the chosen
condition by $B\in\{\mathrm D,\mathrm N\}$.

It is convenient to equip $L^2(\Gamma_\varepsilon)$ with the normalised inner product
\[
(u,v)_{\mathscr H_\varepsilon}:=\frac{3\varepsilon^{d-1}}{\ell}\int_{\Gamma_\varepsilon}u\overline{v}ds.
\]
Thus $\mathscr H_\varepsilon:=L^2\left(\Gamma_\varepsilon; (3\varepsilon^{d-1}/\ell)\,ds\right)$. Multiplying both the energy form and the Hilbert-space inner product by the same positive factor does not change the associated differential operator. This normalisation is chosen so that the limiting coefficient of $|\nabla u|^2$ is equal to one. Indeed, three stiff edges of length $\varepsilon\ell$, connected in series, have effective conductance $\ell/(3\varepsilon)$.  After multiplication by $3\varepsilon^{d-1}/\ell$, the corresponding lattice energy converges to
\[
\int_{\mathbb R^d}|\nabla u|^2\,dx.
\]
We write $H_{\varepsilon,\Psi}^{B}$ for the resulting self-adjoint metric-graph operator in $\mathscr H_\varepsilon$.

\subsubsection{The augmented macroscopic Hilbert space}

Let $I:=(0,1)$. The macroscopic variable is denoted by $x\in\mathbb R^d$, while $y\in I$ is the rescaled coordinate along a soft edge. The augmented Hilbert space is
\[
\mathscr H_{\mathrm{hom}}:=L^2(\mathbb R^d;9d\,dx)\oplus\bigoplus_{j=1}^dL^2(\mathbb R^d\times I;3\,dxdy)\oplus L^2(\mathbb R^d\times I;3\,dxdy).
\]
Its elements are written as $U=(u,q_1,\dots,q_d,r)$. Here, $u=u(x)$ is the macroscopic field associated with the connected stiff component; $q_j=q_j(x,y)$ is the field on the soft edge contained in the compound edge oriented in the $j$-th coordinate direction; $r=r(x,y)$ is the field on the soft decoration. The inner product is
\[
(U,V)_{\mathscr H_{\mathrm{hom}}}=9d\int_{\mathbb R^d}u\overline{v}dx+3\sum_{j=1}^d\int_{\mathbb R^d}\int_0^1q_j\overline{p_j}dydx+3\int_{\mathbb R^d}\int_0^1r\overline{s}dydx,
\]
where $V=(v,p_1,\dots,p_d,s)$. 

\subsubsection{The effective energy form}

Set $\alpha_\Psi(x):=\alpha+\mu\Psi(x)$. For $\mathrm B=\mathrm D$, define
\[
\begin{aligned}
\mathscr V_{\mathrm D}:=\bigl\{
        &(u,q_1,\dots,q_d,r): u\in H^1(\mathbb R^d), q_j,r\in L^2\bigl(\mathbb R^d;H^1(I)\bigr),
\\[0.2em]
        &q_j(\cdot,0)=q_j(\cdot,1)=u,\ \ \ j=1,\dots,d,\quad r(\cdot,0)=u,\quad r(\cdot,1)=0\bigr\}.
\end{aligned}
\]
For $\mathrm B=\mathrm N$, define
\[
\begin{aligned}
    \mathscr V_{\mathrm N}:=\bigl\{
        &(u,q_1,\dots,q_d,r): u\in H^1(\mathbb R^d), q_j,r\in L^2\bigl(\mathbb R^d;H^1(I)\bigr),
        \\[0.2em]
        &q_j(\cdot,0)=q_j(\cdot,1)=u,\ \ j=1,\dots,d,\quad r(\cdot,0)=u\bigr\}.
\end{aligned}
\]
All trace equalities are understood in $L^2(\mathbb R^d)$. On $\mathscr V_{\mathrm B}$ consider the sesquilinear form
\begin{equation}
\mathfrak a_\Psi^{\mathrm B}[U,V]:=\int_{\mathbb R^d}\nabla u\cdot\overline{\nabla v}dx+\frac{3}{\ell}\int_{\mathbb R^d}\alpha_\Psi(x)u\overline vdx+3\sum_{j=1}^d\int_{\mathbb R^d}\int_0^1\partial_yq_j\overline{\partial_yp_j}dydx+3\int_{\mathbb R^d}\int_0^1\partial_yr\overline{\partial_ys}dydx.
\label{eq:augmented-form}
\end{equation}

\begin{proposition}
\label{prop:augmented-operator}
The form $\mathfrak a_\Psi^{\mathrm B}$ is densely defined, closed and bounded from below in $\mathscr H_{\mathrm{hom}}$.  It therefore generates a self-adjoint operator $\mathcal A_\Psi^{\mathrm B}$ in $\mathscr H_{\mathrm{hom}}$.
\end{proposition}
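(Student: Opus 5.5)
The plan is to view $\mathfrak a_\Psi^{\mathrm B}$ as a bounded perturbation of a nonnegative closed form. By the KLMN theorem (or simply stability of closedness under bounded perturbations), this reduces everything to the unperturbed form. Write
\[
\mathfrak a_\Psi^{\mathrm B}=\mathfrak a_0^{\mathrm B}+\mathfrak b,\qquad \mathfrak b[U,V]:=\frac{3}{\ell}\int_{\mathbb R^d}\alpha_\Psi(x)u\overline v\,dx .
\]
Here $\mathfrak a_0^{\mathrm B}$ is the sum of the three gradient terms. Since $\Psi\in C_0^\infty$, the function $\alpha_\Psi$ is bounded, so
\[
|\mathfrak b[U,U]|\le \frac{3\|\alpha_\Psi\|_\infty}{9d\,\ell}\,\|U\|_{\mathscr H_{\mathrm{hom}}}^2 .
\]
Thus $\mathfrak b$ is a bounded symmetric form on $\mathscr H_{\mathrm{hom}}$. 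It follows that $\mathfrak a_\Psi^{\mathrm B}$ is bounded below by $-3\|\alpha_\Psi\|_\infty/(9d\ell)$ and is closed if and only if $\mathfrak a_0^{\mathrm B}$ is. Once closedness is established, the self-adjoint operator $\mathcal A_\Psi^{\mathrm B}$ is given by the first representation theorem.

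\textbf{Density.} The domain $\mathscr V_{\mathrm B}$ contains all tuples $(\phi,\phi+\chi_1,\dots,\phi+\chi_d,\ \phi\,\theta(y)+\chi_0)$ with $\phi\in C_0^\infty(\mathbb R^d)$ and $\chi_j\in C_0^\infty(\mathbb R^d\times I)$. Here $\theta\in C^\infty[0,1]$ satisfies $\theta(0)=1$, and also $\theta(1)=0$ in the Dirichlet case. Because $C_0^\infty(\mathbb R^d\times I)$ is dense in $L^2(\mathbb R^d\times I)$, any $(u,q_1,\dots,q_d,r)$ can be approximated in two steps. First approximate $u$ by $\phi$. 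Then approximate the differences $q_j-\phi$ and $r-\phi\theta$ by the $\chi$'s. This gives density in $\mathscr H_{\mathrm{hom}}$.

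\textbf{Closedness of $\mathfrak a_0^{\mathrm B}$.} This is the main step. Take a sequence $U_n\in\mathscr V_{\mathrm B}$ with $U_n\to U$ in $\mathscr H_{\mathrm{hom}}$ that is Cauchy in the norm
\[
\mathfrak a_0^{\mathrm B}[\cdot]+\|\cdot\|^2 .
\]
Then $u_n$ is Cauchy in $H^1(\mathbb R^d)$. Also $\partial_yq_{j,n}$ and $\partial_y r_n$ are Cauchy in $L^2(\mathbb R^d\times I)$, and together with $L^2$ convergence this makes $q_{j,n},r_n$ Cauchy in $L^2(\mathbb R^d;H^1(I))$. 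Hence the limits satisfy $u\in H^1(\mathbb R^d)$ and $q_j,r\in L^2(\mathbb R^d;H^1(I))$, and $\mathfrak a_0^{\mathrm B}[U_n-U]\to0$. The remaining point is that the trace constraints survive the limit. This follows from the one-dimensional trace inequality
\[
|g(0)|^2+|g(1)|^2\le C\|g\|_{H^1(I)}^2 ,
\]
applied for a.e. $x$ and integrated over $x$. It shows that $q\mapsto(q(\cdot,0),q(\cdot,1))$ is bounded from $L^2(\mathbb R^d;H^1(I))$ to $L^2(\mathbb R^d)^2$. The identities $q_j(\cdot,0)=q_j(\cdot,1)=u$, $r(\cdot,0)=u$, and $r(\cdot,1)=0$ are therefore closed conditions, since $u_n\to u$ in $L^2$, so $U\in\mathscr V_{\mathrm B}$.

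I expect no real obstacle. The only care needed is the measurability and Bochner-space bookkeeping in identifying $L^2(\mathbb R^d;H^1(I))$ with the functions whose $\partial_y$-derivative lies in $L^2(\mathbb R^d\times I)$, and the observation that the energy norm controls the full $H^1(I)$ norm once the $L^2$ part of $\mathscr H_{\mathrm{hom}}$ is added.
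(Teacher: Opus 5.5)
Your proposal is correct and follows essentially the same route as the paper: the zeroth-order term is treated as a bounded symmetric perturbation, closedness follows from completeness of the energy norm on $\mathscr V_{\mathrm B}$ combined with boundedness of the trace maps $L^2(\mathbb R^d;H^1(I))\to L^2(\mathbb R^d)$, and Kato's first representation theorem then gives the self-adjoint operator. Your explicit density construction via tuples $(\phi,\phi+\chi_1,\dots,\phi+\chi_d,\phi\theta+\chi_0)$ supplies a step that the paper asserts without proof.
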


\begin{proof}
The trace maps $L^2\bigl(\mathbb R^d;H^1(I)\bigr)\longrightarrow L^2(\mathbb R^d)$ are bounded.  Consequently, the trace constraints defining
$\mathscr V_{\mathrm B}$ are closed.  The derivative terms in \eqref{eq:augmented-form}, together with the $\mathscr H_{\mathrm{hom}}$-norm, define a complete norm on $\mathscr V_{\mathrm B}$.  Since $\alpha_\Psi$ is real-valued and bounded, the zeroth-order term is a bounded form perturbation. The assertion now follows from Kato's first representation theorem.
\end{proof}

\subsubsection{Strong form of the augmented problem}

For sufficiently regular $U=(u,q_1,\dots,q_d,r)$, the operator $\mathcal A_\Psi^{\mathrm B}$ acts according to
\[
    \mathcal A_\Psi^{\mathrm B}
    \begin{pmatrix}
        u\\ q_1\\ \vdots\\q_d\\r
    \end{pmatrix}
    =
    \begin{pmatrix}
    \displaystyle
    \frac{1}{9d}\biggl\{-\Delta_xu+\frac{3}{\ell}\alpha_\Psi u+3\sum_{j=1}^d\bigl(\partial_yq_j(\cdot,1)-\partial_yq_j(\cdot,0)\bigr)-3\partial_yr(\cdot,0)\biggr\}
    \\[0.3em]
        -\partial_y^2q_1\\[0.4em]
        \vdots\\[0.4em]
        -\partial_y^2q_d\\[0.6em]
        -\partial_y^2r
    \end{pmatrix}.
\]
The coupling conditions are
\[
q_j(x,0)=q_j(x,1)=u(x),\quad j=1,\dots,d,\qquad
r(x,0)=u(x).
\]
At the terminal endpoint of the decoration one has $r(x,1)=0$ if $\mathrm B=\mathrm D$, and $\partial_yr(x,1)=0$ if $\mathrm B=\mathrm N$. The spectral problem
\begin{equation}
\mathcal A_\Psi^{\mathrm B}U=zU
\label{eq:augmented-spectral}
\end{equation}
is therefore equivalent to
\begin{equation}
\begin{aligned}
-\Delta_xu+\frac{3}{\ell}\alpha_\Psi(x)u &+3\sum_{j=1}^d\left[\partial_yq_j(x,1)-\partial_yq_j(x,0)\right]-3\partial_yr(x,0)=9dz\,u, &&x\in\mathbb R^d,
\\[-0.2em]
-\partial_y^2q_j(x,y) &=zq_j(x,y), &&x\in\mathbb R^d,\quad y\in I,
\\
q_j(x,0)=q_j(x,1)&=u(x),&&j=1,\dots,d,
\\[1mm]
-\partial_y^2r(x,y)&=zr(x,y), &&x\in\mathbb R^d,\quad y\in I,
\\
r(x,0)&=u(x),
\end{aligned}
\label{eq:augmented-system}
\end{equation}
supplemented by \eqref{eq:augmented-system} and either $r(x,1)=0$ or $\partial_yr(x,1)=0$. Unlike the scalar frequency-dispersive equation, problem \eqref{eq:augmented-system} is linear in $z$.  Moreover, it continues to make sense at the resonant values of $z$ at which the Dirichlet-to-Neumann functions of the soft edges have poles.

\subsubsection{Recovery of the frequency-dispersive equation}

We now show that the scalar macroscopic equation is the Schur complement of $\mathcal A_\Psi^{\mathrm B}-z$ with respect to the soft variables. Let $z=k^2$, where a branch of the square root is fixed.  We first assume that $z$ is separated from the Dirichlet spectra of the one-dimensional soft problems.

For every $j=1,\dots,d$, the solution of $-\partial_y^2q_j=k^2q_j$, $q_j(x, 0)=q_j(x, 1)=u(x)$, is
\[
q_j(x,y)=u(x)\frac{\cos\bigl(k(y-1/2)\bigr)}{\cos(k/2)}.
\]
Hence
\begin{equation}
\partial_yq_j(x,1)-\partial_yq_j(x,0)=-2k\tan(k/2)\,u(x)=-2k\bigl(\csc k-\cot k\bigr)u(x).
\label{eq:q-flux}
\end{equation}
For the Dirichlet decoration,
\[
r_{\mathrm D}(x,y)=u(x)\frac{\sin(k(1-y))}{\sin k},
\]
and therefore
\begin{equation}
\partial_yr_{\mathrm D}(x,0)=-k\cot k\,u(x)=m_{\mathrm D}(k)\ell^{-1}u(x),\qquad m_{\mathrm D}(k):=-\ell k\cot k.
\label{eq:6.11D}
\end{equation}
For the Neumann decoration,
\[
r_{\mathrm N}(x,y)=u(x)\frac{\cos(k(1-y))}{\cos k},
\]
and therefore
\begin{equation}
\partial_yr_{\mathrm N}(x,0)=k\tan k\,u(x)=m_{\mathrm N}(k)\ell^{-1}u(x),\qquad m_{\mathrm N}(k):=\ell k\tan k.
\label{eq:6.11N}
\end{equation}

Substitution of \eqref{eq:q-flux} and either \eqref{eq:6.11D} or \eqref{eq:6.11N} into the first equation of
\eqref{eq:augmented-system} gives
\[
-\Delta u+3\ell^{-1}\bigl(\alpha+\mu\Psi\bigr)u-6dk\bigl(\csc k-\cot k\bigr)u-3\ell^{-1}m_{\mathrm B}(k)u=9dk^2u.
\]
Equivalently,
\[
\bigl(-\Delta-V_{\mathrm B}(k)\bigr)u=-3\mu\ell^{-1}\Psi u,
\tag{6.14}
\label{eq:compressed-problem}
\]
where
\begin{equation}
V_{\mathrm B}(k)=d\left[9k^2+6k\bigl(\csc k-\cot k\bigr)\right]-3\ell^{-1}\bigl(\alpha-m_{\mathrm B}(k)\bigr).
\label{eq:dispersion-function}
\end{equation}
Thus, \eqref{eq:compressed-problem} is the Schur complement of $\mathcal A_\Psi^{\mathrm B}-k^2$ with respect to the components
$(q_1,\dots,q_d,r)$. At the exceptional values of $k$, the meromorphic expression \eqref{eq:dispersion-function} need not be defined.  The augmented self-adjoint problem \eqref{eq:augmented-spectral}, however, remains well-defined and retains, in particular, the soft-edge modes whose trace on the stiff component vanishes.

\subsection{Bilayer system}
\label{sec:bilayer-augmented-limit}

We now formulate the macroscopic problem corresponding to the bilayer metric graph.  The two layers are assumed to be identical, and the defect
perturbation is imposed symmetrically on the two layers.  Consequently, the involution interchanging the layers commutes with both the periodic and the defective operators.

At the microscopic level, cutting each connecting soft edge at its midpoint reduces the even part to a monolayer problem with a Neumann condition at the terminal endpoint of the resulting half-edge, and the odd part to a monolayer problem with a Dirichlet condition at that endpoint.  Thus, $\dot H_{\varepsilon,\Psi}\simeq H_{\varepsilon,\Psi}^{\mathrm N}\oplus H_{\varepsilon,\Psi}^{\mathrm D}$.
The purpose of this section is to retain this decomposition at the level of the full augmented homogenised operator and then to rewrite the result in
terms of the physical fields carried by the two layers.

Let $I:=(0,1)$. For the two macroscopic layer amplitudes $u_1,u_2$, introduce the even and odd combinations $u_{\mathrm e}:=(u_1+u_2)/\sqrt2$, $u_{\mathrm o}:=(u_1-u_2)/\sqrt2$.
For every $j=1,\dots,d$, let $q_{\nu,j}=q_{\nu,j}(x,y)$, $\nu=1,2$, denote the field on the soft branch contained in the compound edge of the
$\nu$-th layer in the $j$-th coordinate direction.  We similarly set $q_{\mathrm e,j}:=(q_{1,j}+q_{2,j})/\sqrt2$, $q_{\mathrm o,j}:=(q_{1,j}-q_{2,j})/\sqrt2$.
The field on the two halves of a connecting soft edge is denoted by $c_\nu=c_\nu(x,y)$, $y\in I$, $\nu=1,2$, where $y=0$ corresponds to the layer vertex and $y=1$ corresponds to the midpoint of the connecting edge.  Their even and odd combinations are $r_{\mathrm e}:=(c_1+c_2)/\sqrt2$, $r_{\mathrm o}:=(c_1-c_2)/\sqrt2$, respectively.
The matching conditions at the midpoint of the full connecting edge are equivalent to $\partial_y r_{\mathrm e}(x,1)=0$, $r_{\mathrm o}(x,1)=0$.

\subsubsection{The effective operator in parity variables}

Let $\mathcal A_\Psi^{\mathrm N}$, $\mathcal A_\Psi^{\mathrm D}$ be the augmented monolayer operators defined in Appendix~\ref{sec:explicit-soft-limit}, with Neumann and Dirichlet terminal conditions, respectively.  The augmented bilayer operator in parity
variables is $\mathcal A_{\Psi,\mathrm{par}}^{\mathrm{bi}}:=\mathcal A_\Psi^{\mathrm N}\oplus\mathcal A_\Psi^{\mathrm D}$.
ts Hilbert space is $\mathscr H_{\mathrm{par}}^{\mathrm{bi}}:=\mathscr H_{\mathrm{hom}}^{\mathrm N}\oplus\mathscr H_{\mathrm{hom}}^{\mathrm D}$,
and its elements are written as
\[
U_{\mathrm{par}}=\bigl(u_{\mathrm e}, q_{\mathrm e,1},\dots,q_{\mathrm e,d}, r_{\mathrm e}; u_{\mathrm o}, q_{\mathrm o,1},\dots,q_{\mathrm o,d}, r_{\mathrm o}\bigr).
\]
The spectral problem $\mathcal A_{\Psi,\mathrm{par}}^{\mathrm{bi}}U_{\mathrm{par}}=zU_{\mathrm{par}}$
is the direct sum of the two systems
\[    
-\Delta u_\sigma+\frac{3}{\ell}\alpha_\Psi(x)u_\sigma+3\sum_{j=1}^d\left[\partial_yq_{\sigma,j}(x,1)-\partial_yq_{\sigma,j}(x,0)\right]-3\partial_y r_\sigma(x,0)=9dzu_\sigma,\qquad
\sigma\in\{\mathrm e,\mathrm o\},
\]
together with
\begin{equation*}
\begin{aligned}
-\partial_y^2q_{\sigma,j}&=zq_{\sigma,j},\qquad q_{\sigma,j}(x,0)=q_{\sigma,j}(x,1)=u_\sigma(x),\\[0.3em]
-\partial_y^2r_\sigma&=zr_\sigma,\qquad r_\sigma(x,0)=u_\sigma(x).
\end{aligned}
\end{equation*}
The terminal conditions are $\partial_y r_{\mathrm e}(x,1)=0$, $r_{\mathrm o}(x,1)=0$.
Consequently, $\mathcal A_{\Psi,\mathrm{par}}^{\mathrm{bi}}=\mathcal A_\Psi^{\mathrm N}\oplus\mathcal A_\Psi^{\mathrm D}$ is self-adjoint and bounded below.

\subsubsection{Formulation in the physical layer variables}

Let
\begin{equation*}
\mathscr H_{\mathrm{bi}}:=\bigoplus_{\nu=1}^2\biggl[L^2(\mathbb R^d;9ddx)\oplus\bigoplus_{j=1}^dL^2(\mathbb R^d\times I;3dxdy)\oplus L^2(\mathbb R^d\times I;3dxdy)\biggr].
\end{equation*}
An element of this space is denoted by $U=\bigl(u_1,q_{1,1},\dots,q_{1,d},c_1;u_2,q_{2,1},\dots,q_{2,d},c_2\bigr)$. Furthermore, let $\mathcal S:\mathscr H_{\mathrm{bi}}\longrightarrow\mathscr H_{\mathrm{par}}^{\mathrm{bi}}$ be the unitary parity transform defined by
\[
 \mathcal S\begin{pmatrix}
        X_1\\X_2
    \end{pmatrix}
    =
    \frac1{\sqrt2}
    \begin{pmatrix}
        X_1+X_2\\
        X_1-X_2
    \end{pmatrix}
\]
simultaneously on the macroscopic amplitudes, the internal soft-edge fields and the two halves of the connecting edge. The physical bilayer operator is
\begin{equation}
    \mathcal A_\Psi^{\mathrm{bi}}
    :=
    \mathcal S^*
    \bigl(
        \mathcal A_\Psi^{\mathrm N}
        \oplus
        \mathcal A_\Psi^{\mathrm D}
    \bigr)
    \mathcal S.
\label{eq:physical-bilayer-operator}
\end{equation}
Its form domain consists of all $U=\bigl(u_1,\{q_{1,j}\}_{j=1}^d,c_1; u_2,\{q_{2,j}\}_{j=1}^d,c_2\bigr)$ such that $u_\nu\in H^1(\mathbb R^d)$, $q_{\nu,j},c_\nu\in L^2\bigl(\mathbb R^d;H^1(I)\bigr)$, and
\begin{equation}
q_{\nu,j}(\cdot,0)=q_{\nu,j}(\cdot,1)=u_\nu,\qquad
c_\nu(\cdot,0)=u_\nu,\qquad 
c_1(\cdot,1)=c_2(\cdot,1).
\label{eq:physical-midpoint-continuity}
\end{equation}
The effective quadratic form is
\begin{equation}
\mathfrak a_\Psi^{\mathrm{bi}}[U]:=\sum_{\nu=1}^2\biggl\{\int_{\mathbb R^d}|\nabla u_\nu|^2dx+\frac{3}{\ell}\int_{\mathbb R^d}\alpha_\Psi(x)|u_\nu|^2dx+3\sum_{j=1}^d\int_{\mathbb R^d}\int_0^1|\partial_yq_{\nu,j}|^2dydx+3\int_{\mathbb R^d}\int_0^1|\partial_yc_\nu|^2dydx\biggr\}.
\label{eq:bilayer-form}
\end{equation}
The flux-balance condition at the midpoint of the connecting edge, 
\begin{equation}
\partial_yc_1(x,1)+\partial_yc_2(x,1)=0,
\label{eq:midpoint-flux-balance}
\end{equation}
is the natural boundary condition generated by \eqref{eq:bilayer-form}.  The third equation in \eqref{eq:physical-midpoint-continuity} and equation \eqref{eq:midpoint-flux-balance} are precisely the continuity and Kirchhoff conditions at the midpoint of the full connecting edge.

\begin{proposition}
\label{prop:bilayer-effective-operator}
The form $\mathfrak a_\Psi^{\mathrm{bi}}$ is densely defined, closed and bounded below in $\mathscr H_{\mathrm{bi}}$.  The self-adjoint operator generated by this form is the operator $\mathcal A_\Psi^{\mathrm{bi}}$ defined by \eqref{eq:physical-bilayer-operator}.
\end{proposition}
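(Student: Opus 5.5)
The plan is to establish, in turn, that the form is densely defined, closed and bounded below, and then that the operator it generates coincides with $\mathcal S^*(\mathcal A_\Psi^{\mathrm N}\oplus\mathcal A_\Psi^{\mathrm D})\mathcal S$. For the first part I will argue exactly as in Proposition~\ref{prop:augmented-operator}. The trace maps $L^2(\mathbb R^d;H^1(I))\to L^2(\mathbb R^d)$ at $y=0$ and $y=1$ are bounded. Hence every constraint in \eqref{eq:physical-midpoint-continuity}, including the new midpoint condition $c_1(\cdot,1)=c_2(\cdot,1)$, defines a closed subspace of the product Sobolev space. The derivative part of $\mathfrak a_\Psi^{\mathrm{bi}}$ plus the $\mathscr H_{\mathrm{bi}}$-norm is a complete norm on that subspace. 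The term with $\alpha_\Psi$ is bounded, since $\Psi\in C_0^\infty$, so it is a bounded form perturbation. Density will follow because the form domain contains all tuples with $u_\nu\in C_0^\infty$, fields $q_{\nu,j},c_\nu$ equal to $u_\nu$ times a fixed profile plus $C_0^\infty(\mathbb R^d\times I)$ corrections, and $c_1,c_2$ chosen with matching values at $y=1$. Kato's first representation theorem then yields a self-adjoint operator.

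For the identification I will use the fact that a unitary map transports closed forms to closed forms and the generated operators accordingly. Concretely, I will show that $\mathcal S$ maps the form domain of $\mathfrak a_\Psi^{\mathrm{bi}}$ bijectively onto $\mathscr V_{\mathrm N}\oplus\mathscr V_{\mathrm D}$, and that $\mathfrak a_\Psi^{\mathrm{bi}}[U]=\mathfrak a_\Psi^{\mathrm N}[U_{\mathrm e}]+\mathfrak a_\Psi^{\mathrm D}[U_{\mathrm o}]$ with $\mathcal S U=(U_{\mathrm e},U_{\mathrm o})$. Unitarity of $\mathcal S$ is immediate, because the weights $9d$ and $3$ are the same in both layers and $(X_1,X_2)\mapsto(X_1\pm X_2)/\sqrt2$ is orthogonal. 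The form identity follows from the parallelogram law $|a|^2+|b|^2=|(a+b)/\sqrt2|^2+|(a-b)/\sqrt2|^2$ applied pointwise to $\nabla u_\nu$, $\partial_y q_{\nu,j}$ and $\partial_y c_\nu$. It applies equally to the $\alpha_\Psi|u_\nu|^2$ terms, and this works precisely because $\alpha_\Psi$ is identical in both layers.

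For the domains, the linear constraints $q_{\nu,j}(\cdot,0)=q_{\nu,j}(\cdot,1)=u_\nu$ and $c_\nu(\cdot,0)=u_\nu$ transform into the corresponding constraints for the even and odd variables. The condition $c_1(\cdot,1)=c_2(\cdot,1)$ is equivalent to $r_{\mathrm o}(\cdot,1)=0$, which is the Dirichlet constraint in $\mathscr V_{\mathrm D}$. No constraint arises on $r_{\mathrm e}(\cdot,1)$, which matches $\mathscr V_{\mathrm N}$. Since $\mathcal S$ is invertible and preserves the regularity classes, the correspondence is bijective.

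The generated operators therefore satisfy $\mathcal A=\mathcal S^*(\mathcal A_\Psi^{\mathrm N}\oplus\mathcal A_\Psi^{\mathrm D})\mathcal S$, because the operator associated with $\mathfrak b[\mathcal S\cdot,\mathcal S\cdot]$ is $\mathcal S^*B\mathcal S$. The main point requiring care is the verification that the physical midpoint conditions split exactly into Dirichlet and Neumann terminal conditions. In the form domain, continuity becomes the odd Dirichlet condition. The flux-balance condition \eqref{eq:midpoint-flux-balance}, $\partial_y(c_1+c_2)(\cdot,1)=0$, i.e. $\partial_y r_{\mathrm e}(\cdot,1)=0$, emerges only as a natural condition. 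I will confirm it by integration by parts against test tuples with $c_1(\cdot,1)=c_2(\cdot,1)$ arbitrary.
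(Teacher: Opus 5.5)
Your proposal is correct and follows the paper's proof. The paper likewise shows that $\mathcal S$ carries the form domain onto $\mathscr V_{\mathrm N}\oplus\mathscr V_{\mathrm D}$, with midpoint continuity becoming $r_{\mathrm o}(\cdot,1)=0$ and flux balance appearing only as the natural condition $\partial_y r_{\mathrm e}(\cdot,1)=0$. It also shows that $\mathfrak a_\Psi^{\mathrm{bi}}[U]=\mathfrak a_\Psi^{\mathrm N}[U_{\mathrm e}]+\mathfrak a_\Psi^{\mathrm D}[U_{\mathrm o}]$. The only difference is that the paper gets density, closedness and semiboundedness by transporting Proposition~\ref{prop:augmented-operator} through the unitary $\mathcal S$, so your separate direct check of these properties is valid but redundant.
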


\begin{proof}
The parity transform $\mathcal S$ maps the form domain of $\mathfrak a_\Psi^{\mathrm{bi}}$ unitarily onto $\mathscr V_{\mathrm N}\oplus\mathscr V_{\mathrm D}$. Indeed, the midpoint conditions give $r_{\mathrm o}(\cdot,1)=\bigl(c_1(\cdot,1)-c_2(\cdot,1)\bigr)/\sqrt2=0$,
while the natural flux condition gives $\partial_y r_{\mathrm e}(\cdot,1)=\bigl(\partial_yc_1(\cdot,1)+\partial_yc_2(\cdot,1)\bigr)/\sqrt2=0$. Moreover, $\mathfrak a_\Psi^{\mathrm{bi}}[U]=\mathfrak a_\Psi^{\mathrm N}[U_{\mathrm e}]+\mathfrak a_\Psi^{\mathrm D}[U_{\mathrm o}]$. The assertion therefore follows from the corresponding result for the two monolayer forms.
\end{proof}

\subsubsection{Strong form in the two physical layers and elimination of the soft variables}\label{sec:fulllimitsystem}

The spectral problem $\mathcal A_\Psi^{\mathrm{bi}}U=zU$
is equivalent to
\[
-\Delta u_\nu+\frac{3}{\ell}\alpha_\Psi(x)u_\nu+3\sum_{j=1}^d\bigl[\partial_yq_{\nu,j}(x,1)-\partial_yq_{\nu,j}(x,0)\bigr]-3\partial_yc_\nu(x,0)=9dzu_\nu,
    \qquad \nu=1,2,
\]
where
\begin{equation*}
\begin{aligned}
&-\partial_y^2q_{\nu,j}=zq_{\nu,j},\qquad q_{\nu,j}(x,0)=q_{\nu,j}(x,1)=u_\nu(x),\\[0.3em]
&-\partial_y^2c_\nu=zc_\nu,\qquad\quad c_\nu(x,0)=u_\nu(x).
\end{aligned}
\end{equation*}
At $y=1$, one imposes the conditions $c_1(x,1)=c_2(x,1)$,
$\partial_yc_1(x,1)+\partial_yc_2(x,1)=0$.
Thus, the two macroscopic layer amplitudes are coupled only through the explicit field on the connecting soft edge.

Let $z=k^2$ and assume temporarily that $k$ is separated from the relevant one-dimensional resonances. The internal soft branches in either layer satisfy
\[
\partial_yq_{\nu,j}(x,1)-\partial_yq_{\nu,j}(x,0)=-2k(\csc k-\cot k)u_\nu(x).
\]
It is simpler to eliminate the connecting-edge field in parity variables. One has
\[
\partial_y r_{\mathrm e}(x,0)=k\tan k\,u_{\mathrm e}(x)=m_{\mathrm N}(k)\ell^{-1}u_{\mathrm e}(x),\qquad
\partial_y r_{\mathrm o}(x,0)=-k\cot k\,u_{\mathrm o}(x)=m_{\mathrm D}(k)\ell^{-1}u_{\mathrm o}(x),
\]
where $m_{\mathrm N}(k)=\ell k\tan k$, $m_{\mathrm D}(k)=-\ell k\cot k$.
Consequently, the even and odd amplitudes satisfy
\[
\bigl(-\Delta-V_{\mathrm N}(k)\bigr)u_{\mathrm e}=-3\mu\ell^{-1}\Psi u_{\mathrm e},\qquad
\bigl(-\Delta-V_{\mathrm D}(k)\bigr)u_{\mathrm o}=-3\mu\ell^{-1}\Psi u_{\mathrm o},
\]
where
\[
\begin{aligned}
V_{\mathrm B}(k)=d\left[9k^2+6k(\csc k-\cot k)\right]-3\ell^{-1}\bigl(\alpha-m_{\mathrm B}(k)\bigr),\qquad \mathrm B\in\{\mathrm N,\mathrm D\}.
\end{aligned}
\]

Returning to $u_1,u_2$, define $V_{\mathrm{av}}(k):=\bigl(V_{\mathrm N}(k)+V_{\mathrm D}(k)\bigr)/2$,
$V_{\mathrm{c}}(k):=\bigl(V_{\mathrm N}(k)-V_{\mathrm D}(k)\bigr)/2$.
The compressed bilayer equation is
\begin{equation}
\left[-\Delta-
\begin{pmatrix}
V_{\mathrm{av}}(k) & V_{\mathrm{c}}(k)\\
V_{\mathrm{c}}(k) & V_{\mathrm{av}}(k)
\end{pmatrix}
\right]
\begin{pmatrix}u_1\\u_2\end{pmatrix}
=-\frac{3\mu}{\ell}\Psi
\begin{pmatrix}u_1\\u_2\end{pmatrix}.
\label{eq:compressed-bilayer}
\end{equation}
The diagonal term is
\[
V_{\mathrm{av}}(k)=d\left[9k^2+6k(\csc k-\cot k)\right]-\frac{3\alpha}{\ell}+\frac{3}{2\ell}\bigl(m_{\mathrm N}(k)+m_{\mathrm D}(k)\bigr),
\]
whereas the interlayer coupling is
\[
V_{\mathrm c}(k)=\frac{3}{2\ell}\bigl(m_{\mathrm N}(k)-m_{\mathrm D}(k)\bigr)=\frac{3k}{2}\bigl(\tan k+\cot k\bigr)=\frac{3k}{\sin(2k)}.
\]
Equation \eqref{eq:compressed-bilayer} is the Schur complement of the augmented bilayer operator.

\end{document}